\documentclass[11pt]{article}
\usepackage{amsmath,amssymb}
\usepackage{amsthm}
\usepackage{bbm}
\usepackage{bm}
\usepackage{centernot}
\usepackage{color}
\usepackage{enumitem}
\usepackage{framed}
\usepackage{graphicx}
\usepackage{interval}
\usepackage{mathtools}
\usepackage{multibib}
\usepackage{multirow}
\usepackage[round,longnamesfirst]{natbib}
\usepackage{rotating}
\usepackage{setspace}
\usepackage{subcaption}
\usepackage{threeparttable}
\usepackage{tikz}
\usepackage{titlesec}
\usepackage{titling}
\usepackage{authblk}
\usepackage{url}


\mathtoolsset{showonlyrefs}
\newtheorem{thm}{Theorem}
\newtheorem{prop}{Proposition}
\newtheorem{cor}{Corollary}

\newtheorem{asm}{Assumption}
\newtheorem{con}{Condition}
\newtheorem{rem}{Remark}
\theoremstyle{definition}
\newtheorem{definition}{Definition}
\newtheorem{exm}{Example}
\newtheorem{algo}{Algorithm}

\renewenvironment{proof}[1][\proofname]{{\bfseries #1. }}{\qed}
\newtheorem{lemapp}{Lemma}

\definecolor{gold(metallic)}{rgb}{0.83, 0.69, 0.22}
\setlength{\topmargin}{-5mm}
\setlength{\oddsidemargin}{2mm}
\setlength{\textheight}{203mm}
\setlength{\textwidth}{155mm}
\setlength{\parskip}{1mm}
\newcites{app}{References to Appendices}
\DeclareRobustCommand\full  {\tikz[baseline=-0.6ex]\draw[thick] (0,0)--(0.5,0);}
\DeclareRobustCommand\dotted{\tikz[baseline=-0.6ex]\draw[thick,dotted] (0,0)--(0.54,0);}
\DeclareRobustCommand\dashed{\tikz[baseline=-0.6ex]\draw[thick,dashed] (0,0)--(0.54,0);}
\DeclareRobustCommand\longdash{\tikz[baseline=-0.6ex]\draw[thick,dash pattern=on 8 off 4] (0,0)--(0.54,0);}


\title{Estimating Time-Varying Parameters of Various Smoothness in Linear Models via Kernel Regression}

\author[]{Mikihito Nishi\footnote{I am deeply grateful to Atsushi Inoue for his constructive comments on the earlier version of this article. I also thank the participants in workshops at Hitotsubashi University and the University of Tokyo. All errors are mine. A part of this research was conducted while the author was at Hitotsubashi University. The author gratefully acknowledges support by JSPS KAKENHI Grant number 25KJ0041. Address correspondence to: Graduate School of Economics, University of Tokyo, 7-3-1 Hongo, Bunkyou-ku, Tokyo, 113-0033, Japan; e-mail: mnishi@g.ecc.u-tokyo.ac.jp}}

\affil[]{Graduate School of Economics, University of Tokyo}

\date{\today}

\begin{document}
\onehalfspacing

    \begin{titlingpage}

        \maketitle

        \begin{abstract}
		We study kernel-based estimation of nonparametric time-varying parameters (TVPs) in linear models. Our contributions are threefold. First, we establish consistency and asymptotic normality of the kernel-based estimator for a broad class of TVPs including deterministic smooth functions, the rescaled random walk, structural breaks, the threshold model and their mixtures. Our analysis exploits the smoothness of the TVP. Second, we show that the bandwidth rate must be determined according to the smoothness of the TVP. For example, the conventional $T^{-1/5}$ rate is valid only for sufficiently smooth TVPs, and the bandwidth should be proportional to $T^{-1/2}$ for random-walk TVPs, where $T$ is the sample size. We show this highlighting the overlooked fact that the bandwidth determines a trade-off between the convergence rate and the size of the class of TVPs that can be estimated. Third, we propose a data-driven procedure for bandwidth selection that is adaptive to the latent smoothness of the TVP. Simulations and an application to the capital asset pricing model suggest that the proposed method offers a unified approach to estimating a wide class of TVP models.
        \end{abstract}

\medskip

\noindent

\emph{Keywords}: Bandwidth, kernel estimation, random walk, structural break, time-varying parameter

\medskip

\noindent

\emph{JEL Codes}: C14, C22

\end{titlingpage}

\section{Introduction}

Parameter instabilities are widely observed in econometric analysis. One of the most common specifications for parameter changes is the following linear model:
\begin{align}
	y_t = x_{t}'\beta_{T,t} + \varepsilon_t, \ t = 1,2,\ldots,T, \label{model:tvp_linear}
\end{align}
where $T$ is the sample size, $p\times 1$ vector $x_t$ is the regressor, $p\times 1$ triangular array $\beta_{T,t}$ is the time-varying coefficient, and $\varepsilon_t$ is the disturbance.

In the literature, time-varying parameters are often estimated via kernel regression where observations are weighted by some kernel function. Starting from \citet{robinsonNonparametricEstimationTimeVarying1989}, a large literature develops kernel-based estimation and inference for time-varying coefficient models; e.g., \citet{caiTrendingTimevaryingCoefficient2007}, \citet{chenTestingSmoothStructural2012}, \citet{zhangInferenceTimeVaryingRegression2012}, \citet{inoueRollingWindowSelection2017}, and \citet{friedrichSieveBootstrapInference2024}. We follow this strand of literature and study kernel-based estimation of $\beta_{T,t}$.

Our contributions are threefold. First, we consider a broader class of time-varying parameters than is typically assumed. The most common assumption adopted in existing works is that $\beta_{T,t}$ is so smooth that it is continuously differentiable \citep[e.g.,][]{caiTrendingTimevaryingCoefficient2007,zhangInferenceTimeVaryingRegression2012,inoueRollingWindowSelection2017}. However, smooth functions are not the only model for parameter instability popular in economics and statistics. The random walk model, in which $\beta_{T,t}$ is modeled as $\beta_{T,t} = \sum_{i=1}^{t}u_i$ with $u_i$ a transitory process, is a popular alternative \citep[e.g.,][]{nyblomTestingConstancyParameters1989,stockMedianUnbiasedEstimation1998,cogleyDriftsVolatilitiesMonetary2005}. Another example is (abrupt) structural breaks in $\beta_{T,t}$ \citep{andrewsTestsParameterInstability1993,baiEstimatingTestingLinear1998}. These two modeling schemes have received less attention in the literature on kernel-based estimation, and it is
largely unknown what the consequence is if one applies kernel regression to these models.\footnote{\citet{giraitisInferenceStochasticTimevarying2014} and \citet{giraitisTimevaryingInstrumentalVariable2021} are among few exceptions. They show that random-walk type parameters can be estimated via a kernel-based method.}$^{,}$\footnote{\citet{pesaranSelectionEstimationWindow2007}, \citet{pesaranOptimalForecastsPresence2013} and \citet{hiranoAnalyzingCrossvalidationForecasting2022} apply kernel-based approaches
for random walk and structural break type parameter instabilities, but their focus is on optimal forecasting of $y_t$, rather than estimation of $\beta_{T,t}$.} We develop kernel-based estimation theory that covers a wide class of time-varying parameters, including smooth functions, the rescaled random walk, and structural breaks. This class also includes the threshold regression model of \citet{hansenSampleSplittingThreshold2000}, which has rarely been considered in the context of kernel regression.

Let us emphasize that the class of time-varying parameters considered in this article also includes the mixtures of the aforementioned models. The relationship between $y_t$ and $x_t$, for example, may evolve smoothly but exhibit discontinuities during global financial crises or pandemics. The literature has acknowledged the importance of taking into account several types of parameter instability. For instance, \citet{mullerEfficientEstimationParameter2010} consider inference in models with time-varying parameters approximated by Gaussian processes and continuous functions possibly with finitely many jumps. Although their framework allows for nonlinear models and thus is more general than ours in this respect, they focus on small parameter instabilities (relative to those considered in this work). Therefore, large instabilities are not allowed in their model. \citet{chenTestingSmoothStructural2012} develop tests for smooth parameter changes with finitely many breaks but do not develop estimation theory for time-varying parameters of this type. \citet{kristensenNonparametricDetectionEstimation2012} proposes a nonparametric estimation method for time-varying coefficients by developing a framework that he argues allows for smooth functions, structural breaks, and the rescaled random walk. However, his analysis is restricted to smooth functional parameters only and cannot be extended to the other specifications. \citet{giraitisTimevaryingInstrumentalVariable2021} allow smooth deterministic functions, the rescaled random walk, and their mixture in an IV setting, but exclude (large) discontinuous breaks. Unlike these earlier works, we employ a general framework that accommodates all the aforementioned models and their mixtures.

We develop this framework by considering the class of time-varying parameters characterized by smoothness parameter $\alpha>0$, which generalizes the H\"{o}lder-class functions studied in the literature on nonparametric estimation \citep[e.g.,][]{tsybakovIntroductionNonparametricEstimation2009}. For example, continuously differentiable functions have smoothness of $\alpha=1$ as in the usual H\"{o}lder condition, and the random walk divided by $\sqrt{T}$ has $\alpha=1/2$. As with the H\"{o}lder condition, a smaller $\alpha$ means more roughness of the path of the time-varying parameter. This implies that the random walk divided by $\sqrt{T}$ is less smooth than continuously differentiable functions.

Our second contribution is to discuss the role of the bandwidth and its implications on bandwidth selection in the above general setting; beyond the usual bias-variance trade-off inherent in nonparametric estimation, we demonstrate that the bandwidth determines a trade-off between the rate of convergence and the size of the class of time-varying parameters that can be estimated. While this fact is consistent with and may be inferred from earlier results on nonparametric estimation of H\"{o}lder-class functions, its implications on bandwidth selection seem underappreciated in the literature on time-varying parameter models.

Specifically, we show that the rate of the bandwidth should be determined according to the smoothness of $\beta_{T,t}$. We illustrate this argument through two examples. First, we show that the conventional $T^{-1/5}$-rate bandwidth, which is specialized to continuously differentiable functions and often used in the literature \citep[e.g.,][]{zhouSimultaneousInferenceLinear2010}, is invalid if $\beta_{T,t}$ is less smooth. For example, we show that the bandwidth should be proportional to $T^{-1/2}$ if $\beta_{T,t}$ is the random walk divided by $\sqrt{T}$. Second, we demonstrate that, if the time-varying parameter experiences both smooth and abrupt parameter changes, the abrupt breaks of certain magnitudes are absorbed in smooth parameter changes so that the kernel-based estimation delivers valid inference, while discontinuous changes of a larger magnitude cause bias. We show that the bandwidth determines the break magnitudes at which this bias arises.

Our third contribution is to propose a data-driven bandwidth selection procedure. Unlike existing approaches that focus on smooth time-varying parameters and $T^{-1/5}$-rate bandwidths, the proposed method allows researchers to select the bandwidth from a wide range of candidate values, adaptively to the latent smoothness of the time-varying parameter. We evaluate its finite-sample performance via Monte Carlo simulations and illustrate the method using the capital asset pricing model (CAPM). In this application, our selection algorithm does not support the conventional $T^{-1/5}$-rate bandwidth, casting doubt on the extent to which this routinely selected bandwidth and the commonly used assumption of (continuously) differentiable parameters are justified. Furthermore, the proposed procedure partially supports a $T^{-1/2}$-rate bandwidth, producing an estimated trajectory close to that obtained from a Bayesian random-walk estimation.

The remainder of this paper is organized as follows. Section \ref{sec:smoothness_def} defines smoothness of time-varying parameters. Section \ref{sec:asym} establishes asymptotic properties of the kernel-based estimator. Section \ref{sec:bandwidth} discusses the consequence of an improper bandwidth choice and develops a bandwidth selection method adaptive to the smoothness of the time-varying parameter. Section \ref{sec:monte} conducts Monte Carlo experiments, and Section \ref{sec:empirics} gives a real data analysis. Section \ref{sec:conclusion} concludes. Mathematical proofs of the main results are relegated to Appendix A.

\textbf{Notation}: For any matrix $A$, $\|A\|=\mathrm{tr}(A'A)^{1/2}$ denotes the Frobenius norm of $A$. For any positive number $b$, $\lfloor b \rfloor$ denotes the integer part of $b$. $\stackrel{p}{\to}$ and $\stackrel{d}{\to}$ signify convergence in probability and convergence in distribution as $T\to\infty$, respectively. $\Rightarrow$ signifies weak convergence of the associated probability measures.

\section{Smoothness of Time-Varying Parameters}\label{sec:smoothness_def}

We consider estimating $\beta_{T,t}$ by using the local constant (Nadaraya-Watson) estimator:
\begin{align}
	\hat{\beta}_t \coloneqq \Biggl(\sum_{i=1}^{T}K\Bigl(\frac{t-i}{Th}\Bigr)x_ix_i'\Biggr)^{-1}\sum_{i=1}^{T}K\Bigl(\frac{t-i}{Th}\Bigr) x_iy_i,
\end{align}
where $K(\cdot)$ is a kernel function and $h$ is the bandwidth parameter satisfying $h\to0$ and $Th\to \infty$ as $T\to \infty$. Assumptions on the data generating process and kernel $K$ will be detailed in Section \ref{sec:asym}

In discussing the asymptotic properties of $\hat{\beta}_t$, the smoothness of the path of $\beta_{T,t}$ has a decisive effect. In the following definition, we quantify the smoothness of $\beta_{T,t}$ by a single parameter $\alpha$.

\begin{definition}
    Triangular array $\beta_{T,t}$ such that $\beta_{T,t} = O_p(1)$ as $T \to \infty$ for all $t$  is said to belong to the class \textit{type-a} $\mathrm{TVP}(\alpha)$ or \textit{type-b} $\mathrm{TVP}(\alpha)$, if the following condition (a) or (b) holds, respectively:
\begin{itemize}
		\item[(a)] There exists some real $\alpha>0$ such that for any sequence $\{a_T\}$ of positive integers satisfying $a_T = o(T)$ and $a_T \to \infty$ as $T \to \infty$, and for any $t$,
		\begin{align}
			\max_{j:|t-j|\leq a_T} \|\beta_{T,t}-\beta_{T,j}\| = O_p\Bigl(\Bigl(\frac{a_T}{T}\Bigr)^{\alpha}\Bigr), \ \text{as} \ T \to \infty.
		\end{align}
		\item[(b)] There exists some real $\alpha>0$ such that for any sequence $\{a_T\}$ of positive integers satisfying $a_T = o(T)$ and $a_T \to \infty$ as $T \to \infty$, and for any $t$,
		\begin{align}
			\max_{j:|t-j|\leq a_T} \|\beta_{T,t}-\beta_{T,j}\| = O_p\Bigl(\frac{1}{T^{\alpha}}\Bigr), \ \text{as} \ T \to \infty.
		\end{align}
	\end{itemize}	

    Furthermore, if $\beta_{T,t}$ belongs to $\mathrm{TVP}(\alpha)$ (type-a or type-b) but does not belong to $\mathrm{TVP}(\beta)$ for all $\beta>\alpha$, then it is said to belong to $\mathrm{TVP}(\alpha)$ \textit{on the boundary}.
\label{def:tvp_alpha_class}
\end{definition}

Definition \ref{def:tvp_alpha_class} essentially controls by $\alpha$ the smoothness of the path of $\beta_{T,t}$ on any interval of any length of a smaller order than $T$. In typical applications, $a_T$ will be set $a_T = \lfloor Th \rfloor$. Definition \ref{def:tvp_alpha_class}(a) allows the difference between the values of $\beta_{T,t}$ at distinct time points to grow as the time points gets further apart, while Definition \ref{def:tvp_alpha_class}(b) does not.\footnote{Therefore, $\beta_{T,t}$ belongs to type-a $\mathrm{TVP}(\alpha)$ if it belongs to type-b $\mathrm{TVP}(\alpha)$.}

Determining $\alpha$ such that a given $\beta_{T,t}$ belongs to $\mathrm{TVP}(\alpha)$ on the boundary enables us to derive the largest possible bandwidth under which $\hat{\beta}_t - \beta_{T,t}$ is asymptotically normally distributed; see Theorem \ref{thm:bias_bandwidth} below for this point.

Because $a_T/T<1$ and $\alpha>0$, a smaller $\alpha$ permits larger differences $\|\beta_{T,t}-\beta_{T,j}\|$, resulting in $\beta_{T,t}$ possibly having a rougher path. Note that triangular arrays unbounded in probability are excluded from Definition \ref{def:tvp_alpha_class}. We emphasize that Definition \ref{def:tvp_alpha_class} does not impose any parametric assumption on $\beta_{T,t}$ (other than smoothness $\alpha$), and that $\beta_{T,t}$ may be deterministic or stochastic. In addition, $\beta_{T,t}$ is allowed to have arbitrary correlation with $x_t$ and $\varepsilon_{t}$. Definition \ref{def:tvp_alpha_class} is quite general and accommodates many important time-varying parameters, as shown below.

\begin{rem}
    \citet{giraitisTimevaryingInstrumentalVariable2021} develop a kernel-based instrumental variable method to estimate time-varying parameters. The classes of time-varying parameters they consider are essentially type-a $\mathrm{TVP}(1)$ and $\mathrm{TVP}(1/2)$, albeit with slightly different definitions. They do not consider time-varying parameters belonging to type-a $\mathrm{TVP}(\alpha)$ with $\alpha \neq 1/2,1$ or type-b $\mathrm{TVP}(\alpha)$.
\end{rem}

\begin{exm}[Continuously differentiable functions]
	A popular model for time-varying parameters is deterministic smooth functions, accompanied by the formulation $\beta_{T,t} = \beta(t/T)$ for some continuously differentiable function $\beta(\cdot)$ on $[0,1]$ \citep[e.g.,][]{caiTrendingTimevaryingCoefficient2007,zhangInferenceTimeVaryingRegression2012,chenTestingSmoothStructural2012}. Under this formulation, the fact that $\sup_{0\leq r\leq 1}\|\beta'(r)\| \leq C$ for some constant $C>0$ implies that for any $s,t=1,2,\ldots,T$, $\|\beta_{T,t}-\beta_{T,s}\| = \|\beta(t/T) - \beta(s/T)\| \leq C |t-s|/T$ by the mean value theorem. Therefore, we have $\max_{j:|t-j|\leq a_T}\|\beta_{T,t} - \beta_{T,j}\| \leq Ca_T/T = O(a_T/T)$ uniformly in $t$, which implies $\beta_{T,t}$ belongs to the type-a $\mathrm{TVP}(1)$ class. Furthermore, if there exist some interval $(a,b)$ and constant $c>0$ such that $\inf_{x\in(a,b)}\|\beta'(x)\|\geq c$,\footnote{This condition excludes constant functions.} then we have $\max_{j:|t-j|\leq a_T}\|\beta_{T,t}-\beta_{T,j}\| \geq ca_T/T$ for $t\in (a,b)$ and sufficiently large $T$ by the mean value theorem, and hence $\max_{j:|t-j|\leq a_T}\|\beta_{T,t}-\beta_{T,j}\|$ is not $O((a_T/T)^{\alpha})$ for $\alpha>1$ for such $t$ and $T$. This implies $\beta_{T,t}$ belongs to type-a $\mathrm{TVP}(1)$ on the boundary. More generally, $\beta_{T,t}$ belongs to the type-a $\mathrm{TVP}(\alpha)$ class if it is H\"{o}lder continuous with exponent $\alpha$.
	\label{exm:cnt_diff}
\end{exm}


\begin{exm}[The random walk]
	Researchers often assume that the parameters of interest follow the random walk \citep[e.g.,][]{nyblomTestingConstancyParameters1989}. We consider the random walk scaled by $\sqrt{T}$: $\beta_{T,0} = \mu$ and $\beta_{T,t} = \mu + (1/\sqrt{T})\sum_{i=1}^{t}u_i, \ t\geq1$, where $\mu$ is a constant and $\{u_i\}$ is an i.i.d. sequence with $E[u_i]=0$ and $ V[u_i] = \Sigma_u > 0$. The functional central limit theorem (FCLT) implies that
	\begin{align}
		\beta_{T,\lfloor T\cdot\rfloor} = \mu + \frac{1}{\sqrt{T}}\sum_{i=1}^{\lfloor T\cdot \rfloor}u_i \Rightarrow \mu + \Sigma_u^{1/2}B_1(\cdot),
	\end{align}
    in the Skorokhod space $D^p_{[0,1]}$, where $B_1$ is a $p$-dimensional vector standard Brownian motion. Then, the following result holds: for any $a_T<t<T-a_T+1$, 
	\begin{align}
		\max_{j:|t-j|\leq a_T} \|\beta_{T,t} - \beta_{T,j}\| 
        &=\max\left\{ \max_{t-a_T\leq j\leq t-1} \|\beta_{T,t} - \beta_{T,j}\|, \max_{t+1\leq j \leq t+a_T} \|\beta_{T,j} - \beta_{T,t}\|\right\} \\
		&= \max\left\{\max_{t-a_T\leq j\leq t-1}\left\|\frac{1}{\sqrt{T}}\sum_{i=j+1}^{t} u_i\right\|, \max_{t+1\leq j \leq t+a_T} \left\|\frac{1}{\sqrt{T}}\sum_{i=t+1}^{j} u_i\right\|\right\} \\
		&\stackrel{d}{=}\max\left\{\max_{1\leq j\leq a_T}\left\|\frac{1}{\sqrt{T}}\sum_{i=1}^{j} u_i \right\|, \max_{1\leq j\leq a_T}\left\|\frac{1}{\sqrt{T}}\sum_{i=1}^{j} u_i^* \right\| \right\} \\
		&=\sqrt{\frac{a_T}{T}} \ \max\left\{\sup_{0\leq r \leq 1}\left\|\frac{1}{\sqrt{a_T}}\sum_{i=1}^{\lfloor a_Tr \rfloor} u_i \right\|, \sup_{0\leq r \leq 1}\left\|\frac{1}{\sqrt{a_T}}\sum_{i=1}^{\lfloor a_Tr \rfloor} u_i^* \right\| \right\},
	\end{align}
    where $u_i^*$ is an i.i.d. copy of $u_i$, and the third equality in distribution follows from the i.i.d property of $\{u_t\}$. Because $\sup_{0\leq r \leq 1}\|(a_T)^{-1/2}\sum_{i=1}^{\lfloor a_Tr \rfloor} u_i \| = O_p(1)$ by the continuous mapping theorem (CMT), $\max_{j:|t-j|\leq a_T} \|\beta_{T,t} - \beta_{T,j}\| = O_p(\sqrt{a_T/T})$. Moreover, since $\sup_{0\leq r \leq 1}\|(a_T)^{-1/2}\sum_{i=1}^{\lfloor a_Tr \rfloor} u_i \|\stackrel{d}{\to} \sup_{0\leq r \leq 1}\|B_1(r)\|$, where $\sup_{0\leq r \leq 1}\|B_1(r)\|>0$ a.s., it follows that $\max_{j:|t-j|\leq a_T} \|\beta_{T,t} - \beta_{T,j}\|$ is not $O_p((a_T/T)^{\alpha})$ for any $\alpha>1/2$.\footnote{This can be verified by using the strong approximation.} The same conclusion holds for the other $t$. Hence, the random walk divided by $\sqrt{T}$ belongs to the type-a $\mathrm{TVP}(1/2)$ class on the boundary. More generally, the random walk divided by $T^{\alpha}$ belongs to the type-a $\mathrm{TVP}(\alpha)$ class for $\alpha\geq 1/2$, while the random walk divided by $T^{\alpha}$ with $\alpha<1/2$ is excluded from Definition \ref{def:tvp_alpha_class} because it is unbounded in probability. 
	
    Because the random walk divided by $\sqrt{T}$ does not belong to $\mathrm{TVP}(1)$, it is less smooth than continuously differentiable functions on $[0,1]$. This is intuitively because the random walk divided by $\sqrt{T}$ weakly converges to Brownian motion, which is nowhere differentiable almost surely.
	\label{exm:rw}
\end{exm}

\begin{rem}
    \citet{mullerEfficientEstimationParameter2010} study an inferential problem concerning time-varying parameters approximated by Gaussian processes and piece-wise continuous functions scaled by a factor of $T^{-1/2}$. Leading examples are $T^{-1/2}\beta(t/T)$ with $\beta(\cdot)$ continuous on $[0,1]$ and $T^{-1/2}B_1(t/T)$, which is approximately equivalent (in distribution) to $T^{-1}\sum_{i=1}^{t}u_i=O_p(1/\sqrt{T})$. Therefore, non-vanishing smooth functions and random walks are not considered in their framework.
\end{rem}

\begin{exm}[Structural breaks]
    Structural breaks in parameters have attracted attention \citep[][provide a recent survey on this topic]{casiniStructuralBreaksTime2018}. Suppose time-varying coefficient $\beta_{T,t}$ experiences one abrupt break during the sample period:
	\begin{align}
		\beta_{T,t} = \begin{cases}
			\beta_1 & \text{for} \ t=1,2,\ldots,T_B \\
			\beta_2 & \text{for} \ t=T_B + 1, T_B + 2,\ldots,T \\
		\end{cases},
		\label{model:one_time_sb}
	\end{align}
	where $T_B = \lfloor \tau_B T\rfloor, \ \tau_B \in (0,1)$, and $\| \beta_1 - \beta_2\| = \delta/T^{\alpha}$ for some $\delta>0$ and $\alpha>0$. Under this formulation, the break is of shrinking magnitude, as considered in \citet{baiEstimationChangePoint1997}. $\beta_{T,t}$ belongs to the type-b $\mathrm{TVP}(\alpha)$ class on the boundary. Specifically, we have, for any $t \in\{1,\ldots,T_B-a_T\}\cup \{T_B+a_T+1,\ldots,T\}$, $\max_{j:|t-j|\leq a_T} \|\beta_{T,t}-\beta_{T,j}\| = 0$, and for any $t\in\{T_B-a_T+1,\ldots,T_B+a_T\}$, $\max_{j:|t-j|\leq a_T} \|\beta_{T,t}-\beta_{T,j}\| = \delta T^{-\alpha}$. Note that the asymptotically non-negligible discontinuity given by $\alpha=0$ is excluded from Definition \ref{def:tvp_alpha_class}.
	\label{exm:sb}
\end{exm}

\begin{exm}[Threshold models]
    \citet{hansenSampleSplittingThreshold2000} considers the threshold regression model obtained by letting $\beta_{T,t} = \theta_1 + \delta_T1\{q_t > \eta\}$, where $q_t$ is the threshold variable that determines the regime at time $t$, depending on whether it exceeds threshold parameter $\eta$. $\delta_T$, which \citet{hansenSampleSplittingThreshold2000} refers to as the threshold effect, expresses the magnitude of discontinuous changes in $\beta_{T,t}$. \citet{hansenSampleSplittingThreshold2000} assumes $\delta_T = c/T^{\alpha}$.\footnote{\citet{hansenSampleSplittingThreshold2000} also imposes $0<\alpha<1/2$, but this restriction is not necessary in our framework.} $\beta_{T,t}$ clearly satisfies $\|\beta_{T,t} - \beta_{T,j}\| \leq \|\delta_T\| = O_p(1/T^{\alpha})$, for all $t$ and $j$, which implies $\beta_{T,t}$ belongs to type-b $\mathrm{TVP}(\alpha)$.
    \label{exm:threshold}
\end{exm}

\begin{exm}[Mixed model]
    Suppose that $\beta_{T,t}$ is expressed as $\beta_{T,t}=\beta_{1,T,t} + \beta_{2,T,t}$, where $\beta_{1,T,t}$ is continuously differentiable and $\beta_{2,T,t} = \mu + (1/\sqrt{T})\sum_{i=1}^tu_i$ with $u_i$ defined as in Example 2. Then, it is straightforward to show that $\beta_{T,t}$ belongs to type-a $\mathrm{TVP}(1/2)$. More generally, for any finite positive integer $S$, if $\beta_{T,t}$ is expressed as the sum of $S$ time-varying parameters each of which belongs to the type-a $\mathrm{TVP}(\alpha_s)$ class ($s=1,\ldots,S)$, then $\beta_{T,t}$ belongs to type-a $\mathrm{TVP}(\min\{\alpha_1,\ldots,\alpha_S\})$.
\end{exm}

Mixture models where $\beta_{T,t}$ is the sum of both type-a and type-b time-varying parameters will be considered in Section \ref{sec:bandwidth}.

\section{Asymptotics}\label{sec:asym}

\subsection{Assumptions}

We suppose kernel $K(\cdot)$ satisfies the following condition:
\begin{asm}
	\item[(a)] $K(x)\geq0, \ x\in \mathbb{R}$, is Lipschitz continuous and has compact support $[-1,1]$.
	\item[(b)] $\int_{-1}^{1}K(x)dx = 1$.
	\label{asm:kernel}
\end{asm}
Commonly used kernels such as the uniform density on $[-1,1]$ and the Epanechnikov kernel satisfy Assumption \ref{asm:kernel}. Following the arguments of \citet{giraitisInferenceStochasticTimevarying2014,giraitisTimevaryingInstrumentalVariable2021}, kernels with non-compact support such as the Gaussian kernel are permitted under some stronger condition. We focus on kernels with a compact support as specified in condition (a) to avoid unessential complications. Note that $2Th$ is the effective sample size of the kernel-based estimation, since $K((t-i)/Th) = 0$ for $i$ such that $|t-i|> Th$.

Next, we impose the following assumption on model \eqref{model:tvp_linear}.\footnote{For the definition of near epoch dependence (NED), see, e.g., \citet{davidsonStochasticLimitTheory1994}.}

\begin{asm}
    \item[(a)] $\{(x_t', \varepsilon_t)\}_t$ is $\mathrm{L}_2$-NED of size $-(r-1)/(r-2)$ on an $\alpha$-mixing sequence of size $-r/(r-2)$ for some $r>2$, with respect to some positive constants $d_t$ satisfying $\sup_td_t < \infty$. Moreover, $\sup_t E[\|x_t\|^{2r}] + \sup_t E[|\varepsilon_t|^{2r}] < \infty$.
	
    \item[(b)] $\{x_t\varepsilon_t\}_t$ has mean zero and is serially uncorrelated.

    \item[(c)] For each $t=\lfloor Tr \rfloor, \ r\in(0,1)$, and $h$ such that $h\to0$ and $Th\to\infty$ as $T\to \infty$, there exist nonrandom symmetric matrices $\Omega(r)>0$ and $\Sigma(r)>0$ such that $(1/Th)\sum_{i=1}^{T}K((t-i)/Th)E[x_ix_i'] \to \Omega(r)$ and $\mathrm{Var}\Bigl((1/\sqrt{Th})\sum_{i=1}^{T}K((t-i)/Th)x_i\varepsilon_i\Bigr) \to \Sigma(r)$.
    \label{asm:dgp}
\end{asm}
Assumption \ref{asm:dgp}(a) allows the regressor and disturbance to be weakly serially dependent. The NED assumption is more general than mixing conditions commonly assumed in the literature \citep{caiTrendingTimevaryingCoefficient2007,chenTestingSmoothStructural2012,giraitisTimevaryingInstrumentalVariable2021,friedrichSieveBootstrapInference2024}. Also note that we do not impose strict or covariance stationarity unlike earlier works \citep{caiTrendingTimevaryingCoefficient2007,chenTestingSmoothStructural2012,friedrichSieveBootstrapInference2024}, and thus our framework allows for heteroskedasticity in $\varepsilon_t$. Assumption \ref{asm:dgp}(b) requires that the product of regressors and disturbance be serially uncorrelated, which is satisfied when, for example, $\varepsilon_t$ is a martingale difference sequence (m.d.s.) with respect to $\mathcal{F}_{T,t} \coloneqq \sigma(\{x_{t+1},x_t,\varepsilon_t,x_{t-1},\varepsilon_{t-1},\ldots\})$. The assumption of no serial correlation or m.d.s. is common in the literature \citep{chenTestingSmoothStructural2012,kristensenNonparametricDetectionEstimation2012,giraitisTimevaryingInstrumentalVariable2021}. Assumption \ref{asm:dgp}(c) holds under Assumptions \ref{asm:dgp}(a)-(b) if $x_t$ and $x_t\varepsilon_t$ are covariance-stationary (see Corollary \ref{cor:bias_bandwidth}).

\subsection{Asymptotic properties of $\hat{\beta}_t$}

In the following theorem, we establish the consistency and asymptotic normality of the kernel-based estimator, $\hat{\beta}_t$.

\begin{thm}
	Suppose Assumptions \ref{asm:kernel} and \ref{asm:dgp} hold. Then, for $t=\lfloor Tr \rfloor, \ r\in (0,1)$, we have
	\begin{align}
		\sqrt{Th}(\hat{\beta}_t - \beta_{T,t} - R_{T,t}) \stackrel{d}{\to} N(0,\Omega(r)^{-1}\Sigma(r)\Omega(r)^{-1}),
    \label{dist:general_thm}
	\end{align}
	where
	\begin{align}
		R_{T,t} = \begin{cases}
			O_p(h^{\alpha}) & \text{if $\beta_{T,t}$ satisfies Definition \ref{def:tvp_alpha_class}(a)} \\
			O_p(T^{-\alpha}) & \text{if $\beta_{T,t}$ satisfies Definition \ref{def:tvp_alpha_class}(b)}
		\end{cases}.
    \label{eqn:bias_order_thm}
	\end{align}
	In particular, for $h =cT^{\gamma}, \ c>0, \ \gamma\in(-1,0)$, we have
        \mathtoolsset{showonlyrefs=false}
	\begin{align}
		\sqrt{cT^{1+\gamma}}(\hat{\beta}_t - \beta_{T,t}) \stackrel{d}{\to} N(0,\Omega(r)^{-1}\Sigma(r)\Omega(r)^{-1}),
    \label{dist:undersmooth_thm}
	\end{align}
        \mathtoolsset{showonlyrefs=true}
	for $\gamma \in \Gamma(\alpha)$, where
	\begin{align}
		\Gamma(\alpha) = \begin{cases}
			(-1,-\frac{1}{2\alpha+1}) & \text{if $\beta_{T,t}$ satisfies Definition \ref{def:tvp_alpha_class}(a)} \\
			(-1,2\alpha -1)\cap (-1,0) & \text{if $\beta_{T,t}$ satisfies Definition \ref{def:tvp_alpha_class}(b)}
		\end{cases}.
        \label{def:gamma_alpha}
	\end{align}

    If $\beta_{T,t}$ belongs to $\mathrm{TVP}(\alpha)$ on the boundary, $\gamma$ close to the right endpoint of $\Gamma(\alpha)$ gives asymptotic normality and the fastest possible convergence rate.
\label{thm:bias_bandwidth}
\end{thm}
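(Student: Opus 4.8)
The plan is to split the estimation error into a ``bias'' term governed by the variation of $\beta_{T,t}$ across the effective kernel window and a ``noise'' term driven by $x_i\varepsilon_i$, to bound the former via Definition \ref{def:tvp_alpha_class} and prove a central limit theorem for the latter, and then to combine them by Slutsky's theorem. Write $K_{ti}\coloneqq K((t-i)/(Th))$ and $D_T\coloneqq (Th)^{-1}\sum_{i=1}^{T}K_{ti}x_ix_i'$. Substituting $y_i=x_i'\beta_{T,i}+\varepsilon_i$ into $\hat{\beta}_t$ and using $\sum_{i}K_{ti}x_ix_i'\beta_{T,t}=(Th)D_T\beta_{T,t}$ gives the identity
\begin{align}
    \hat{\beta}_t-\beta_{T,t}=D_T^{-1}\cdot\frac{1}{Th}\sum_{i=1}^{T}K_{ti}x_ix_i'(\beta_{T,i}-\beta_{T,t})\;+\;D_T^{-1}\cdot\frac{1}{Th}\sum_{i=1}^{T}K_{ti}x_i\varepsilon_i.
\end{align}
I would \emph{define} $R_{T,t}$ to be the first term on the right, so that $\sqrt{Th}(\hat{\beta}_t-\beta_{T,t}-R_{T,t})=D_T^{-1}\cdot(Th)^{-1/2}\sum_{i=1}^{T}K_{ti}x_i\varepsilon_i$, and it then remains to pin down the stochastic order of $R_{T,t}$ and the limiting distribution of the right-hand side.

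First I would establish the law of large numbers $D_T\stackrel{p}{\to}\Omega(r)$: its mean converges to $\Omega(r)$ by Assumption \ref{asm:dgp}(c), and $D_T-E[D_T]\stackrel{p}{\to}0$ follows from a weak law for $\mathrm{L}_2$-NED triangular arrays (Assumption \ref{asm:dgp}(a); see \citet{davidsonStochasticLimitTheory1994}), the relevant structural facts being that $K$ is bounded with support $[-1,1]$ (Assumption \ref{asm:kernel}), so at most $2\lfloor Th\rfloor+1$ summands are active, and that $x_ix_i'$ is NED with uniformly bounded $r$-th moment because $\sup_t E[\|x_t\|^{2r}]<\infty$. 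Hence $D_T^{-1}\stackrel{p}{\to}\Omega(r)^{-1}$ (since $\Omega(r)>0$) and $(Th)^{-1}\sum_i K_{ti}\|x_i\|^2=\mathrm{tr}(D_T)=O_p(1)$. Next I would bound $R_{T,t}$. Since $K_{ti}=0$ whenever $|t-i|>Th$, and since $t=\lfloor Tr\rfloor$ with $r\in(0,1)$ keeps the window $\{i:|t-i|\leq\lfloor Th\rfloor\}$ inside $\{1,\dots,T\}$ for large $T$, I would apply Definition \ref{def:tvp_alpha_class} with $a_T=\lfloor Th\rfloor$ (which satisfies $a_T\leq T$ and $a_T\to\infty$ because $h\to0$, $Th\to\infty$) to get $\max_{i:|t-i|\leq a_T}\|\beta_{T,i}-\beta_{T,t}\|=O_p((a_T/T)^{\alpha})=O_p(h^{\alpha})$ under Definition \ref{def:tvp_alpha_class}(a) and $O_p(T^{-\alpha})$ under Definition \ref{def:tvp_alpha_class}(b). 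Factoring this maximum out of the sum,
\begin{align}
    \Biggl\|\frac{1}{Th}\sum_{i=1}^{T}K_{ti}x_ix_i'(\beta_{T,i}-\beta_{T,t})\Biggr\|\leq\Bigl(\max_{i:|t-i|\leq a_T}\|\beta_{T,i}-\beta_{T,t}\|\Bigr)\cdot\frac{1}{Th}\sum_{i=1}^{T}K_{ti}\|x_i\|^2,
\end{align}
whose last factor is $O_p(1)$; combined with $D_T^{-1}=O_p(1)$ this yields the order of $R_{T,t}$ stated in \eqref{eqn:bias_order_thm}.

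The main obstacle is the central limit theorem $(Th)^{-1/2}\sum_{i=1}^{T}K_{ti}x_i\varepsilon_i\stackrel{d}{\to}N(0,\Sigma(r))$. The summands form a mean-zero, serially uncorrelated triangular array by Assumption \ref{asm:dgp}(b); $\{x_i\varepsilon_i\}$ is $\mathrm{L}_2$-NED on the $\alpha$-mixing base of the stated sizes because products of NED processes with uniformly bounded $2r$-th moments are again NED, and multiplying by the bounded deterministic weights $K_{ti}$ preserves this with uniformly bounded NED constants; the variance of the normalized sum converges to $\Sigma(r)$ by Assumption \ref{asm:dgp}(c); and each normalized summand is of order $(Th)^{-1/2}\to0$ over $O(Th)$ active indices, giving the Lindeberg-type negligibility. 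A CLT for $\mathrm{L}_2$-NED triangular arrays then applies. This step is delicate precisely because the array is non-stationary, the weights vary with $T$, and only $O(Th)$ of the $T$ terms contribute, so a stationary-sequence CLT cannot be quoted directly; the argument follows the lines of \citet{giraitisInferenceStochasticTimevarying2014,giraitisTimevaryingInstrumentalVariable2021}. Given this, Slutsky's theorem together with $D_T^{-1}\stackrel{p}{\to}\Omega(r)^{-1}$ delivers \eqref{dist:general_thm}, and \eqref{eqn:bias_order_thm} has already been shown.

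Finally I would specialize to $h=cT^{\gamma}$. Since $\sqrt{Th}=\sqrt{c}\,T^{(1+\gamma)/2}$, under Definition \ref{def:tvp_alpha_class}(a) we have $\sqrt{Th}\,R_{T,t}=O_p(T^{(1+\gamma)/2+\gamma\alpha})$, which is $o_p(1)$ exactly when $(1+\gamma)/2+\gamma\alpha<0$, i.e.\ $\gamma<-1/(2\alpha+1)$; intersecting with the range $(-1,0)$ forced by $h\to0$ and $Th\to\infty$ gives the first branch of \eqref{def:gamma_alpha}. Under Definition \ref{def:tvp_alpha_class}(b) we have $\sqrt{Th}\,R_{T,t}=O_p(T^{(1+\gamma)/2-\alpha})$, which is $o_p(1)$ exactly when $\gamma<2\alpha-1$, giving the second branch. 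On $\Gamma(\alpha)$ one then writes $\sqrt{cT^{1+\gamma}}(\hat{\beta}_t-\beta_{T,t})=\sqrt{Th}(\hat{\beta}_t-\beta_{T,t}-R_{T,t})+\sqrt{Th}\,R_{T,t}$ and reads off \eqref{dist:undersmooth_thm} from \eqref{dist:general_thm} and Slutsky's theorem.
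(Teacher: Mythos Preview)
Your proposal is correct and follows essentially the same route as the paper: the identical decomposition into $R_{T,t}$ plus a noise term, the same LLN for $D_T$ via NED arguments, the same bias bound obtained by factoring out $\max_{|t-i|\leq a_T}\|\beta_{T,i}-\beta_{T,t}\|$ with $a_T=\lfloor Th\rfloor$, and the same undersmoothing calculation for $\Gamma(\alpha)$. The only cosmetic difference is that the paper invokes the CLT of de Jong (1997) for NED arrays and carefully verifies its array conditions (in particular constructing an auxiliary constant array $c_{T,i}$ and choosing block length $b_T=\sqrt{Th}$), whereas you gesture at a generic NED triangular-array CLT and cite \citet{giraitisInferenceStochasticTimevarying2014,giraitisTimevaryingInstrumentalVariable2021}; the substance is the same.
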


\begin{rem}
	We do not derive the asymptotic distribution of $\hat{\beta}_t$ at boundary points (near $t=0$ and $t=T$), but the derivation will proceed along the lines of \citet{caiTrendingTimevaryingCoefficient2007}. As shown by \citet{caiTrendingTimevaryingCoefficient2007}, the local constant estimator suffers from a larger bias at boundary points than the local linear estimator if $\beta_{T,t}$ is continuously differentiable. However, as discussed soon later (in Example \ref{exm:cnt_diff} below), the local linear estimator is available only when $\beta_{T,t}$ is (continuously) differentiable and is not applicable to nondifferentiable time-varying parameters such as the random walk. To accommodate both differentiable and nondifferentiable time-varying parameters, we focus on the local constant estimator.
\end{rem}

\begin{cor}
    Suppose Assumptions \ref{asm:kernel} and \ref{asm:dgp}(a)-(b) hold. Suppose also that $\{x_t\}_t$ and $\{x_t\varepsilon_t\}_t$ are covariance-stationary. Then, \eqref{dist:general_thm}-\eqref{def:gamma_alpha} hold with $\Omega(r)$ and $\Sigma(r)$ replaced by $\Omega \coloneqq E[x_1x_1']$ and $\Sigma \coloneqq \int_{-1}^{1}K(x)^2dxE[\varepsilon_1^2x_1x_1']$, respectively.
    \label{cor:bias_bandwidth}
\end{cor}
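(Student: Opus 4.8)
The plan is to show that, once $\{x_t\}_t$ and $\{x_t\varepsilon_t\}_t$ are covariance-stationary, Assumption \ref{asm:dgp}(c) is automatically satisfied with $\Omega(r)=\Omega\coloneqq E[x_1x_1']$ and $\Sigma(r)=\Sigma\coloneqq\int_{-1}^{1}K(x)^2dx\,E[\varepsilon_1^2x_1x_1']$; the conclusion is then immediate from Theorem \ref{thm:bias_bandwidth}. Hence the entire task reduces to the two limit statements in Assumption \ref{asm:dgp}(c), each of which is a Riemann-sum computation exploiting that $K$ is continuous (indeed Lipschitz) with support $[-1,1]$ and that $t=\lfloor Tr\rfloor$ with $r\in(0,1)$ is an interior index.

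First I would handle the limit of $(1/Th)\sum_{i=1}^{T}K((t-i)/(Th))E[x_ix_i']$. Covariance-stationarity of $\{x_t\}_t$ makes $E[x_ix_i']=\Omega$ for every $i$, so
\begin{align}
	\frac{1}{Th}\sum_{i=1}^{T}K\Bigl(\frac{t-i}{Th}\Bigr)E[x_ix_i'] = \Omega\cdot\frac{1}{Th}\sum_{i=1}^{T}K\Bigl(\frac{t-i}{Th}\Bigr),
\end{align}
and the scalar on the right is the Riemann sum of $K$ on the equally spaced grid $\{(t-i)/(Th)\}_{i=1}^{T}$: its mesh $1/(Th)$ tends to $0$, and its range $[(t-T)/(Th),(t-1)/(Th)]$ expands to all of $\mathbb{R}$ precisely because $r\in(0,1)$ forces $(t-T)/(Th)\to-\infty$ and $(t-1)/(Th)\to+\infty$. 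Since $K$ is continuous with compact support, the sum converges to $\int_{-\infty}^{\infty}K(u)du=\int_{-1}^{1}K(u)du=1$ by Assumption \ref{asm:kernel}(b), so the limit equals $\Omega$.

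Next I would treat the variance term. By Assumption \ref{asm:dgp}(b), $\{x_t\varepsilon_t\}_t$ has mean zero and no serial correlation, so only the diagonal terms survive:
\begin{align}
	\mathrm{Var}\biggl(\frac{1}{\sqrt{Th}}\sum_{i=1}^{T}K\Bigl(\frac{t-i}{Th}\Bigr)x_i\varepsilon_i\biggr) = \frac{1}{Th}\sum_{i=1}^{T}K\Bigl(\frac{t-i}{Th}\Bigr)^{2}E[\varepsilon_i^2x_ix_i'].
\end{align}
Covariance-stationarity of $\{x_t\varepsilon_t\}_t$ gives $E[\varepsilon_i^2x_ix_i']=E[\varepsilon_1^2x_1x_1']$ for all $i$, and this matrix is finite since $E[\varepsilon_1^2\|x_1\|^2]\le(E[\varepsilon_1^4])^{1/2}(E[\|x_1\|^4])^{1/2}<\infty$ by Cauchy--Schwarz and the $2r$-th moment bounds of Assumption \ref{asm:dgp}(a) (note $2r>4$). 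Applying the same Riemann-sum argument to $K^2$ yields $(1/Th)\sum_{i=1}^{T}K((t-i)/(Th))^{2}\to\int_{-1}^{1}K(u)^2du$, so the limit equals $\Sigma$. Since $\int_{-1}^{1}K^2\ge(\int_{-1}^{1}K)^2/2=1/2>0$ by Cauchy--Schwarz, $\Sigma$ and $\Omega$ inherit the positive-definiteness required in Assumption \ref{asm:dgp}(c) from the (maintained) non-singularity of $E[\varepsilon_1^2x_1x_1']$ and $E[x_1x_1']$; Theorem \ref{thm:bias_bandwidth} then delivers \eqref{dist:general_thm}--\eqref{def:gamma_alpha} with $\Omega(r),\Sigma(r)$ replaced by $\Omega,\Sigma$.

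The argument is essentially routine; the only place that deserves care is the Riemann-sum step. One must check that for an interior index $t=\lfloor Tr\rfloor$ the summation range eventually engulfs the whole support $[-1,1]$, so that no truncation or boundary effect contaminates the limit --- this is exactly where $r\in(0,1)$ is used --- and that the Lipschitz continuity of $K$ (and of $K^2$) bounds the Riemann-sum approximation error by a constant times the mesh $1/(Th)\to0$, uniformly over the grid. Everything else is bookkeeping.
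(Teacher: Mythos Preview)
Your proposal is correct and follows essentially the same route as the paper: verify Assumption \ref{asm:dgp}(c) holds with $\Omega(r)=E[x_1x_1']$ and $\Sigma(r)=\int K^2\,E[\varepsilon_1^2x_1x_1']$ by pulling out the constant second-moment matrices via covariance-stationarity and then evaluating the remaining kernel sums as Riemann approximations of $\int K$ and $\int K^2$, after which Theorem \ref{thm:bias_bandwidth} applies directly. Your additional remarks on moment finiteness, positive-definiteness, and the role of $r\in(0,1)$ in avoiding boundary truncation are sound and slightly more explicit than the paper's presentation, but the substance is the same.
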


In what follows, we will set $h=cT^{\gamma}$ and call $\gamma$ (as well as $h$) the bandwidth parameter.

If $\beta_{T,t}$ belongs to type-a $\mathrm{TVP}(\alpha)$ on the boundary, then it can be estimated by setting $\gamma \approx -1/(2\alpha+1)$, yielding a convergence rate $\approx T^{-\alpha/(2\alpha+1)}$. The same convergence rate has been established in the literature for the minimax risk of kernel-based estimators, assuming that the parameter of interest belongs to a H\"{o}lder class with exponent $\alpha$ \citep[e.g.,][]{tsybakovIntroductionNonparametricEstimation2009}. Theorem \ref{thm:bias_bandwidth} shows that an analogous result holds under our definition of smoothness, Definition \ref{def:tvp_alpha_class}(a).

Smoothness parameter $\alpha$ affects $\Gamma(\alpha)$, the set of bandwidth parameter $\gamma$ that yields $\sqrt{Th}$-consistency and asymptotic normality. This makes the rate of convergence $T^{(1+\gamma)/2}$ dependent on $\alpha$. Letting $\alpha \to 0$, the kernel-based estimation can accommodate time-varying parameters of arbitrary smoothness, but this is accompanied by $\Gamma(\alpha) \to -1$, resulting in the rate of convergence $T^{(1+\gamma)/2} \to 1$. In contrast, if we let $\alpha\to\infty$, then $\Gamma(\alpha)$ tends to $(-1,0)$, and the choice $\gamma \approx 0$ yields a nearly $\sqrt{T}$-rate convergence, but only highly smooth parameters can be estimated. This observation shows that there is a trade-off between the rate of convergence and the size of the class of the time-varying parameters that can be estimated.

Because $\Gamma(\alpha)$ is the set of $\gamma$ that yields $\sqrt{Th}$-consistency and asymptotic normality under given $\alpha$, we can obtain the set of $\alpha$ that leads to $\sqrt{Th}$-consistency and asymptotic normality of $\hat{\beta}_{t}$ under given $\gamma$, by inverting the expression of $\Gamma(\alpha)$. Letting $A(\gamma)$ denote such a set, we can say that $\hat{\beta}_t$ calculated using given $\gamma$ is $\sqrt{Th}$-consistent and asymptotically normal for time-varying parameters with smoothness $\alpha\in A(\gamma)$, where
\begin{align}
    A(\gamma) = \begin{cases}
			(-\frac{1+1/\gamma}{2},\infty) & \text{if $\beta_{T,t}$ satisfies Definition \ref{def:tvp_alpha_class}(a)} \\
			(\frac{1+\gamma}{2},\infty) & \text{if $\beta_{T,t}$ satisfies Definition \ref{def:tvp_alpha_class}(b)}
		\end{cases}.
  \label{def:A_gamma}
\end{align}
Letting $\gamma \to -1$, $A(\gamma)$ tends to $(0,\infty)$, which implies that time-varying parameters with any smoothness $\alpha>0$ can be estimated, but the rate of convergence becomes $T^{(1+\gamma)/2} \to 1$. On the other hand, if we let $\gamma\uparrow0$, then the rate of convergence is as fast as $\sqrt{T}$, but $A(\gamma) \to \infty$ (the smoothness of constant parameters) in the type-a case. Hence, the bandwidth determines the trade-off between efficiency and robustness.

\setcounter{exm}{0}
\begin{exm}[Continued]
    Because continuously differentiable $\beta_{T,t}$ belongs to the type-a $\mathrm{TVP}(1)$ class, for any $\gamma \in \Gamma(1) = (-1,-1/3)$, we have $\sqrt{cT^{1+\gamma}}(\hat{\beta}_t - \beta_{T,t}) \stackrel{d}{\to} N(0,\Omega(r)^{-1}\Sigma(r)\Omega(r)^{-1})$. Setting $\gamma \approx -1/3$ gives the fastest rate of convergence of $T^{1/3}$. If $\beta_{T,t}$ is twice continuously differentiable, and the kernel is symmetric, then the set of the admissible bandwidths, $\Gamma(\alpha)$, widens to $(-1,-1/5)$, giving the faster rate of convergence of $T^{2/5}$ \citep[see][]{caiTrendingTimevaryingCoefficient2007}. In general, we will be able to enlarge $\Gamma(\alpha)$ to $(-1,-1/(4\alpha+1))\cup (-1,-1/3)$ in the type-a case if the following additional condition (mimicking the Taylor expansion) holds:
	\begin{align}
		\max_{j:|t-j|\leq a_T} \left\|\beta_{T,t} - \beta_{T,j} -c_t\left(\frac{t}{T} - \frac{j}{T}\right) \right\| = O_p\Bigl(\Bigl(\frac{a_T}{T}\Bigr)^{2\alpha}\Bigr),
		\label{asm:differentiability}
	\end{align}
    for some (possibly random) bounded vector $c_t$. Condition \eqref{asm:differentiability}, however, essentially requires differentiability of $\beta_{T,t}$ with respect to time, which is not satisfied by, e.g., the random walk divided by $\sqrt{T}$, so that the enlarged version of $\Gamma(\alpha)$ is only available to a limited class of time-varying parameters. For the same reason, the local linear estimator, which is based on the Taylor expansion of $\beta_{T,t}$, is not applicable to nondifferentiable time-varying parameters.
\end{exm}

\begin{exm}[Continued]
	If $\beta_{T,t}$ is the random walk divided by $\sqrt{T}$, then $\Gamma(1/2)=(-1,-1/2)$, and thus the fastest rate of convergence given by $\gamma \approx -1/2$ is $T^{1/4}$, slower than $T^{2/5}$ in the continuously differentiable case. The same set of admissible bandwidths is derived by \citet{giraitisInferenceStochasticTimevarying2014}, who consider a random-walk type time-varying coefficient in the context of univariate AR(1) models.

    Furthermore, we show in Appendix B that the bandwidth minimizing the MSE of $\hat{\beta}_t$ is proportional to $T^{-1/2}$ when $\beta_{T,t}$ is the random walk divided by $\sqrt{T}$. We prove this result under more restrictive conditions than Definition \ref{def:tvp_alpha_class}, and Assumptions \ref{asm:kernel} and \ref{asm:dgp}. Therefore, the choice of $\gamma = -1/2$ may also be justified as the minimizer of the MSE of $\hat{\beta}_t$.
\end{exm}

\begin{exm}[Continued]
    Suppose $\beta_{T,t}$ is defined as in \eqref{model:one_time_sb}. Because $\beta_{T,t}$ belongs to the type-b $\mathrm{TVP}(\alpha)$ class, arbitrary $\gamma$ in $(-1,0)$ yields the $\sqrt{Th}$-consistency and asymptotic normality of $\hat{\beta}_t$ as long as $\alpha\geq 1/2$. In particular, setting $\gamma \approx 0$ gives a near $\sqrt{T}$-consistency.\footnote{In fact, setting exactly $\gamma=0$ yields $\sqrt{T}$-consistency and asymptotic normality if $\alpha>1/2$. In this case, each $\beta_{T,t}, \ t=1,\ldots,T$ is estimated by using the full sample, but $\hat{\beta}_t$ and $\hat{\beta}_s$ ($t\neq s$) may take different values. This is because the weighting scheme (based on the kernel, $K(\cdot)$) is different for different time points. If $K(\cdot)$ is the uniform kernel, $\hat{\beta}_t$ equals the full-sample OLS estimator for all $t$.}
 
    For the case of $\alpha \in (0,1/2)$, however, a smaller $\alpha$ leads to a larger discontinuity in $\beta_{T,t}$ and thus a slower rate of convergence (through a narrower $\Gamma(\alpha)$). Therefore, if $\beta_{T,t}$ experiences large structural breaks given by $\alpha<1/2$, and if there is no other source of instability in the path of $\beta_{T,t}$, then a conventional structural-break approach that achieves $\sqrt{T}$-consistency \citep[e.g., the sequential procedure proposed by][]{baiEstimatingTestingLinear1998} will be more suitable.
\end{exm}

\begin{exm}[Continued]
    The argument given in Example 3 also applies to the threshold model: When $\delta_T=O_p(1/T^{\alpha})$ with $\alpha\geq 1/2$, the kernel-based method delivers a $\sqrt{Th}$-consistent, asymptotically normal estimation of $\beta_{T,t}$, whereas \citeauthor{hansenSampleSplittingThreshold2000}'s (\citeyear{hansenSampleSplittingThreshold2000}) method should be used when $\alpha<1/2$ and the threshold effect solely determines the parameter path.
\end{exm}

\subsection{Estimation of variance-covariance matrices}

To conduct inference, one needs to consistently estimate the asymptotic variance of $\hat{\beta}_t$. Under Assumptions \ref{asm:kernel} and \ref{asm:dgp}, $\Omega(r)$ can be consistently estimated by
\begin{align}
    \hat{\Omega}(r) \coloneqq \frac{1}{Th}\sum_{i=1}^TK\left(\frac{\lfloor Tr \rfloor-i}{Th}\right)x_ix_i' \ ;
\end{align}
see Lemma \ref{lemapp:variance_plim} in Appendix A. A natural estimator of $\Sigma(r)$ is
\begin{align}
    \hat{\Sigma}(r) \coloneqq \frac{1}{Th}\sum_{i=1}^TK\left(\frac{\lfloor Tr \rfloor - i }{Th}\right)^2\hat{\varepsilon}_i^2x_ix_i',
\end{align}
where $\hat{\varepsilon}_i = y_i - x_i'\hat{\beta}_i$.\footnote{If $\{x_t\varepsilon_t\}$ is serially correlated, $\Sigma(r)$ is typically the long-run variance of $\{x_t\varepsilon_t\}$. In this case, an appropriate estimator of $\Sigma(r)$ would be a nonparametric kernel estimator such as the Newey-West one, as suggested by \citet{caiTrendingTimevaryingCoefficient2007}. We do not explore in this direction to save space.} To prove the consistency of $\hat{\Sigma}(r)$, however, the current assumptions are not sufficient. This is because for each $t=\lfloor Tr \rfloor$, the estimation errors $\hat{\beta}_{t+j}-\beta_{T,t+j}$, where $j\in[-\lfloor Th \rfloor, \lfloor Th \rfloor]$, are required to be asymptotically negligible \textit{uniformly} over $j\in[-\lfloor Th \rfloor, \lfloor Th \rfloor]$, on which $\hat{\Sigma}(r)$ is calculated. To ensure the uniform consistency of $\hat{\beta}_{t+j}$ over $j\in[-\lfloor Th \rfloor, \lfloor Th \rfloor]$, we need the following additional conditions.

\begin{asm}
    Assumption \ref{asm:dgp} holds with part (a) replaced by the following condition:
    \begin{itemize}
        \item[(a')] $\{(x_t', \varepsilon_t)\}_t$ is $\mathrm{L}_2$-NED of size $-2(r-1)/(r-2)$ on an $\alpha$-mixing sequence of size $-2r/(r-2)$ for some $r>2$, with respect to some positive constants $d_t$ satisfying $\sup_td_t < \infty$. Moreover, $\sup_t E[\|x_t\|^{2r}] + \sup_t E[|\varepsilon_t|^{2r}] < \infty$.
    \end{itemize}
    \label{asm:dgp_2}
\end{asm}
Assumption \ref{asm:dgp_2}(a') strengthens Assumption \ref{asm:dgp}(a) by increasing the decaying rates of the mixing and NED coefficients, essentially weakening the serial dependence of $\{(x_t', \varepsilon_t)\}_t$.

\begin{asm}
    There exists some constant $\rho>0$ such that $\inf_{t\geq1}\lambda'E[x_tx_t']\lambda\geq \rho\left\|\lambda\right\|^2$ for any $\lambda\neq0$.
    \label{asm:min_eigen}
\end{asm}
Assumption \ref{asm:min_eigen} requires that there be enough variation in the data, as it implies that the minimum eigenvalue of $E[x_tx_t']$ is bounded away from zero uniformly in $t$.

\begin{asm}
    For each $t=\lfloor Tr \rfloor$, $r\in(0,1)$, it holds that
    \begin{align}
        \max_{-\lfloor Th \rfloor\leq j \leq \lfloor Th \rfloor}\left\|\frac{1}{Th}\sum_{i=1}^TK\left(\frac{t+j-i}{Th}\right)x_ix_i'\left(\beta_{T,i}-\beta_{T,t+j}\right)\right\| = o_p(1).
    \end{align}
    \label{asm:uniform_negligible}
\end{asm}
Assumption \ref{asm:uniform_negligible} is a high-level one that ensures the uniform consistency of $\hat{\beta}_{t+j}$ over $j\in[-\lfloor Th \rfloor, \lfloor Th \rfloor]$. In Appendix C, we show that Assumption \ref{asm:uniform_negligible} is satisfied in the time-varying models and under the implied bandwidths discussed in Examples \ref{exm:cnt_diff}-\ref{exm:threshold}.

\begin{thm}
    Suppose Assumptions \ref{asm:kernel} and \ref{asm:dgp_2}-\ref{asm:uniform_negligible} hold. Then, for each $t=\lfloor Tr \rfloor, \ r\in (0,1)$, we have $\hat{\Sigma}(r) \stackrel{p}{\to} \Sigma(r)$.
    \label{thm:variance_consistency}
\end{thm}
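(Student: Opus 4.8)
The plan is to substitute $\hat\varepsilon_i = \varepsilon_i - x_i'(\hat\beta_i - \beta_{T,i})$ into $\hat\Sigma(r)$, expand the square, and show that the two terms carrying the estimation error are asymptotically negligible, so that $\hat\Sigma(r)$ has the same probability limit as the infeasible statistic built from the true errors. Writing $t \coloneqq \lfloor Tr\rfloor$ and $K_i \coloneqq K((\lfloor Tr\rfloor - i)/Th)$, we get $\hat\Sigma(r) = \hat\Sigma_1 - 2\hat\Sigma_2 + \hat\Sigma_3$, where $\hat\Sigma_1 \coloneqq (Th)^{-1}\sum_{i=1}^T K_i^2 \varepsilon_i^2 x_ix_i'$, $\hat\Sigma_2 \coloneqq (Th)^{-1}\sum_{i=1}^T K_i^2 \varepsilon_i\,(x_i'(\hat\beta_i - \beta_{T,i}))\,x_ix_i'$, and $\hat\Sigma_3 \coloneqq (Th)^{-1}\sum_{i=1}^T K_i^2\,(x_i'(\hat\beta_i - \beta_{T,i}))^2\,x_ix_i'$. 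I would show $\hat\Sigma_1 \stackrel{p}{\to} \Sigma(r)$ and $\hat\Sigma_2, \hat\Sigma_3 \stackrel{p}{\to} 0$.

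For $\hat\Sigma_1$, split it into its mean $(Th)^{-1}\sum_i K_i^2 E[\varepsilon_i^2 x_ix_i']$ and the mean-zero remainder $(Th)^{-1}\sum_i K_i^2(\varepsilon_i^2 x_ix_i' - E[\varepsilon_i^2 x_ix_i'])$. Since $\{x_t\varepsilon_t\}$ is mean-zero and serially uncorrelated (Assumption \ref{asm:dgp}(b), still in force under Assumption \ref{asm:dgp_2}), the mean equals $\mathrm{Var}((Th)^{-1/2}\sum_i K_i x_i\varepsilon_i)$, which converges to $\Sigma(r)$ by Assumption \ref{asm:dgp}(c). The remainder is $o_p(1)$ by a weak law of large numbers for $\mathrm{L}_1$-NED triangular arrays applied to $\varepsilon_i^2 x_ix_i'$: this array is $\mathrm{L}_1$-NED of positive size because $\{(x_t', \varepsilon_t)\}$ is $\mathrm{L}_2$-NED of the \emph{doubled} size imposed in Assumption \ref{asm:dgp_2}(a'), it has uniformly bounded $(1+\delta)$-th moments for some small $\delta > 0$ by Hölder's inequality and the $2r$-th moment bounds with $r > 2$, and the weights $K_i^2$ are bounded with $\sum_i K_i^2 = O(Th)$ since $K$ has compact support.

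The terms $\hat\Sigma_2$ and $\hat\Sigma_3$ are controlled by a uniform bound on the estimation error over the effective estimation window. I would first establish, as a lemma, that $\Delta_T \coloneqq \max_{-\lfloor Th\rfloor \le j \le \lfloor Th\rfloor}\|\hat\beta_{t+j} - \beta_{T,t+j}\| = o_p(1)$. Since $K_i \ne 0$ only when $|i - t| \le \lfloor Th\rfloor$, we then have $|x_i'(\hat\beta_i - \beta_{T,i})| \le \|x_i\|\,\Delta_T$ for every $i$ with $K_i \ne 0$, so $\|\hat\Sigma_2\| \le \Delta_T\,(Th)^{-1}\sum_i K_i^2 |\varepsilon_i|\|x_i\|^3$ and $\|\hat\Sigma_3\| \le \Delta_T^2\,(Th)^{-1}\sum_i K_i^2\|x_i\|^4$. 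The averages multiplying $\Delta_T$ and $\Delta_T^2$ have expectations of order $(Th)^{-1}\sum_i K_i^2 = O(1)$, using $\sup_i E[|\varepsilon_i|\|x_i\|^3] < \infty$ and $\sup_i E\|x_i\|^4 < \infty$ (both guaranteed by $r > 2$), hence are $O_p(1)$ by Markov's inequality; therefore $\hat\Sigma_2, \hat\Sigma_3 = o_p(1)$, and the theorem follows.

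The main obstacle is the uniform consistency lemma $\Delta_T = o_p(1)$. Write $\hat\beta_{t+j} - \beta_{T,t+j} = \hat\Omega(t+j)^{-1}(B_{T,j} + V_{T,j})$, with $B_{T,j} \coloneqq (Th)^{-1}\sum_i K((t+j-i)/Th)\,x_ix_i'\,(\beta_{T,i} - \beta_{T,t+j})$ and $V_{T,j} \coloneqq (Th)^{-1}\sum_i K((t+j-i)/Th)\,x_i\varepsilon_i$. The bias part $\max_j\|B_{T,j}\| = o_p(1)$ is exactly Assumption \ref{asm:uniform_negligible}, and $\hat\Omega(t+j)^{-1}$ is $O_p(1)$ uniformly in $j$ because Assumption \ref{asm:min_eigen} gives $\lambda_{\min}(E[\hat\Omega(t+j)]) \ge \rho\,(Th)^{-1}\sum_i K((t+j-i)/Th)$, which is bounded away from zero uniformly in $j$ for $T$ large, combined with a uniform-in-$j$ law of large numbers for $\hat\Omega(t+j)$. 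The delicate piece is $\max_{|j| \le \lfloor Th\rfloor}\|V_{T,j}\| = o_p(1)$: for each fixed $j$ we have $E\|V_{T,j}\|^2 = O((Th)^{-1})$, but there are of order $Th$ values of $j$, so a union bound is too crude. I would instead exploit the Lipschitz continuity of $K$ (Assumption \ref{asm:kernel}(a)): the increment $V_{T,j} - V_{T,j-1} = (Th)^{-1}\sum_i [K((t+j-i)/Th) - K((t+j-1-i)/Th)]\,x_i\varepsilon_i$ satisfies $E\|V_{T,j} - V_{T,j-1}\|^2 = O((Th)^{-3})$, because each kernel increment is $O((Th)^{-1})$, there are $O(Th)$ nonzero summands, and $\{x_t\varepsilon_t\}$ is serially uncorrelated; summing these increments over the $O(Th)$ indices $j$ gives $\max_j\|V_{T,j}\| \le \|V_{T,0}\| + \sum_j \|V_{T,j} - V_{T,j-1}\| = O_p((Th)^{-1/2}) = o_p(1)$. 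The same telescoping device, now using the covariance decay afforded by the strengthened mixing and NED rates in Assumption \ref{asm:dgp_2}(a'), delivers the uniform-in-$j$ law of large numbers for $(Th)^{-1}\sum_i K((t+j-i)/Th)(x_ix_i' - E[x_ix_i'])$ needed to control the denominator.
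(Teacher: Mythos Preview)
Your proposal is correct and follows the same overall architecture as the paper: the same three-term decomposition of $\hat\Sigma(r)$ after substituting $\hat\varepsilon_i=\varepsilon_i-x_i'(\hat\beta_i-\beta_{T,i})$, the same LLN-for-$L_1$-NED argument for the leading term (the paper routes this through Davidson's Theorem 17.9 to get $L_1$-NED of $\varepsilon_i^2x_ix_i'$ from $L_2$-NED of $\varepsilon_ix_i$, then Andrews' WLLN), and the same reduction of the cross and quadratic terms to a uniform-consistency lemma $\max_{|j|\le\lfloor Th\rfloor}\|\hat\beta_{t+j}-\beta_{T,t+j}\|=o_p(1)$.

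The one genuine methodological difference is in how you control $\max_j\|V_{T,j}\|$ and $\max_j\|\Delta_{t_j}\|$. The paper invokes Hansen's (1991) maximal inequality for $L_2$-mixingales of size $-1$; the strengthened NED/mixing rates in Assumption~\ref{asm:dgp_2}(a') are used precisely to guarantee that $x_i\varepsilon_i$ and $x_ix_i'-E[x_ix_i']$ are $L_2$-mixingales of the required size, so that $E[\max_j\|\cdot\|^2]=O((Th)^{-1})$. Your telescoping route is more elementary: it uses only the Lipschitz continuity of $K$ (Assumption~\ref{asm:kernel}(a)) together with serial uncorrelatedness of $x_i\varepsilon_i$ for $V_{T,j}$, and absolute summability of the autocovariances of $x_ix_i'$ for $\Delta_{t_j}$, to get $E\|V_{T,j}-V_{T,j-1}\|^2=O((Th)^{-3})$ and hence $\max_j\|V_{T,j}\|=O_p((Th)^{-1/2})$. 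This is a nice alternative that exploits the kernel's smoothness directly and, for $V_{T,j}$ at least, does not obviously require the doubled rates; for $\Delta_{t_j}$ you still need enough covariance decay, which the NED/mixing structure provides. The paper's approach is more off-the-shelf and reusable for arrays without a smooth index structure; yours is tailored to the kernel setting and arguably more transparent here.
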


\section{On Bandwidth Selection: Implications and a Guide}\label{sec:bandwidth}
In Theorem \ref{thm:bias_bandwidth}, we showed the set of admissible bandwidth rates depends on the smoothness $\alpha$ of $\beta_{T,t}$.
This implies that an improperly selected bandwidth rate (given by $\gamma \notin \Gamma(\alpha)$) leads to misleading inference. In this section, we illustrate this implication through some examples where the evolutionary mechanism of $\beta_{T,t}$ is misspecified. We also discuss how to choose the bandwidth in empirical studies.

\subsection{When random-walk $\beta_{T,t}$ is assumed to be continuously differentiable}\label{subsec:rw_misspecify}

Suppose one assumes $\beta_{T,t}$ is a continuously differentiable function and sets $\gamma \approx -1/3$, but the fact is that $\beta_{T,t}$ follows the random walk divided by $\sqrt{T}$. Using the results given in Theorem \ref{thm:bias_bandwidth}, it is readily shown that the kernel-based estimator satisfies $\sqrt{cT^{1+\gamma}}(\hat{\beta}_t - \beta_{T,t}) = S_{T,t} + O_p(T^{1/2+\gamma})$, where $S_{T,t} \stackrel{d}{\to}N(0,\Omega(r)^{-1}\Sigma(r)\Omega(r)^{-1})$. Since the bias term is of order $O_p(T^{1/2+\gamma})$ and $\gamma > -1/2$, the difference $\hat{\beta}_{t} - \beta_{T,t}$ is dominated by the bias term. Because the bias term is not normal in general, confidence intervals based on a normal approximation will perform poorly.

In the literature on smooth (differentiable) time-varying parameters, researchers often use a rule-of-thumb or plug-in bandwidth $h=\mathrm{constant}\times T^{-1/5}$, or pick the bandwidth minimizing the cross-validation criterion over $h\in [c_1T^{-1/5},c_2T^{-1/5}]$ for some $0<c_1<c_2$ \citep{zhouSimultaneousInferenceLinear2010,zhangInferenceTimeVaryingRegression2012,kristensenNonparametricDetectionEstimation2012,chengBayesianBandwidthEstimation2019,sunPenalizedTimevaryingModel2023}. Although these selection rules lead to an efficient estimation of $\beta_{T,t}$ as long as it is correctly specified as a continuously differentiable function, they will yield a biased estimation if $\beta_{T,t}$ is a random walk, or more generally, if $\beta_{T,t}$ does not belong to type-a $\mathrm{TVP}(1)$.

\subsection{The effect of neglected breaks}\label{subsec:neglect_break}

Suppose $\beta_{T,t}=\mu_{T,t} + (1/\sqrt{T})\sum_{i=1}^{t}u_i$, where $u_t$ is defined as in Example \ref{exm:rw}, and $\mu_{T,t}$ satisfies
\begin{align}
	\mu_{T,t} = \begin{cases}
		\mu_1 & \text{for} \ t=1,2,\ldots,T_B \\
		\mu_2 & \text{for} \ t=T_B + 1, T_B + 2,\ldots,T \\
	\end{cases},
 \label{eqn:break_in_mean}
\end{align}
with $T_B = \lfloor \tau_BT \rfloor$ and $\mu_2-\mu_1 = \delta/T^{\alpha}$. Then, we can show
\begin{align}
	R_{T,t}  
	=\begin{cases}
		O_p(T^{\gamma/2}) & \text{for} \ t\in[1,T_B-\lfloor Th \rfloor]\cup[T_B+1+\lfloor Th \rfloor,T] \\
		O_p(\max\{T^{\gamma/2},T^{-\alpha}\}) & \text{for} \ t\in[T_B-\lfloor Th \rfloor+1, T_B+\lfloor Th \rfloor] 
	\end{cases},
\end{align}
where $R_{T,t}$ is defined in \eqref{dist:general_thm} and \eqref{eqn:bias_order_thm}. The asymptotic order of the bias term, $R_{T,t}$, is $O_p(T^{\gamma/2})$ for $t$ outside the $\lfloor Th \rfloor$-neighborhood of break point $T_B$. On the $\lfloor Th \rfloor$-neighborhood of $T_B$, it is $O_p(T^{\gamma/2})$ if $\alpha\geq-\gamma/2$, while it is $O_p(T^{-\alpha})$ if $0<\alpha<-\gamma/2$. 

Suppose we estimate $\beta_{T,t}$ by $\hat{\beta}_t$ assuming $\mu_{T,t} = \mu$, that is, the parameter instability is purely due to the zero-mean random walk. In this case, the (misleading) optimal rate of convergence is achieved by the choice of $\gamma=-1/2$, yielding $R_{T,t}=O_p(T^{-1/4})$ for $t\in[1,T_B-\lfloor Th \rfloor]\cup[T_B+1+\lfloor Th \rfloor,T]$, and
\begin{align}
	R_{T,t}  
	=\begin{cases}
		O_p(T^{-1/4}) & \text{if} \ \alpha\geq1/4 \\
		O_p(T^{-\alpha}) & \text{if} \ 0<\alpha<1/4
	\end{cases}
\end{align}
for $t\in[T_B-\lfloor Th \rfloor+1, T_B+\lfloor Th \rfloor]$.

When $\alpha\geq1/4$, the asymptotic order of $R_{T,t}$ is $O_p(T^{-1/4})$ for all $t$, the same order as in the pure random walk case (see \eqref{eqn:bias_order_thm}), so that the choice $\gamma =-1/2$ is valid and leads to the fastest rate of convergence. In contrast, if $0<\alpha<1/4$, $R_{T,t} = O_p(T^{-\alpha})$ for $t=T_B \pm \lfloor rTh \rfloor, \ r\in [0,1]$. Because the asymptotically normal component of the decomposition of $\hat{\beta}_t - \beta_{T,t}$ is $O_p(T^{-(1+\gamma)/2}) = O_p(T^{-1/4})$, the asymptotic behavior of $\hat{\beta}_t - \beta_{T,t}$ is dominated by the bias term. Therefore, the structural break induces a severe bias in the kernel-based estimation on the $\lfloor Th \rfloor$-neighborhood of the discontinuity point $t=T_B$.

The above result tells us that, if we set $\gamma=-1/2$, abrupt breaks of size $1/T^{\alpha}$ are absorbed in random walk parameter instabilities if $\alpha\geq1/4$, while the abrupt breaks ``stick out" and cause bias when $\alpha<1/4$.

A more general result can be derived if we invoke $\Gamma(\alpha)$ and $A(\gamma)$ defined in \eqref{def:gamma_alpha} and \eqref{def:A_gamma}, respectively. Suppose that $\beta_{T,t}$ can be expressed as $\beta_{T,t}^1 + \mu_{T,t}$, where $\beta_{T,t}^1$ belongs to type-a $\mathrm{TVP}(\alpha_1)$ with $\alpha_1 \in (0,\infty)$, and $\mu_{T,t}$ is defined as in \eqref{eqn:break_in_mean} but the magnitude of the break is $\mu_2-\mu_1 = \delta/T^{\alpha_2}$. Then, $\hat{\beta}_t$ is $\sqrt{Th}$-consistent and asymptotically normal under any $\gamma \in \Gamma(\alpha_1)=(-1,-1/(2\alpha_1+1))$ for $t\in[1,T_B-\lfloor Th \rfloor]\cup[T_B+1+\lfloor Th \rfloor,T]$. On $[T_B-\lfloor Th \rfloor+1, T_B+\lfloor Th \rfloor]$, the abrupt break is absorbed in $\beta_{T,t}^1$, and the same $\gamma$ leads to $\sqrt{Th}$-consistency and asymptotic normality if $\alpha_2\in A(\gamma) = ((1+\gamma)/2, \infty)$, while the bias term dominates the asymptotically normal term if $\alpha_2<(1+\gamma)/2$.

\subsection{A guide for bandwidth selection}\label{subsec:data-driven}

In the previous subsections, we have observed that an improperly selected $\gamma$ leads to misleading inference. Therefore, care must be taken in determining the bandwidth parameter. Because bandwidth parameter $h$ takes the form of $h=cT^{\gamma}$ with $c>0$ and $\gamma<0$, we first discuss how to determine $\gamma$ and then how to select $c$.

If one can identify the evolutionary mechanism of $\beta_{T,t}$ based on some prior information, they may select $\gamma$ appropriately, referring to the theoretical results derived in Section \ref{sec:asym}. For instance, if the random walk coefficient model is plausible, $\gamma = -1/2$ is an appealing choice. If $\beta_{T,t}$ is known to be twice continuously differentiable, then various methods for bandwidth selection proposed in the literature can be used to determine $\gamma$ and $c$ jointly (e.g., \citet{zhangInferenceTimeVaryingRegression2012}). 

\begin{rem}
    Whatever $\gamma \in (-1,0)$ may be selected, abrupt breaks and threshold effects of size $1/T^{\alpha}$ lead to biased estimation around the discontinuity points if $\alpha<(1+\gamma)/2$; recall Section \ref{subsec:neglect_break}. To avoid facing bias around the discontinuity points, one may be tempted to split the sample using some test for structural breaks \citep[e.g., as proposed in][]{baiEstimatingTestingLinear1998} or \citeauthor{hansenSampleSplittingThreshold2000}'s (\citeyear{hansenSampleSplittingThreshold2000}) approach, and then apply kernel regression within each subsample. However, our simulation shows that these sample-splitting approaches may lead to a misleading conclusion if latent discontinuous changes are mixed with smooth parameter changes. According to the simulation results, structural break tests can both underestimate and overestimate the number of discontinuous changes with a nonnegligible (or large in some cases) probability. Underestimating the number of discontinuous breaks implies that some latent abrupt breaks are overlooked, and an overestimation implies that spurious abrupt breaks are detected. Therefore, conventional structural break tests probably are not suitable for detecting abrupt breaks if they are mixed with smooth parameter changes. See Appendix D for details.
\end{rem}

Determining $\gamma$ is more delicate when there is no prior information that helps identify the evolutionary mechanism of $\beta_{T,t}$. Here, we propose two data-driven procedures to select the value of $\gamma$, which require prespecified lower and upper bounds $\underline{\gamma}, \overline{\gamma}\in(-1,0)$ to construct the set of candidate $\gamma$ values, $[\underline{\gamma},\overline{\gamma}]$. The theory developed in this article will serve as a guiding principle in determining $\underline{\gamma}$ and $\overline{\gamma}$.

The first procedure we propose is a naive cross-validation-based method: For each $\gamma\in[\underline{\gamma},\overline{\gamma}]$ (on some grid), calculate $\hat{\beta}_{-t,m}(\gamma)$, where $\hat{\beta}_{-t,m}(\gamma)$ are the leave-$(2m+1)$-out local constant estimators calculated without the data on $s\in[t-m,t+m]$ and with $h=T^{\gamma}$ for some $m\in\mathbb{N}\cup\{0\}$, compute the cross-validation criterion, $\mathrm{CV}(\gamma)\coloneqq T^{-1}\sum_{t=1}^T(y_t-x_t'\hat{\beta}_{-t,m}(\gamma))^2$, and then pick the minimizer of $\mathrm{CV}(\gamma)$. One may use the generalized cross-validation (GCV) considered in  \citet{zhouSimultaneousInferenceLinear2010,zhangInferenceTimeVaryingRegression2012}.

The second procedure is based on fixed-design wild bootstrap \citep{goncalvesBootstrappingAutoregressionsConditional2004}.

\begin{algo}[Bootstrap-based]
    \hfill
    \begin{itemize}
        \item[1.] For each $\gamma_1\in[\underline{\gamma},\overline{\gamma}]$, calculate the local constant estimators with $h=h_1\coloneqq T^{\gamma_1}$, denoted by $\hat{\beta}_t(\gamma_1)$, and obtain residuals $\hat{\varepsilon}_t(\gamma_1)=y_t-x_t'\hat{\beta}_t(\gamma_1)$.

        \item[2.] For each $\gamma_1\in[\underline{\gamma},\overline{\gamma}]$, apply fixed-design wild bootstrap to resample $y_t$: $y_t^*(\gamma_1)=x_t'\hat{\beta}_t(\gamma_1)+\varepsilon_t^*(\gamma_1)$, where $\varepsilon_t^*(\gamma_1)\coloneqq \eta_t\hat{\varepsilon}_t(\gamma_1)$ and $\eta_t \sim \mathrm{i.i.d.} \ N(0,1)$ independent of the data. For each $\gamma_2\leq\gamma_1$, calculate the local constant estimators with $h=h_2=T^{\gamma_2}$ using $(y_t^*(\gamma_1), x_t)$:
        \begin{align}
            \hat{\beta}_t^*(\gamma_1,\gamma_2) \coloneqq \left(\sum_{i=t-\lfloor Th_2\rfloor}^{t+\lfloor Th_2\rfloor}K\left(\frac{t-i}{Th_2}\right)x_ix_i'\right)^{-1} \sum_{i=t-\lfloor Th_2\rfloor}^{t+\lfloor Th_2\rfloor}K\left(\frac{t-i}{Th_2}\right)x_iy_i^*(\gamma_1).
        \end{align}

        \item[3.] For each pair $(\gamma_1,\gamma_2)$, construct the $100(1-q)\%$ confidence intervals for $\hat{\beta}_t(\gamma_1)$ based on $\hat{\beta}_t^*(\gamma_1,\gamma_2)$, its standard error, and the quantile of $N(0,1)$, and compute the empirical coverage rates (obtained from $B$ bootstrap intervals), denoted by $\mathrm{CR}(\gamma_1,\gamma_2)$.

        \item[4.] The selected value is the largest $\gamma_1$ such that $\mathrm{CR}(\gamma_1,\gamma_2)\geq 1-\bar{q}$ for all $\gamma_2\leq\gamma_1$ and some tolerance level $\bar{q}$; that is, $\hat{\gamma} = \max\Upsilon$, where
        \begin{align}
            \Upsilon \coloneqq \{\gamma_1:\gamma_1\in[\underline{\gamma},\overline{\gamma}], \mathrm{CR}(\gamma_1,\gamma_2)\geq 1-\bar{q} \ \text{for all} \ \gamma_2\in[\underline{\gamma},\gamma_1]\}.
        \end{align}
        If $\Upsilon$ is empty, then $\hat{\gamma} = \underline{\gamma}$.
    \end{itemize}
    \label{algo:gamma_selection_cr}
\end{algo}

The rationale behind Algorithm \ref{algo:gamma_selection_cr} is as follows. If $\gamma_1$ is sufficiently small that $\hat{\beta}_t(\gamma_1)$ is $\sqrt{Th_1}$-consistent for $\beta_{T,t}$, then $\hat{\varepsilon}_t(\gamma_1)$ are good approximations of unobserved $\varepsilon_t$, and bootstrap sample $y_t^*(\gamma_1)$ generated from $\varepsilon_t^*(\gamma_1)$ is ``well-behaved". Treating $\hat{\beta}_t(\gamma_1)$ as the pseudo-true parameters, $\hat{\beta}_t^*(\gamma_1,\gamma_2)$ ($\gamma_2\leq\gamma_1$) are $\sqrt{Th_2}$-consistent for $\hat{\beta}_t(\gamma_1)$ and asymptotically normal under the bootstrap probability measure, in probability. Then, the confidence interval for $\hat{\beta}_t(\gamma_1)$ based on $\hat{\beta}_t^*(\gamma_1,\gamma_2)$ and $N(0,1)$ should attain empirical coverage rates close to the nominal confidence level, with high probability. In Step 4, we pick the largest $\gamma_1$ such that the above argument applies, so that the fastest possible convergence rate can be achieved.

To theoretically justify the above reasoning, we impose the following regularity condition, strengthening Assumption \ref{asm:uniform_negligible}:
\begin{asm}
     For each $t=\lfloor Tr \rfloor$, $r\in(0,1)$, it holds that
    \begin{align}
        \max_{-\lfloor Th_2 \rfloor\leq j \leq \lfloor Th_2 \rfloor}\left\|\frac{1}{Th_1}\sum_{i=1}^TK\left(\frac{t+j-i}{Th_1}\right)x_ix_i'\left(\beta_{T,i}-\beta_{T,t+j}\right)\right\| = o_p(1/\sqrt{Th_2}).
    \end{align}
    \label{asm:uniform_negligible_2}
\end{asm}
If $\beta_{T,t}$ satisfies Condition \ref{con:holder} given in Appendix C, then Assumption \ref{asm:uniform_negligible_2} holds when $2\alpha\gamma_1+\gamma_2<-1$. If $\beta_{T,t}$ is a rescaled random walk, under Condition \ref{con:rw_assumption5} given in Appendix C, Lemma 5(iii) of \citet{giraitisTimevaryingInstrumentalVariable2021} shows that Assumption \ref{asm:uniform_negligible_2} holds when $\gamma_1+\gamma_2<-1$.

\begin{thm}
    Suppose that Assumptions \ref{asm:kernel}, \ref{asm:dgp_2}, \ref{asm:min_eigen}, and \ref{asm:uniform_negligible_2} hold, and that $\beta_{T,t}$ belongs to type-a $\mathrm{TVP}(\alpha)$. If $\gamma_1\in\Gamma(\alpha)=(-1,-(2\alpha+1)^{-1})$, we have, for each $t=\lfloor Tr \rfloor, \ r\in (0,1)$, and for $\gamma_2\in[\underline{\gamma},\gamma_1]$,
    \begin{align}
        \sup_{x\in\mathbb{R}^p}\left|P^*\left(\sqrt{Th_2}\left(\hat{\beta}_t^*(\gamma_1,\gamma_2)-\hat{\beta}_t(\gamma_1)-R_{T,t}^*\right)\leq x\right) - P\left(Z\leq x\right)\right| \stackrel{p}{\to}0,
    \end{align}
    where $P^*$ denotes the probability measure induced by the fixed-design wild bootstrap, $Z\sim N(0, \Omega(r)^{-1}\Sigma(r)\Omega(r)^{-1})$, and
    \begin{align}
        R_{T,t}^* = \begin{cases}
            O_{p^*} \left(1/\sqrt{Th_2}\right) & \mathrm{if} \ \gamma_2=\gamma_1 \\
            o_{p^*} \left(1/\sqrt{Th_2}\right) & \mathrm{if} \ \gamma_2<\gamma_1
        \end{cases}, 
    \end{align}
    with arbitrarily high probability for sufficiently large $T$. 
    \label{thm:bootstrap}
\end{thm}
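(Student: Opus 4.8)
The plan is to transplant the proof of Theorem \ref{thm:bias_bandwidth} onto the bootstrap data generating process $y_i^*(\gamma_1)=x_i'\hat\beta_i(\gamma_1)+\eta_i\hat\varepsilon_i(\gamma_1)$, exploiting that the $\eta_i$ are Gaussian so that the leading term of the bootstrap estimator is \emph{exactly} normal conditionally on the data. Writing $K_i^{(2)}=K((t-i)/(Th_2))$, $\hat\Omega^*(r)=(Th_2)^{-1}\sum_{|i-t|\le\lfloor Th_2\rfloor}K_i^{(2)}x_ix_i'$, and substituting the bootstrap DGP into the definition of $\hat\beta_t^*(\gamma_1,\gamma_2)$, one obtains
\begin{align}
    \sqrt{Th_2}\bigl(\hat\beta_t^*(\gamma_1,\gamma_2)-\hat\beta_t(\gamma_1)-R_{T,t}^*\bigr)
    =\hat\Omega^*(r)^{-1}\frac{1}{\sqrt{Th_2}}\sum_{|i-t|\le\lfloor Th_2\rfloor}K_i^{(2)}x_i\,\eta_i\,\hat\varepsilon_i(\gamma_1),
\end{align}
with $R_{T,t}^*\coloneqq\hat\Omega^*(r)^{-1}(Th_2)^{-1}\sum_{|i-t|\le\lfloor Th_2\rfloor}K_i^{(2)}x_ix_i'(\hat\beta_i(\gamma_1)-\hat\beta_t(\gamma_1))$, which is non-random given the data. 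Conditionally on the data the right-hand side is exactly $N(0,\hat\Omega^*(r)^{-1}V_T^*\hat\Omega^*(r)^{-1})$ with $V_T^*\coloneqq(Th_2)^{-1}\sum_{|i-t|\le\lfloor Th_2\rfloor}(K_i^{(2)})^2x_ix_i'\hat\varepsilon_i(\gamma_1)^2$, since the $\eta_i$ are i.i.d. $N(0,1)$ and independent of the data. Because the target covariance $\Omega(r)^{-1}\Sigma(r)\Omega(r)^{-1}$ is nonsingular (Assumption \ref{asm:dgp}(c)), the claimed uniform convergence of the conditional CDF follows by continuity of the multivariate Gaussian CDF in a nonsingular covariance, once I establish $\hat\Omega^*(r)\stackrel{p}{\to}\Omega(r)$ and $V_T^*\stackrel{p}{\to}\Sigma(r)$.

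The first convergence is Lemma \ref{lemapp:variance_plim} applied with bandwidth $h_2$ (Assumption \ref{asm:dgp}(c) applies to any $h$ with $h\to0$ and $Th\to\infty$, and $Th_2\to\infty$ since $\gamma_2>-1$). For $V_T^*\stackrel{p}{\to}\Sigma(r)$ I would rerun the argument of Theorem \ref{thm:variance_consistency} with $(K_i^{(2)})^2$ in place of the squared kernel and residuals $\hat\varepsilon_i(\gamma_1)=\varepsilon_i+x_i'(\beta_{T,i}-\hat\beta_i(\gamma_1))$; the cross and quadratic terms produced by $\hat\varepsilon_i(\gamma_1)^2$ are absorbed using $\max_{|i-t|\le\lfloor Th_2\rfloor}\|\hat\beta_i(\gamma_1)-\beta_{T,i}\|=o_p(1)$, while the main term $(Th_2)^{-1}\sum_i(K_i^{(2)})^2x_ix_i'\varepsilon_i^2$ converges to $\Sigma(r)$ exactly as $\hat\Sigma(r)$ does. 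The uniform-over-the-$h_2$-window consistency of $\hat\beta_i(\gamma_1)$ comes from decomposing $\hat\beta_i(\gamma_1)-\beta_{T,i}$ into $\hat\Omega_i^{-1}$ times a drift term $b_i$ plus an $h_1$-kernel-weighted average $v_i$ of $\{x_k\varepsilon_k\}$, bounding the maximum of the drift terms by Assumption \ref{asm:uniform_negligible_2}, $\max_i\|\hat\Omega_i^{-1}\|$ by Assumption \ref{asm:min_eigen} together with a uniform law of large numbers, and $\max_i\|v_i\|$ by a maximal inequality for $\mathrm{L}_2$-NED sequences — which is precisely the step calling for the strengthened Assumptions \ref{asm:dgp_2}(a') and \ref{asm:min_eigen}, exactly as in Theorem \ref{thm:variance_consistency}.

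It remains to identify the order of $R_{T,t}^*$. Write $\hat\beta_i(\gamma_1)-\hat\beta_t(\gamma_1)=(\beta_{T,i}-\beta_{T,t})-(\hat\beta_t(\gamma_1)-\beta_{T,t})+(\hat\beta_i(\gamma_1)-\beta_{T,i})$ and feed this into $R_{T,t}^*$. The $(\beta_{T,i}-\beta_{T,t})$-piece is bounded in norm by $O_p(1)\cdot\max_{|i-t|\le\lfloor Th_2\rfloor}\|\beta_{T,i}-\beta_{T,t}\|=O_p(h_2^\alpha)=O_p(T^{\gamma_2\alpha})$ by type-a $\mathrm{TVP}(\alpha)$; since $\gamma_2\le\gamma_1<-(2\alpha+1)^{-1}$ one has $\gamma_2\alpha<-(1+\gamma_2)/2$, so this piece is $o_p((Th_2)^{-1/2})$ regardless of whether $\gamma_2$ equals $\gamma_1$. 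The $(\hat\beta_t(\gamma_1)-\beta_{T,t})$-piece collapses to $-(\hat\beta_t(\gamma_1)-\beta_{T,t})$, which equals $O_p((Th_1)^{-1/2})$ by Theorem \ref{thm:bias_bandwidth} (with $\gamma_1\in\Gamma(\alpha)$ the bias is $o_p((Th_1)^{-1/2})$); multiplied by $\sqrt{Th_2}$ this is $O_p(T^{(\gamma_2-\gamma_1)/2})$, i.e. $O_{p^*}((Th_2)^{-1/2})$ if $\gamma_2=\gamma_1$ and $o_{p^*}((Th_2)^{-1/2})$ if $\gamma_2<\gamma_1$. The $(\hat\beta_i(\gamma_1)-\beta_{T,i})$-piece is the delicate one: substituting $\hat\beta_i(\gamma_1)-\beta_{T,i}=\hat\Omega_i^{-1}(b_i+v_i)$, the drift part $b_i$ contributes $o_p((Th_2)^{-1/2})$ by Assumption \ref{asm:uniform_negligible_2}, while the $v_i$ part becomes a double kernel sum in $\{x_k\varepsilon_k\}$ whose variance — using that $\{x_t\varepsilon_t\}$ is mean-zero and serially uncorrelated (Assumption \ref{asm:dgp}(b)) — is $O(1/(Th_1))$, so this term is again $O_p(T^{(\gamma_2-\gamma_1)/2})$ after scaling. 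Collecting the pieces gives $\sqrt{Th_2}R_{T,t}^*=O_p(T^{(\gamma_2-\gamma_1)/2})$, which is the stated dichotomy. All of the above — the maximal inequalities, the uniform invertibility of $\hat\Omega_i$, and the convergences of $\hat\Omega^*(r)$ and $V_T^*$ — holds on a $P$-event of probability tending to one, which is what ``with arbitrarily high probability for sufficiently large $T$'' refers to; on that event the conditional distribution is exactly the Gaussian above. The principal obstacle is the $(\hat\beta_i(\gamma_1)-\beta_{T,i})$-piece of $R_{T,t}^*$: obtaining the sharp $O_p((Th_1)^{-1/2})$ rate for the double kernel sum $(Th_2)^{-1}\sum_iK_i^{(2)}x_ix_i'\hat\Omega_i^{-1}v_i$ (rather than a $\sqrt{\log T}$-inflated bound coming from a crude maximal inequality) requires linearizing $\hat\Omega_i^{-1}$ about its deterministic limit and computing the covariance structure of the resulting reweighted average of $\{x_k\varepsilon_k\}$; the uniform consistency of $\hat\beta_i(\gamma_1)$ over the $h_2$-window needed for $V_T^*$ is the other technical burden.
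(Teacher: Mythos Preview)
Your approach is correct and differs from the paper's in two pleasant ways. First, you keep $\hat\varepsilon_i(\gamma_1)$ intact in the leading term and exploit that, conditionally on the data, $\sum_iK_i^{(2)}x_i\eta_i\hat\varepsilon_i(\gamma_1)$ is \emph{exactly} Gaussian, so the distributional convergence reduces to $\hat\Omega^*(r)^{-1}V_T^*\hat\Omega^*(r)^{-1}\stackrel{p}{\to}\Omega(r)^{-1}\Sigma(r)\Omega(r)^{-1}$ plus continuity of the normal CDF in its covariance. The paper instead splits $\hat\varepsilon_i(\gamma_1)=\varepsilon_i-x_i'(\hat\beta_i(\gamma_1)-\beta_{T,i})$, pushes the second piece into $R_{T,t}^*$ (so that $R_{T,t}^*$ is random under $P^*$), and then verifies Liapunov's condition for the resulting sum in $x_i\varepsilon_i\eta_i$. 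Your route is shorter here. Second, your $R_{T,t}^*$ is non-random under $P^*$, which makes the ``$O_{p^*}$ with arbitrarily high $P$-probability'' statement simply a $P$-probability statement, a harmless specialization of the theorem as stated.

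Where you make life harder than necessary is the third piece of $R_{T,t}^*$. The ``principal obstacle'' you flag---controlling $(Th_2)^{-1}\sum_iK_i^{(2)}x_ix_i'\hat\Omega_i^{-1}v_i$ via a double kernel sum, linearizing $\hat\Omega_i^{-1}$, and computing the covariance---is avoidable. The paper simply bounds this piece by $O_p(1)\cdot\max_{|i-t|\le\lfloor Th_2\rfloor}\|\hat\beta_i(\gamma_1)-\beta_{T,i}\|$ and then shows (their Lemmas A.5--A.6) that this maximum is $O_p((Th_1)^{-1/2})$, which immediately yields the $O_p(T^{(\gamma_2-\gamma_1)/2})$ dichotomy after scaling by $\sqrt{Th_2}$. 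The key point you may have missed is that under Assumption~\ref{asm:dgp_2}(a') the array $\{K((t_j-i)/Th_1)x_i\varepsilon_i/Th_1\}$ is an $L_2$-mixingale of size $-1$, so Hansen's maximal inequality (Lemma~2 of Hansen, 1991) gives
\[
E\!\left[\max_{|j|\le\lfloor Th_1\rfloor}\Bigl\|\tfrac{1}{Th_1}\sum_iK\bigl(\tfrac{t_j-i}{Th_1}\bigr)x_i\varepsilon_i\Bigr\|^2\right]=O\!\bigl((Th_1)^{-1}\bigr)
\]
\emph{without} any $\sqrt{\log T}$ inflation; the strengthened NED size in Assumption~\ref{asm:dgp_2} is exactly what buys this sharp rate. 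Since the $h_2$-window is contained in the $h_1$-window, the same bound covers $\max_i\|v_i\|$, and combined with $\max_i\|\hat\Omega_i^{-1}\|=O_p(1)$ (your Assumption~\ref{asm:min_eigen} argument) and Assumption~\ref{asm:uniform_negligible_2} for $\max_i\|b_i\|$, you get $\max_i\|\hat\beta_i(\gamma_1)-\beta_{T,i}\|=O_p((Th_1)^{-1/2})$ directly. So you can drop the double-sum computation and the linearization of $\hat\Omega_i^{-1}$ entirely.
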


\begin{rem}
    The statement of Theorem \ref{thm:bootstrap} still holds if $\beta_{T,t}$ has type-b discontinuities of size $1/T^{\alpha_1}$ with $\alpha_1\in A(\gamma_1)=((1+\gamma_1)/2, \infty)$.
\end{rem}

\begin{rem}
Algorithm \ref{algo:gamma_selection_cr} may reject a valid choice $\gamma_1\in\Gamma(\alpha)$ and suggest a conservative $\hat{\gamma}$ (undersmoothing) if the (bootstrap) distribution of $\hat{\beta}_t^*(\gamma_1,\gamma_1)-\hat{\beta}_t(\gamma_1)$ is poorly approximated by the normal distribution. There are two cases where this normal approximation is poor. First, if $T$ and $\gamma_1$ are small, the effective sample size can be quite small.\footnote{For example, if $\gamma_1=-1/2$, the effective sample size is as small as $2\lfloor Th\rfloor=28$ when $T=200$.} Second, the bias term, $R_{T,t}^*$, in Theorem \ref{thm:bootstrap} may be of the same order as the asymptotically normal part of $\hat{\beta}_t^*(\gamma_1,\gamma_1)-\hat{\beta}_t(\gamma_1)$ and distort the distribution of $\hat{\beta}_t^*(\gamma_1,\gamma_1)-\hat{\beta}_t(\gamma_1)$.\footnote{A solution in the second case would be to correct the bias term, $R_{T,t}^*$, but this requires an explicit formula for $R_{T,t}^*$, which seems not possible under the quite general smoothness condition, Definition \ref{def:tvp_alpha_class}. For example, bias formulae are typically derived assuming $\beta_{T,t}$ is twice continuously differentiable \citep[e.g.,][]{caiTrendingTimevaryingCoefficient2007,zhouSimultaneousInferenceLinear2010}. Bias correction in our general framework is left for future research.} While a conservative $\hat{\gamma}$ yields an asymptotically unbiased estimation, it causes an efficiency loss. In spite of such a limitation, Algorithm \ref{algo:gamma_selection_cr} leads to a more efficient estimation than the most conservative choice $\gamma=\underline{\gamma}$, at least when $T$ is sufficiently large; see the simulation results in Section \ref{subsec:mc_dd}.
\end{rem}

Once $\gamma$ is determined, one can select $c$ by minimizing some criterion that is a function of $c$. For instance, the cross-validation criterion, $\mathrm{CV}(c) \coloneqq T^{-1}\sum_{t=1}^T(y_t - x_t'\hat{\beta}_{-t}(c,\hat{\gamma}))^2$, where $\hat{\beta}_{-t}(c,\hat{\gamma})$ are the leave-one-out kernel estimators calculated under $h=cT^{\hat{\gamma}}$, can be used. Some other criteria may be used to determine $c$ such as the AIC as suggested in \citet{caiTrendingTimevaryingCoefficient2007} or the GCV considered in  \citet{zhouSimultaneousInferenceLinear2010} and \citet{zhangInferenceTimeVaryingRegression2012}.

\section{Monte Carlo Simulation}\label{sec:monte}
In this section, we conduct three Monte Carlo experiments to verify the implications provided in Section \ref{sec:bandwidth}. We use the following DGP: $y_t = \beta_{T,t}x_t + \varepsilon_{t}, \ t=1,\ldots,T$, where $x_t = 0.5x_{t-1} + \varepsilon_{x,t}$ with $\varepsilon_{x,t} \sim \mathrm{i.i.d.} \ N(0,1)$, and $\beta_{T,t}$ is defined differently in different experiments. For the specification of $\varepsilon_{t}$, we consider two cases: $\varepsilon_{t} = u_t$, where $u_t \sim \mathrm{i.i.d.} \ N(0,1)$ (i.i.d. case) and $\varepsilon_{t} = \sigma_tu_t$ with $\sigma_t^2 = 0.1 + 0.3\varepsilon_{y,t-1}^2 + 0.6\sigma_{t-1}^2$ (GARCH case). To obtain $\hat{\beta}_t$, we use the Epanechnikov kernel $K(x) = 0.75(1-x^2)I(|x|\leq 1)$.

\subsection{Simulation for Section \ref{subsec:rw_misspecify}} \label{subsec:mc_misspecify}

The first experiment is related to Section \ref{subsec:rw_misspecify}, and $\beta_{T,t}$ is generated as the rescaled random walk: $\beta_{T,t} = T^{-1/2}\sum_{i=1}^{t}v_i$. We consider two DGPs for driver process $v_t$: (i)$v_t\sim \mathrm{i.i.d.} \ N(0,1)$ and (ii) $v_t \sim \mathrm{i.i.d.} \ \mathrm{log \ normal}$ with parameters $\mu=0$ and $ \sigma = 1$.\footnote{Specifically, $X$ follows a log normal distribution if $X=\exp(Z)$, where $Z \sim N(\mu,\sigma^2)$.} Four sample sizes are used: $T \in \{100,200,400,800\}$. To evaluate the global performance of $\hat{\beta}_{t}$, we calculate $\mathrm{MSE} = T^{-1}\sum_{t=1}^{T}(\hat{\beta}_t - \beta_{T,t})^2$ (the reported MSE is the mean MSE over 2000 replications). To evaluate the normal approximation given in Corollary \ref{cor:bias_bandwidth}, we construct the $95\%$ confidence interval for $\beta_{T,0.5T}$ (the middle point of the sample). The variance estimators are $\hat{\Omega} \coloneqq T^{-1}\sum_{i=1}^{T}x_i^2$ and $\hat{\Sigma} \coloneqq \int_{-1}^{1}K(x)^2dx\times T^{-1}\sum_{i=1}^{T}\hat{\varepsilon}_i^2x_i^2$. We experiment with bandwidth parameter $h=T^{\gamma}$ and $\gamma \in \{-0.2,-0.33,-0.5,-0.55,-0.6,-0.7\}$, and evaluate the performance for each pair $(\gamma,T)$. We also analyze the performance of the data-driven selection procedures for $\gamma$ suggested in Section \ref{subsec:data-driven}.\footnote{For the cross-validation-based method, we select $\hat{\gamma}$ from $[-0.5,-0.2]$ using leave-three-out estimators. For the bootstrap-based method, we select $\hat{\gamma}$ from $\{-0.5,-0.4,-0.33,-0.2\}$ and set the tolerance level $\bar{q}=0.1$.} The results are presented in Tables \ref{tab:rw_sf_misspecify_iid} and \ref{tab:rw_sf_misspecify_garch}.

Because $\beta_{T,t}$ is the random walk divided by $\sqrt{T}$, our theoretical results predict that an appropriate bandwidth is $\gamma \approx -1/2$, while the kernel-based estimator leads to poor inference when $\gamma > -1/2$. Our simulation result corroborates this analysis. First, consider the case where $\varepsilon_{t} \sim \mathrm{i.i.d.} \ N(0,1)$ (Table \ref{tab:rw_sf_misspecify_iid}). In case (i) (Gaussian random-walk $\beta_{T,t}$), when $\gamma = -0.2$, the coverage rate is far below the 95\% confidence level. What is worse, it deviates from 0.95 as $T$ increases. Note that the MSE is relatively large. When $\gamma=-1/3$, the MSE takes the smallest value for all $T$ considered, but the coverage rate is still too small. This result warns researchers against using these bandwidths unless they are confident that $\beta_{T,t}$ can be well approximated by smooth functions with smoothness parameter $\alpha=1$. For $\gamma\leq -1/2$, the interval estimation performs well with coverage rate being 85-90\% and getting better as $T$ increases. However, $\gamma = -0.7$ leads to undercoverage when $T$ is small and the largest MSE for all $T$. $\gamma = -0.6$ also gives large MSEs. The choices $\gamma \approx -1/2$ lead to good coverage and small MSE, so that these choices are recommended for random-walk type parameters, or more generally, for time-varying parameters belonging to $\mathrm{TVP}(1/2)$ on the boundary.

Next consider the performance of $\gamma=\hat{\gamma}$ selected by data-dependent procedures. For the cross-validation method, the mean MSE is close to the smallest MSE attained by the deterministic choice of $\gamma=-0.33$, particularly when $T$ is large. On the other hand, the coverage ratio is far below the nominal level and takes values between 73\% and 78\%, although the coverage gradually improves as the sample size increases. For the bootstrap-based selection, the mean MSE and the coverage ratio is almost identical to those attained by the deterministic choice of $\gamma=-0.5$; that is, the MSE is relatively large when $T$ is small, but improves quickly as $T$ increases, and the coverage ratio is reasonably good. Based on these observations, the cross-validation method seems useful when a small MSE is desired, while the bootstrap-based method is a reasonable choice when unbiased estimation is prioritized.

The result for case (ii) (non-Gaussian random-walk $\beta_{T,t}$) is similar, so the same comment applies.

Results for the case where $\varepsilon_{t}$ is GARCH (Table \ref{tab:rw_sf_misspecify_garch}) are similar to those for the i.i.d case. Hence, we do not repeat the same analysis.

\subsection{Simulation for Section \ref{subsec:neglect_break}}\label{subsec:mc_neglected_break}

The second experiment is for verifying the implication provided in Section \ref{subsec:neglect_break}. In this simulation, we analyze the effect of (neglected) structural breaks. For this purpose, we generate $\beta_{T,t}$ according to $\beta_{T,t} = \mu_{T,t} + T^{-1/2}\sum_{i=1}^{t}v_i$, where $v_i \sim \mathrm{i.i.d.} \ N(0,1)$ and $\mu_{T,t}$ is an intercept term experiencing a break at $t=0.5T$. Specifically, we let
\begin{align}
    \mu_{T,t} = \begin{cases}
        0 & \mathrm{for} \ t=1, \ldots,0.5T \\
        2/T^\alpha & \mathrm{for} \ t=0.5T+1,\ldots,T
    \end{cases},
\end{align}
where $\alpha \in \{0.1,0.2,0.3,0.4\}$. A smaller $\alpha$ yields a larger break.

We consider estimating $\beta_{T,t}$ with the choice $h = T^{-1/2}$, reflecting the ignorance of the break. According to our theoretical analysis, the kernel-based estimator has a severe bias around $t=0.5T$ when $\alpha<0.25$, while breaks given by $\alpha>0.25$ have no effect asymptotically. To confirm this implication, we calculate the MSE and coverage rate of $\hat{\beta}_{t}$ for $t=\tau T$ with $\tau = 0.4,0.45,0.5,0.55,0.6$. The MSE is calculated for each $\tau$ as the mean squared error over 2000 replications, that is, $\mathrm{MSE}(\tau) = 2000^{-1}\sum_{i=1}^{2000}(\hat{\beta}_{\tau T}^{(i)} - \beta_{T,\tau T}^{(i)})^2$, where superscript $i$ signifies $\hat{\beta}_{\tau T}^{(i)}$ and $\beta_{T,\tau T}^{(i)}$ are obtained in the $i$th replication. We consider four sample sizes; (i) $T = 100$, (ii) $T=200$, (iii) $T=400$, and (iv) $T=800$. We use $\hat{\Omega}_t = (Th)^{-1}\sum_{i=1}^{T}K((t-i)/Th)x_i^2$ and $\hat{\Sigma}_t = (Th)^{-1}\sum_{i=1}^{T}K((t-i)/Th)^2\hat{\varepsilon}_i^2x_i^2$ as the variance estimators to evaluate the normal approximation given in Theorem \ref{thm:bias_bandwidth}. Results are reported in Tables \ref{tab:sb_ignore_iid} and \ref{tab:sb_ignore_garch}.

First, let us see the case of $\varepsilon_{t}$ being i.i.d. and $T=100$ (Table \ref{tab:sb_ignore_iid}, the row labeled (i)). The MSEs and coverage rates for $\tau = 0.4$ and $0.6$ are stable across $\alpha$. This is because the break only affects estimation around the discontinuity point, $t=0.5T$. The break has a severe effect on $\hat{\beta}_{\tau T}$ with $\tau=0.45,0.5,0.55$, both in terms of MSE and coverage. The smaller $\alpha$ is (i.e. the larger the break is), the worse the performance gets. Moreover, this effect is more profound for $\tau$ closer to $0.5$. In terms of the coverage rate, smaller breaks given by $\alpha \geq 0.25$ have a nonnegligible effect. This indicates that, although breaks of these magnitudes asymptotically have no impact, they do have nontrivial effects in finite samples.

For case (ii) ($T=200$), MSEs for $\tau=0.5$ and $\alpha<0.25$ are still large. Note that MSEs for $\tau = 0.45,0.55$ are comparable with those for $\tau = 0.4,0.6$. This is because the abrupt break affects $\hat{\beta}_{t}$ on the $Th$-neighborhood of the break date. Because $t=0.45T$ and $t=0.55T$ are outside the $Th$-neighborhood of $0.5T$, the performance of $\hat{\beta}_{\tau T}$ improves as $T$ increases for $\tau = 0.45,0.55$. $\hat{\beta}_{0.5T}$ also suffers from poor coverage for all $\alpha$. For the cases with $T=400,800$ (cases (iii) and (iv)), a similar comment applies. In particular, the coverage rates for $\alpha < 0.25$ and $\tau=0.5$ deteriorate as $T$ increases.

Examining the case with $\varepsilon_{t}$ being GARCH (see Table \ref{tab:sb_ignore_garch}), the same conclusion is drawn, so the detail is omitted.

\subsection{Balance between robustness and efficiency}\label{subsec:mc_dd}

In this subsection, we investigate the finite-sample performance of the data-driven bandwidth selection procedures in an environment where $\beta_{T,t}$ evolves smoothly but experiences a jump at some point. Specifically, we specify $\beta_{T,t}$ as $\beta_{T,t} = \beta(t/T)$, where $\beta(x) = x + \mu_T(x)$ with $\mu_T(x) = 0$ for $x\leq 0.5$ and $\mu_T(x) = 1.5/T^{0.4}$ for $x>0.5$. $\beta_{T,t}$ evolves smoothly and deterministically over time but experiences a break at the middle point. Recalling the theoretical analysis in Section \ref{subsec:neglect_break}, the break of size $T^{-0.4}$ can be accommodated as long as $\gamma < -0.2$. We analyze the finite-sample performance of $\hat{\beta}_{T,t}$ with $\gamma \in \{-0.2,-0.33,-0.5\}$ and $\gamma$ selected by the data-driven methods. We are interested in (i) whether the data-driven procedures can select $\gamma < -0.2$ (unbiasedness) and (ii) whether the selected $\gamma$ is close to $-0.2$ (efficiency). We study the mean MSE and the empirical coverage ratio at the break point, $t=0.5T$. The results are reported in Table \ref{tab:smooth_break_dd}. Because the results for the cases with i.i.d. error and GARCH error are qualitatively similar, we comment on the i.i.d. case only.

We first consider the deterministic $\gamma$. Although $\gamma=-0.2$ yields the smallest MSE, this choice results in undercoverage at the break point, as expected. For both choices $\gamma=-0.33,-0.5$, the coverage rate improves as $T$ increases, but $\gamma=-0.33$ gives a much smaller MSE. Next consider the data-dependent procedures. For the cross-validation method, the mean MSEs are almost identical to those obtained under $\gamma=-0.33$, but the empirical coverage rate is well below the nominal rate. For the bootstrap method, the mean MSEs and coverage rates are almost identical to those obtained by $\gamma=-0.5$ for $T\in\{100,200,400\}$. When $T=800$, however, the bootstrap-based procedure improves the MSE by about 20\% compared to $\gamma=-0.5$ while maintaining the same level of coverage rate.

\section{Empirical Application}\label{sec:empirics}

In this section, we apply kernel regression to estimate the time-varying CAPM.\footnote{The R code used for the empirical application is available on the author's website (\url{https://sites.google.com/view/mikihito-nishi/home}).} Parameter instabilities are widely observed in the CAPM literature
\citep[see][and refereces therein]{ghyselsStableFactorStructures1998,lewellenConditionalCAPMDoes2006,famaValuePremiumCAPM2006,angCAPMLongRun2007,angTestingConditionalFactor2012,guoTimeVaryingBetaValue2017}. We consider estimating the following factor model:
\begin{align}
    R_{j,t} = \alpha_{j,t} + \beta_{j,t}R_{M,t} + \varepsilon_t,
\end{align}
where $R_{j,t}$ denotes the excess return of portfolio $j$ at time $t$, and $R_{M,t}$ is the market excess return. The coefficients alpha and beta are allowed to be time-varying.

\subsection{Background}

In the CAPM literature, parameter instability is often modeled by letting parameters depend on observable instruments. But results drawn from this approach tend to be sensitive to the choice of instruments \citep{ghyselsStableFactorStructures1998}. To overcome this problem, researchers have proposed time-varying parameter models that do not utilize exogenous information.

Some assume that parameters experience abrupt changes, and others model parameter instability via the (near) random walk or smooth functions of time. For example, \citet{famaValuePremiumCAPM2006} and \citet{lewellenConditionalCAPMDoes2006} split the sample assuming that parameter changes occur based on calendar time (e.g., monthly or yearly), and apply OLS within subsamples. However, estimates obtained in this fashion suffer from bias if the timing of structural breaks is misspecified. \citet{angCAPMLongRun2007} use a Bayesian approach assuming (near) random walk alpha and beta. \citet{liTestingConditionalFactor2011} and \citet{angTestingConditionalFactor2012} treat the parameters as deterministic continuously differentiable functions of time.

Given the fact that continuously differentiable functions and the random walk can be estimated under $\gamma=-1/5$ and $\gamma=-1/2$, respectively, we set the lower and upper bounds for $\gamma$ as $\underline{\gamma}=-0.5$ and $\overline{\gamma}=-0.2$.

\subsection{Data}
All data are extracted from Kenneth French's website (\url{https://mba.tuck.dartmouth.edu/pages/faculty/ken.french/data_library.html}). Following \citet{liCombinedApproachInference2015}, we form three portfolios denoted by G, V, and G-V, respectively, from the 25 size-B/M portfolios. G is the average of the five portfolios in the lowest B/M quintile, V is the average of the five portfolios in the highest B/M quintile, and V-G is simply their difference. All the data are monthly, spanning 1952:1-2019:12 ($T=816$).

\subsection{Results}

We use the Epanechnikov kernel and $\hat{\Omega}_t = (Th)^{-1}\sum_{i=1}^{T}K((t-i)/Th)x_ix_i'$ and $\hat{\Sigma}_t = (Th)^{-1}\sum_{i=1}^{T}K((t-i)/Th)^2\hat{\varepsilon}_i^2x_ix_i'$ as the variance estimators, where $x_t = (1,R_{M,t})'$. To save space, we only discuss the result for portfolio V-G. The results for portfolios G and V are given in Appendix E.

\subsubsection{Selection of the bandwidth}

We determine two tuning parameters for the bandwidth, $h=cT^{\gamma}$, as explained in Section \ref{subsec:data-driven}. For Algorithm \ref{algo:gamma_selection_cr}, we construct 95\% bootstrap confidence intervals and set the tolerance level to be $\bar{q}=0.1$, giving the threshold of 90\% empirical coverage rate. 

First, we consider selecting $\gamma$. Figure \ref{fig:CV_gamma_vmg} depicts the CV criterion computed using leave-($2m+1$)-out estimators for $m=0,1,2$. For $m=0,1$, the minimum is attained at $\gamma=-0.5$, whereas $\gamma=-0.32$ is the minimizer when $m=2$. Since there is little reason to prefer some specific value of $m$ to other values, we also use Algorithm \ref{algo:gamma_selection_cr} to seek further evidence. Reported in Table \ref{tab:gamma_selection_cr_vmg} are the mean empirical coverage rates taken over $t=1,\ldots,T$, $\overline{\mathrm{CR}}(\gamma_1,\gamma_2)\coloneqq T^{-1}\sum_{t=1}^T\mathrm{CR}_t(\gamma_1,\gamma_2)$. Each empirical coverage rate is calculated using $200$ bootstrap samples. For $\gamma_1=-0.33$ and $\gamma_1=-0.4$, the empirical coverage rates exceed the threshold of 0.9 for all $\gamma_2\leq\gamma_1$, and hence $\gamma_1=-0.33$ is supported by this procedure. Given these results, we set $\hat{\gamma}=-0.33$ since both CV- and bootstrap-based procedures support this choice. It is noteworthy that $\gamma=-1/5$, a prevalent choice in the literature, is rejected by our selection algorithm. This result highlights the importance of including other $\gamma$ values in the set of candidate bandwidths.

Given $\gamma=\hat{\gamma}$, we determine scaling constant $c$ via cross-validation. The selected value, $\hat{c}$, is the minimizer of the cross-validation criterion $\mathrm{CV}(c)$ over $c\in \{0.5,0.55,\ldots,1.5\}$.

\subsubsection{Interval estimation}

In Figure \ref{fig:vmg_int_kernel}, we plot the estimated time-varying alpha and its 95\% confidence band.\footnote{This confidence band is obtained by sequentially calculating the pointwise 95\% confidence intervals and is not a uniform 95\% confidence band.} The estimated alpha fluctuates around the value zero throughout the sample period, and the confidence band includes zero at all time points. Figure \ref{fig:vmg_slope_kernel} depicts the estimated time-varying beta. It starts with a positive value that is significantly different from zero and then fluctuates around zero up to $t=300$. Then, it starts to decrease and stays below zero with the confidence band excluding zero. It starts to increase from $t=600$, and fluctuates around zero from $t=660$ toward the end of the sample.

\subsubsection{Comparison with the Bayesian estimate}

The CV-based selection procedure suggests that $\gamma=-1/2$ is partly supported by the data. Noting that this choice accommodates parameters following the (rescaled) random walk, and that random walk parameters are often estimated via Bayesian methods, it is interesting to compare the kernel-based estimates obtained from $h=\hat{c}T^{-1/2}$ with the estimates obtained from a Bayesian procedure in which parameters are assumed to be the random walk.

Let $\theta_{t} \coloneqq (\alpha_{t}, \beta_{t})'$. In the Bayesian method, we estimate the time-varying alpha and beta by using the Markov Chain Monte Carlo algorithm, assuming that $\theta_{t} = \theta_{t-1} + u_t$, where $u_t \sim N(0,D^2)$ with $D^2 = \mathrm{diag}(D_1^2,D_2^2)$.\footnote{For computation, we use the R package \texttt{walker} developed by \citet{helskeWalkerBayesianGeneralized2023}.} As the prior distributions for parameters $\theta_{0}$, $D$ and $\mathrm{Var}(\varepsilon_t)=\sigma_{\varepsilon}^2$, we suppose $\theta_{0} \sim N(\mu\mathbf{1}_2,\sigma^2\mathbf{I}_2)$, $D_i \sim \mathrm{Gamma}(v_1,v_2), \ i=1,2$, and $\sigma_\varepsilon \sim \mathrm{Gamma}(\nu_1,\nu_2)$. We consider three configurations of hyperparameters. For each configuration, $(\mu,\sigma,v_2,\nu_1,\nu_2)$ are set to $(\mu,\sigma,v_2,\nu_1,\nu_2)= (0,32,10^{-4},2,10^{-4})$. The value of $v_1$ is varied, and we set $v_1 = 1, 2,$ and $4$.\footnote{We also changed the values for $(\mu,\sigma,v_2,\nu_1,\nu_2)$, but the estimates were insensitive to these parameters.}

In Figure \ref{fig:vmg_comp}, we compare the Bayesian estimates with the kernel-based estimates with $h=\hat{c}T^{-1/2}$. For the estimated alpha (Figure \ref{fig:vmg_int_comp}), the trajectory obtained from the kernel method is more volatile (with a larger amplitude) than that obtained from the Bayesian algorithm, but the trajectories seem to share the same frequency. More striking is the similarity between the estimates of the time-varying beta. The estimated trajectories obtained from the two distinct methods are almost indistinguishable throughout the sample period, irrespective of the value of $v_1$. 

We also compare the quantitative performances of the kernel and Bayesian estimators in terms of the in-sample fit (SSR). Standardizing the SSR obtained from the kernel-based method to be 1, the relative SSR's for the Bayesian estimators with $v_1=1,2,$ and $4$ are 1.019, 1.000, and 0.953, respectively. The kernel estimator yields a comparable in-sample fit relative to the Bayesian estimator. 

\section{Conclusion}\label{sec:conclusion}

We studied kernel-based estimation of time-varying parameters over a wide range of smoothness. We set up a general framework that quantifies the smoothness of the time-varying parameter by a single parameter $\alpha$, and established consistency and asymptotic normality of the kernel-based estimator under this framework. The results cover many important time-varying parameter models, including continuously differentiable functions, the rescaled random walk, abrupt structural breaks, the threshold regression model, and their mixtures.

Our analysis also highlights an often-overlooked role of the bandwidth and its implications on bandwidth selection. Beyond the bias-variance trade-off, when the parameter may be nondifferentiable, the bandwidth determines a trade-off between the rate of convergence and the size of the class of time-varying parameters that can be estimated. Theory and simulations show that the appropriate bandwidth rate depends on the smoothness of the time-varying parameter. In particular, a conventional $T^{-1/5}$-rate bandwidth is invalid in the case of nondifferentiable time-varying parameters such as the random walk. Another important implication from our result is that abrupt breaks of certain magnitudes cause bias in the kernel-based estimation.

Taking into account the diversity of existing time-varying parameter models, we proposed a data-dependent bandwidth selection procedure that adapts to unknown smoothness of the time-varying parameter. Monte Carlo experiments and an application to the time-varying CAPM suggest that the proposed method serves as a unified approach to estimating a variety of time-varying parameter models.

\bibliographystyle{standard}

\bibliography{TVP_Linear}

\clearpage
\begin{table} \caption{Mean MSE and coverage rate (CR) when $\varepsilon_{t}$ is i.i.d.}
	\centering
	\begin{threeparttable}
		\renewcommand{\arraystretch}{1.3}	\begin{tabular}{ccc@{\hskip 15pt}c@{\hskip 15pt}c@{\hskip 15pt}c@{\hskip 28pt}c@{\hskip 15pt}c@{\hskip 15pt}c@{\hskip 15pt}c}
			\hline & $\gamma$ & \multicolumn{3}{c}{MSE} & & \multicolumn{3}{r}{CR ($t=0.5T$)} & \\
			\hline  &  & $T$ & \multicolumn{2}{}{} & & $T$ & \multicolumn{2}{}{} & \\
             \cline{3-10} & & 100 & 200 & 400 & 800 & 100 & 200 & 400 & 800 \\
             \hline \multirow{6}{*}{(i)} & \multicolumn{1}{l}{-0.2} & 0.069 &	0.055 &	0.043 &	0.036 &	0.626 &	0.538 &	0.461 &	0.395  \\ 
             &\multicolumn{1}{l}{-0.33} &0.056 &	0.039 &	0.027 	&0.019 &	0.777 	&0.746 	&0.734 &	0.709 \\
             &\multicolumn{1}{l}{-0.5} &0.073 &	0.048 &	0.032 &	0.022 &	0.850 &	0.853 &	0.874 &	0.899 \\
             &\multicolumn{1}{l}{-0.55} &0.087 &	0.058 &	0.040 &	0.028 &	0.842 &	0.876 &	0.886 &	0.914  \\
             & \multicolumn{1}{l}{-0.6} &0.107 &	0.074 &	0.053 &	0.038 &	0.837 &	0.866 &	0.884 &	0.910 \\
             &\multicolumn{1}{l}{-0.7} &0.198 &	0.138 &	0.103 &	0.077 &	0.792 &	0.835& 	0.848 &	0.872 \\
             & \multicolumn{1}{l}{CV} &0.062 &0.042 & 0.028 & 0.019 & 0.746 & 0.733 & 0.759 & 0.780  \\
             &\multicolumn{1}{l}{Boot} &0.073 & 0.048 & 0.032 & 	0.022 &	0.850 &	0.853 & 0.874 &	0.881  \\
             \hline \multirow{6}{*}{(ii)} & \multicolumn{1}{l}{-0.2} &0.070 &	0.054 &	0.044 &	0.036 &	0.628 &	0.553 &	0.460 &	0.373  \\ 
             &\multicolumn{1}{l}{-0.33} &0.056 &	0.039 &	0.027 &	0.020 &	0.790 &	0.771 	&0.736& 	0.696  \\ 
             &\multicolumn{1}{l}{-0.5} &0.073 &	0.048 &	0.032 &	0.022 &	0.853 &	0.865 	&0.877 &	0.906 \\
             &\multicolumn{1}{l}{-0.55} &0.087 &	0.058 &	0.040 &	0.028 &	0.852 &	0.874 	&0.881 &	0.906 \\
             & \multicolumn{1}{l}{-0.6} &0.107 &	0.074 &	0.053 &	0.038 &	0.843 &	0.875 &	0.882 &	0.903 \\
             &\multicolumn{1}{l}{-0.7} &0.198 &	0.138 &	0.103 &	0.077 &	0.791 &	0.828 	&0.850 &	0.865  \\
             & \multicolumn{1}{l}{CV} &0.062& 0.042 & 0.028 & 0.019 & 0.750 & 0.748 & 0.769 & 0.780   \\
             &\multicolumn{1}{l}{Boot} &0.073 & 0.048 & 0.032 & 0.022 & 0.853 & 0.865 & 0.875 & 0.881   \\
             \hline 
		\end{tabular}
		\begin{tablenotes}
			\footnotesize
			\item[]Note: $\beta_{T,t} = T^{-1/2} \sum_{i=1}^{T}v_i$, where $v_i \sim \mathrm{i.i.d.} \ N(0,1)$ for case (i) and $v_i$ is log-normally distributed with $\mu=0, \ \sigma=1$ for case (ii). $\hat{\beta}_t$ is calculated using bandwidth parameter $h=T^{\gamma}$. The rows labeled ``CV" and ``Boot" signify the results for cross-validation-based and bootstrap-based selections, respectively 
		\end{tablenotes}
	\end{threeparttable} \label{tab:rw_sf_misspecify_iid}
\end{table}

\clearpage
\begin{table} \caption{Mean MSE and coverage rate (CR) when $\varepsilon_{t}$ is GARCH}
	\centering
	\begin{threeparttable}
		\renewcommand{\arraystretch}{1.3}	\begin{tabular}{ccc@{\hskip 15pt}c@{\hskip 15pt}c@{\hskip 15pt}c@{\hskip 28pt}c@{\hskip 15pt}c@{\hskip 15pt}c@{\hskip 15pt}c}
			\hline & $\gamma$ & \multicolumn{3}{c}{MSE} & & \multicolumn{3}{r}{CR ($t=0.5T$)} & \\
			\hline  &  & $T$ & \multicolumn{2}{}{} & & $T$ & \multicolumn{2}{}{} & \\
             \cline{3-10} & & 100 & 200 & 400 & 800 & 100 & 200 & 400 & 800 \\
             \hline \multirow{6}{*}{(i)} & \multicolumn{1}{l}{-0.2} &0.070 &	0.054 &	0.043 &	0.036 &	0.603 &	0.529 	&0.460 	&0.394  \\ 
             &\multicolumn{1}{l}{-0.33} &0.057 &	0.039 &	0.027& 	0.019 &	0.753 &	0.736 &	0.727 &	0.702  \\ 
             &\multicolumn{1}{l}{-0.5} &0.074 &	0.048 &	0.032 &	0.022 &	0.847 &	0.866 	&0.886 &	0.900   \\
             &\multicolumn{1}{l}{-0.55} &0.090 &	0.059 &	0.040 &	0.028 &	0.847 &	0.878 &	0.893 &	0.912  \\
             & \multicolumn{1}{l}{-0.6} &0.111 &	0.075 &	0.053 &	0.038 &	0.844 &	0.878 &	0.889 &	0.918  \\
             &\multicolumn{1}{l}{-0.7} &0.206 &	0.141 &	0.104 &	0.077 &	0.819 	&0.855 &	0.868 &	0.890  \\
             & \multicolumn{1}{l}{CV} &0.063 & 0.042 & 0.028 & 0.019 & 0.738 & 0.743 & 0.767 & 0.777    \\
             &\multicolumn{1}{l}{Boot} &0.074 & 0.048 & 0.032 & 0.021 & 0.847 & 0.866 & 0.884 & 0.883    \\
             \hline \multirow{6}{*}{(ii)} & \multicolumn{1}{l}{-0.2} &0.070 &	0.054 &	0.044 &	0.036 &	0.604 &	0.525 &	0.459 &	0.376  \\ 
             &\multicolumn{1}{l}{-0.33} &0.057 &	0.039 &	0.027 &	0.019 &	0.770 &	0.759 &	0.725 &	0.695  \\ 
             &\multicolumn{1}{l}{-0.5} &0.074 &	0.048 &	0.032 &	0.022 &	0.862 &	0.874 &	0.881 &	0.903  \\
             &\multicolumn{1}{l}{-0.55} &0.089 &	0.059 &	0.040 &	0.028 &	0.858 	&0.879& 	0.892 &	0.906  \\
             & \multicolumn{1}{l}{-0.6} &0.111 &	0.075 &	0.053 &	0.038 &	0.855 &	0.885 &	0.887 &	0.909  \\
             &\multicolumn{1}{l}{-0.7} &0.206 &	0.141 &	0.104 &	0.077 &	0.815 &	0.850 	&0.863 &	0.884  \\
             & \multicolumn{1}{l}{CV} &0.062 & 0.042 & 0.028 & 0.019 & 0.751 & 0.754 & 0.770 & 0.783    \\
             &\multicolumn{1}{l}{Boot} &0.074 & 0.048 & 0.032 & 0.021 & 0.862 & 0.874 & 0.878 & 0.888    \\
             \hline 
		\end{tabular}
		\begin{tablenotes}
			\footnotesize
			\item[]Note: $\beta_{T,t} = T^{-1/2} \sum_{i=1}^{T}v_i$, where $v_i \sim \mathrm{i.i.d.} \ N(0,1)$ for case (i) and $v_i$ is log-normally distributed with $\mu=0, \ \sigma=1$ for case (ii). $\hat{\beta}_t$ is calculated using bandwidth parameter $h=T^{\gamma}$. The rows labeled ``CV" and ``Boot" signify the results for cross-validation-based and bootstrap-based selections, respectively 
		\end{tablenotes}
	\end{threeparttable} \label{tab:rw_sf_misspecify_garch}
\end{table}

\clearpage
\begin{table} \caption{MSE and coverage rate when $\varepsilon_{t}$ is i.i.d}
	\centering
	\begin{threeparttable}
		\renewcommand{\arraystretch}{1.3}	\begin{tabular}{ccc@{\hskip 12pt}c@{\hskip 13pt}c@{\hskip 13pt}c@{\hskip 13pt}c@{\hskip 26pt}c@{\hskip 13pt}c@{\hskip 13pt}c@{\hskip 13pt}c@{\hskip 13pt}c}
			\hline & $\alpha$ & \multicolumn{4}{c}{MSE} & & \multicolumn{4}{c}{Coverage Rate} & \\
			\hline  &  & $\tau$ & \multicolumn{3}{}{} & & $\tau$ & \multicolumn{3}{}{} & \\
             \cline{3-12} & & 0.4 & 0.45 & 0.5 & 0.55 &0.6 & 0.4 & 0.45 & 0.5 & 0.55 & 0.6 \\
             \hline \multirow{4}{*}{(i)}& \multicolumn{1}{l}{0.1} &0.07& 	0.11& 	0.44 &	0.16 	&0.07 &	0.84 &	0.80 	&0.36 &	0.72 &	0.85   \\ 
             &\multicolumn{1}{l}{0.2} &0.07 &	0.08 &	0.22 &	0.10 &	0.07 &	0.84 &	0.82 &	0.54 &	0.77 &	0.84   \\
             &\multicolumn{1}{l}{0.3} &0.07 &	0.07 &	0.13& 	0.08 &	0.07 &	0.83 &	0.84 	&0.69 &	0.80 &	0.84 \\
             & \multicolumn{1}{l}{0.4} &0.07 &	0.07 &	0.09 	&0.07 &	0.07 &	0.83 &	0.84 &	0.77 	&0.81 &	0.83 \\
             \hline \multirow{4}{*}{(ii)}& \multicolumn{1}{l}{0.1} &0.04 &	0.05 &	0.39 &	0.06 &	0.05 	&0.86 &	0.86 &	0.30 &	0.83 &	0.85   \\ 
             &\multicolumn{1}{l}{0.2} &0.04 &	0.04 &	0.17 &	0.05 &	0.05 &	0.86 &	0.86 &	0.52 &	0.84 	&0.85 \\
             &\multicolumn{1}{l}{0.3} &0.04 &	0.04 &	0.09 &	0.05 &	0.05 &	0.86& 	0.87 &	0.70& 	0.85 &	0.85 \\
             & \multicolumn{1}{l}{0.4} &0.04 &	0.04 	&0.06 &	0.05 &	0.05 &	0.86 &	0.87 	&0.79 	&0.85 &	0.85 \\
             \hline \multirow{4}{*}{(iii)}& \multicolumn{1}{l}{0.1} &0.03 &	0.03 &	0.33 &	0.03 &	0.03 &	0.87 &	0.87 &	0.21 &	0.88 	&0.86 \\ 
             &\multicolumn{1}{l}{0.2} &0.03 &	0.03 &	0.12 &	0.03 &	0.03 &	0.87 &	0.86 &	0.51 &	0.87 &	0.86 \\
             &\multicolumn{1}{l}{0.3} &0.03 &	0.03 &	0.06 	&0.03 &	0.03 &	0.87 	&0.86 	&0.74 &	0.87 &	0.86 \\
             & \multicolumn{1}{l}{0.4} &0.03 &	0.03 &	0.04 &	0.03 &	0.03 &	0.87 	&0.86 &	0.82 	&0.87 &	0.86 \\
             \hline \multirow{4}{*}{(iv)}& \multicolumn{1}{l}{0.1} &0.02 &	0.02 &	0.29 &	0.02 &	0.02 &	0.88 &	0.89 	&0.13 &	0.87 &	0.88 \\ 
             &\multicolumn{1}{l}{0.2} &0.02 &	0.02 &	0.09 &	0.02 &	0.02 &	0.88 &	0.89 &	0.48 &	0.87 &	0.88 \\
             &\multicolumn{1}{l}{0.3} &0.02 &	0.02 &	0.04 &	0.02 &	0.02 	&0.88 &	0.89 	&0.75 &	0.87 &	0.88 \\
             & \multicolumn{1}{l}{0.4} &0.02 	&0.02 &	0.03 &	0.02 &	0.02 &	0.88 &	0.89 	&0.85 &	0.87 &	0.88 \\
             \hline 
		\end{tabular}
		\begin{tablenotes}
			\footnotesize
			\item[]Note: $\beta_{T,t} = \mu_{T,t} + T^{-1/2} \sum_{i=1}^{t}v_i$, where $\mu_{T,t} = 0$ for $t\leq 0.5T$ and $\mu_{T,t}=2/T^{\alpha}$ for $t>0.5T$, and $v_i \sim \mathrm{i.i.d.} \ N(0,1)$. $\beta_{T,t}$ with $t=\tau T$ is estimated using bandwidth $h = T^{-0.5}$. The sample size is $T=100$ for case (i), $T=200$ for case (ii) $T=400$ for case (iii), and $T=800$ for case (iv).
		\end{tablenotes}
	\end{threeparttable} \label{tab:sb_ignore_iid}
\end{table}

\clearpage
\begin{table} \caption{MSE and coverage rate when $\varepsilon_{t}$ is GARCH}
	\centering
	\begin{threeparttable}
		\renewcommand{\arraystretch}{1.3}	\begin{tabular}{ccc@{\hskip 12pt}c@{\hskip 13pt}c@{\hskip 13pt}c@{\hskip 13pt}c@{\hskip 26pt}c@{\hskip 13pt}c@{\hskip 13pt}c@{\hskip 13pt}c@{\hskip 13pt}c}
			\hline & $\alpha$ & \multicolumn{4}{c}{MSE} & & \multicolumn{4}{c}{Coverage Rate} & \\
			\hline  &  & $\tau$ & \multicolumn{3}{}{} & & $\tau$ & \multicolumn{3}{}{} & \\
             \cline{3-12} & & 0.4 & 0.45 & 0.5 & 0.55 &0.6 & 0.4 & 0.45 & 0.5 & 0.55 & 0.6 \\
             \hline \multirow{4}{*}{(i)}& \multicolumn{1}{l}{0.1} &0.07 &	0.11 &	0.44 	&0.15 &	0.06 	&0.82 &	0.77 &	0.32 &	0.68 &	0.84 \\ 
             &\multicolumn{1}{l}{0.2} &0.07& 	0.09 &	0.22 	&0.10 &	0.06 	&0.81 &	0.79 &	0.48 &	0.74 &	0.82 \\
             &\multicolumn{1}{l}{0.3} &0.07 &	0.08 &	0.13 &	0.08 &	0.06 &	0.80 &	0.80 &	0.63 &	0.79 &	0.82\\
             & \multicolumn{1}{l}{0.4} &0.07 &	0.07 	&0.09 	&0.07 &	0.06 &	0.80 	&0.81 	&0.73 &	0.79 &	0.82 \\
             \hline \multirow{4}{*}{(ii)}& \multicolumn{1}{l}{0.1} &0.05 &	0.05 &	0.40 &	0.06 &	0.04 &	0.84 &	0.85 &	0.26 &	0.82 &	0.84\\ 
             &\multicolumn{1}{l}{0.2} &0.05 &	0.04 &	0.17 &	0.05 &	0.04 	&0.84 &	0.84 &	0.46 &	0.83 	&0.84\\
             &\multicolumn{1}{l}{0.3} &0.05 &	0.04 &	0.09 &	0.05 &	0.04 &	0.84 &	0.84 &	0.65 &	0.83 	&0.84 \\
             & \multicolumn{1}{l}{0.4} &0.05 &	0.04 &	0.07 &	0.05 &	0.04 &	0.84 &	0.85 &	0.75 &	0.83 &	0.84 \\
             \hline \multirow{4}{*}{(iii)}& \multicolumn{1}{l}{0.1} &0.03 	&0.03 	&0.34 &	0.03 &	0.03 &	0.84 &	0.86 &	0.18 &	0.87 &	0.85 \\ 
             &\multicolumn{1}{l}{0.2} &0.03& 	0.03 &	0.13 &	0.03 &	0.03 &	0.84 &	0.85 &	0.44 &	0.86 	&0.85 \\
             &\multicolumn{1}{l}{0.3} &0.03& 	0.03& 	0.06 &	0.03 &	0.03 &	0.84 &	0.85 &	0.68 &	0.86 &	0.85 \\
             & \multicolumn{1}{l}{0.4} &0.03 &	0.03 &	0.04 &	0.03 &	0.03 &	0.84 &	0.85 &	0.79 &	0.86 &	0.85 \\
             \hline \multirow{4}{*}{(iv)}& \multicolumn{1}{l}{0.1} &0.02 &	0.02 &	0.29 &	0.02& 	0.02 &	0.86 &	0.88 &	0.12 &	0.86 &	0.86 \\ 
             &\multicolumn{1}{l}{0.2} &0.02 &	0.02 &	0.09 &	0.02 &	0.02 	&0.86 	&0.88 &	0.41 &	0.86 &	0.86 \\
             &\multicolumn{1}{l}{0.3} &0.02 &	0.02 &	0.04 &	0.02 &	0.02 &	0.86 &	0.88 &	0.69 &	0.86 &	0.86 \\
             & \multicolumn{1}{l}{0.4} &0.02 &	0.02 &	0.02 &	0.02 &	0.02 &	0.86 &	0.88 &	0.82 &	0.86 &	0.86 \\
             \hline 
		\end{tabular}
		\begin{tablenotes}
			\footnotesize
			\item[]Note: $\beta_{T,t} = \mu_{T,t} + T^{-1/2} \sum_{i=1}^{t}v_i$, where $\mu_{T,t} = 0$ for $t\leq 0.5T$ and $\mu_{T,t}=2/T^{\alpha}$ for $t>0.5T$, and $v_i \sim \mathrm{i.i.d.} \ N(0,1)$. $\beta_{T,t}$ with $t=\tau T$ is estimated using bandwidth $h = T^{-0.5}$. The sample size is $T=100$ for case (i), $T=200$ for case (ii) $T=400$ for case (iii), and $T=800$ for case (iv).
		\end{tablenotes}
	\end{threeparttable} \label{tab:sb_ignore_garch}
\end{table}

\clearpage
\begin{table} \caption{Mean MSE and coverage rate (CR) for the case of a smooth function with a jump}
	\centering
	\begin{threeparttable}
		\renewcommand{\arraystretch}{1.3}	\begin{tabular}{ccc@{\hskip 15pt}c@{\hskip 15pt}c@{\hskip 15pt}c@{\hskip 28pt}c@{\hskip 15pt}c@{\hskip 15pt}c@{\hskip 15pt}c}
			\hline & $\gamma$ & \multicolumn{3}{c}{MSE} & & \multicolumn{3}{r}{CR ($t=0.5T$)} & \\
			\hline  &  & $T$ & \multicolumn{2}{}{} & & $T$ & \multicolumn{2}{}{} & \\
             \cline{3-10} & & 100 & 200 & 400 & 800 & 100 & 200 & 400 & 800 \\
             \hline \multirow{5}{*}{i.i.d error} & \multicolumn{1}{l}{-0.2} &0.025 & 0.014 & 0.008 &	0.004 &	0.758 &	0.751 &	0.791 &	0.785   \\ 
             &\multicolumn{1}{l}{-0.33} &0.029 &	0.017 &	0.010 &	0.006 &	0.817 &	0.833 &	0.862 &	0.878  \\
             &\multicolumn{1}{l}{-0.5} &0.059 &	0.038 &	0.026 &	0.018 &	0.830 &	0.853 &	0.882 &	0.899  \\
             & \multicolumn{1}{l}{CV} &0.034 &	0.018 &	0.010 &	0.006 &	0.774 &	0.765 &	0.807 &	0.809   \\
             &\multicolumn{1}{l}{Boot} &0.059 &	0.038 &	0.025 &	0.014 &	0.830 &	0.853 &	0.883 &	0.896   \\
             \hline \multirow{5}{*}{GARCH error} & \multicolumn{1}{l}{-0.2} &0.025 &	0.014 &	0.007 &	0.004 &	0.709 &	0.720 &	0.770 &	0.770   \\ 
             &\multicolumn{1}{l}{-0.33} &0.029 &	0.017 &	0.010 &	0.006 &	0.774 &	0.801 &	0.844 &	0.861   \\ 
             &\multicolumn{1}{l}{-0.5} &0.060 &	0.038 &	0.025 &	0.017 &	0.801 &	0.825 &	0.874 &	0.884  \\
             & \multicolumn{1}{l}{CV} &0.035 &	0.020 &	0.011 &	0.006 &	0.731 &	0.743 &	0.798 &	0.804    \\
             &\multicolumn{1}{l}{Boot} &0.060 &	0.038 &	0.025 &	0.014 &	0.801 &	0.826 &	0.874 &	0.882    \\
             \hline 
		\end{tabular}
		\begin{tablenotes}
			\footnotesize
			\item[]Note: $\beta_{T,t} = \beta(t/T)$, where $\beta(x) = x + \mu_T(x)$ with $\mu_T(x) = 0$ for $x\leq 0.5$ and $\mu_T(x) = 1.5/T^{0.4}$ for $x>0.5$. $\hat{\beta}_t$ is calculated using bandwidth parameter $h=T^{\gamma}$. The rows labeled ``CV" and ``Boot" signify the results for cross-validation-based and bootstrap-based selections, respectively 
		\end{tablenotes}
	\end{threeparttable} \label{tab:smooth_break_dd}
\end{table} 

\clearpage
\begin{table} \caption{Mean empirical coverage rates of 95\% bootstrap confidence intervals for V-G}
	\centering
	\begin{threeparttable}
		\renewcommand{\arraystretch}{1.3}	\begin{tabular}{ccc@{\hskip 21pt}c@{\hskip 21pt}c@{\hskip 21pt}c}
			\hline  &  & \multicolumn{4}{c}{$\gamma_2$} \\
             \cline{3-6}& & -0.2 & -0.33 & -0.4 & -0.5 \\
             \hline \multirow{5}{*}{$\gamma_1$}& \multicolumn{1}{l}{-0.2} &0.861&	0.936&	0.937&	0.929 \\ 
             &\multicolumn{1}{l}{-0.33} &-	&0.917	&0.923	&0.917 \\
             &\multicolumn{1}{l}{-0.4} &-	&-	&0.902	&0.918\\
             & \multicolumn{1}{l}{-0.5} &-	&-	&-	&0.889\\
             \hline 
		\end{tabular}
		\begin{tablenotes}
			\footnotesize
			\item[]Note: Each entry denotes the mean empirical coverage rate of the 95\% bootstrap confidence intervals for $(\hat{\alpha}_{j,t}(\gamma_1), \hat{\beta}_{j,t}(\gamma_1))$ based on $(\hat{\alpha}_{j,t}^*(\gamma_1,\gamma_2), \hat{\beta}_{j,t}^*(\gamma_1,\gamma_2))$ taken over $t=1,\ldots,T$: $\overline{\mathrm{CR}}(\gamma_1,\gamma_2)= T^{-1}\sum_{t=1}^T\mathrm{CR}_t(\gamma_1,\gamma_2)$.
		\end{tablenotes}
	\end{threeparttable} \label{tab:gamma_selection_cr_vmg}
\end{table}

\clearpage
\begin{figure}
    \centering
    \includegraphics[width=0.85\textwidth]{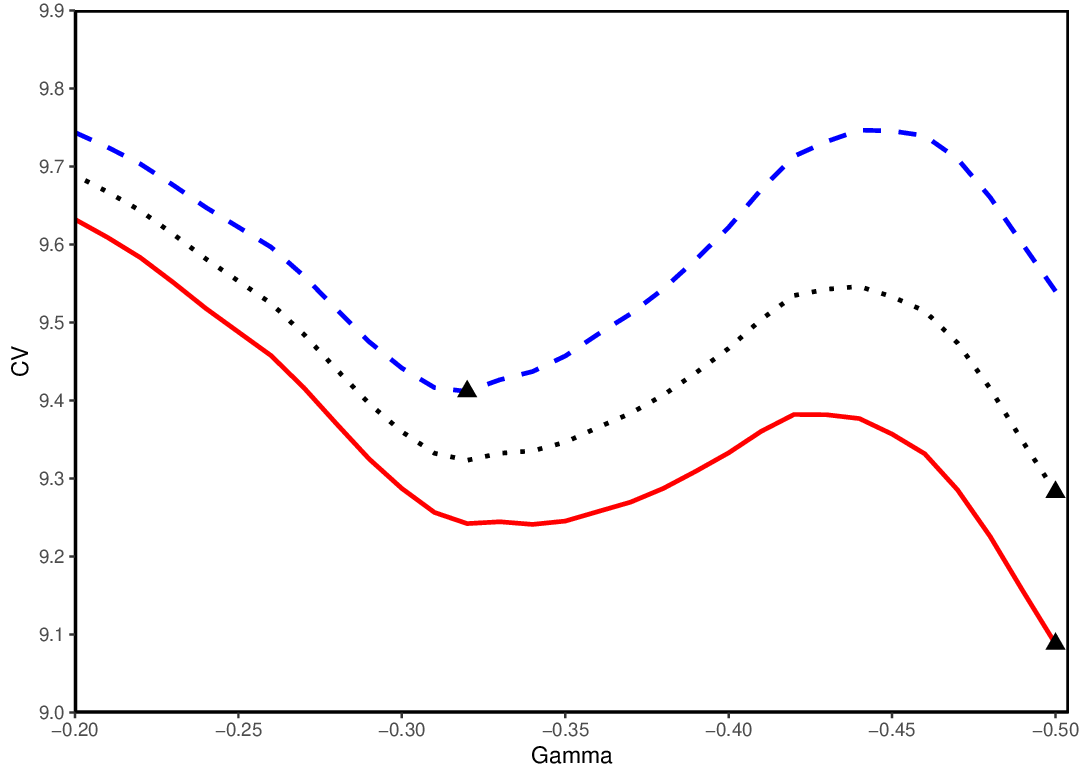}
    \caption{Cross-validation criteria calculated using leave-$(2m+1)$-out estimators with $h=T^{\gamma}$, for V-G}
    \caption*{\color{red}\full\color{black}: $m=0$, \quad \color{black}\dotted\color{black}: $m=1$, \quad \color{blue}\dashed\color{black}: $m=2$, \quad $\blacktriangle$: Minimum}
    \label{fig:CV_gamma_vmg}
\end{figure}

\clearpage
\begin{figure}
    \centering
    \begin{subfigure}{0.52\textheight}
        \includegraphics[width=\columnwidth]{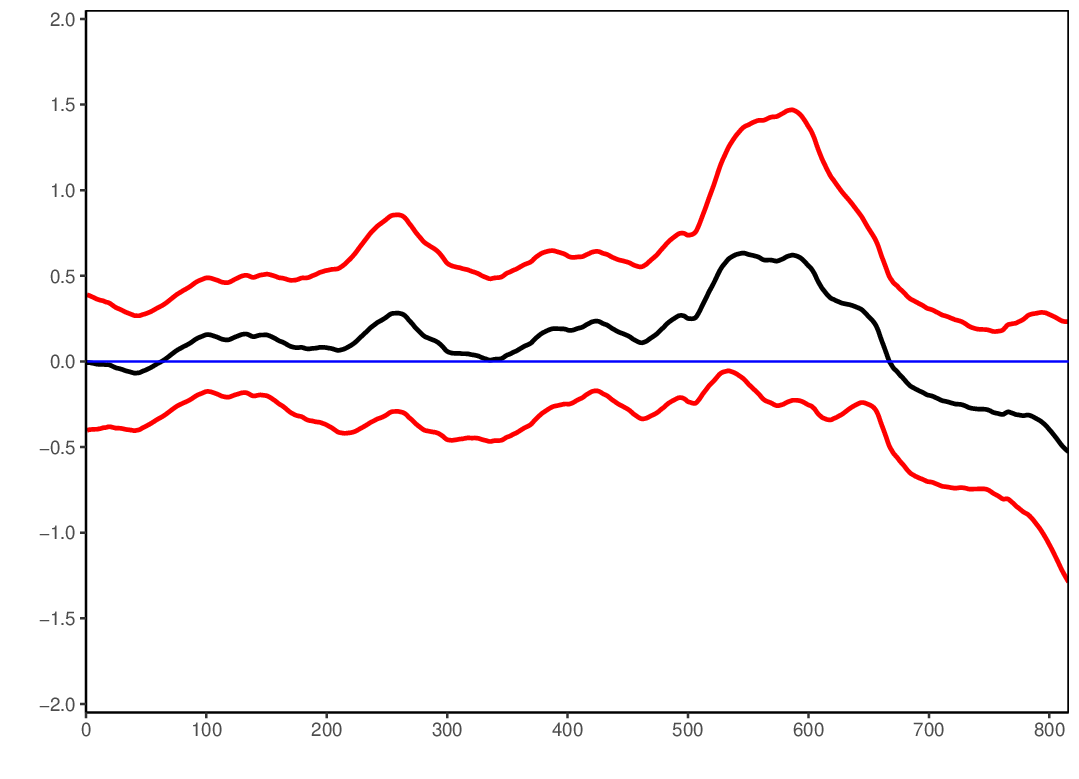}
    \caption{Plot of the time-varying alpha}
	\label{fig:vmg_int_kernel}
    \end{subfigure}
    \begin{subfigure}{0.52\textheight}
        \includegraphics[width=\columnwidth]{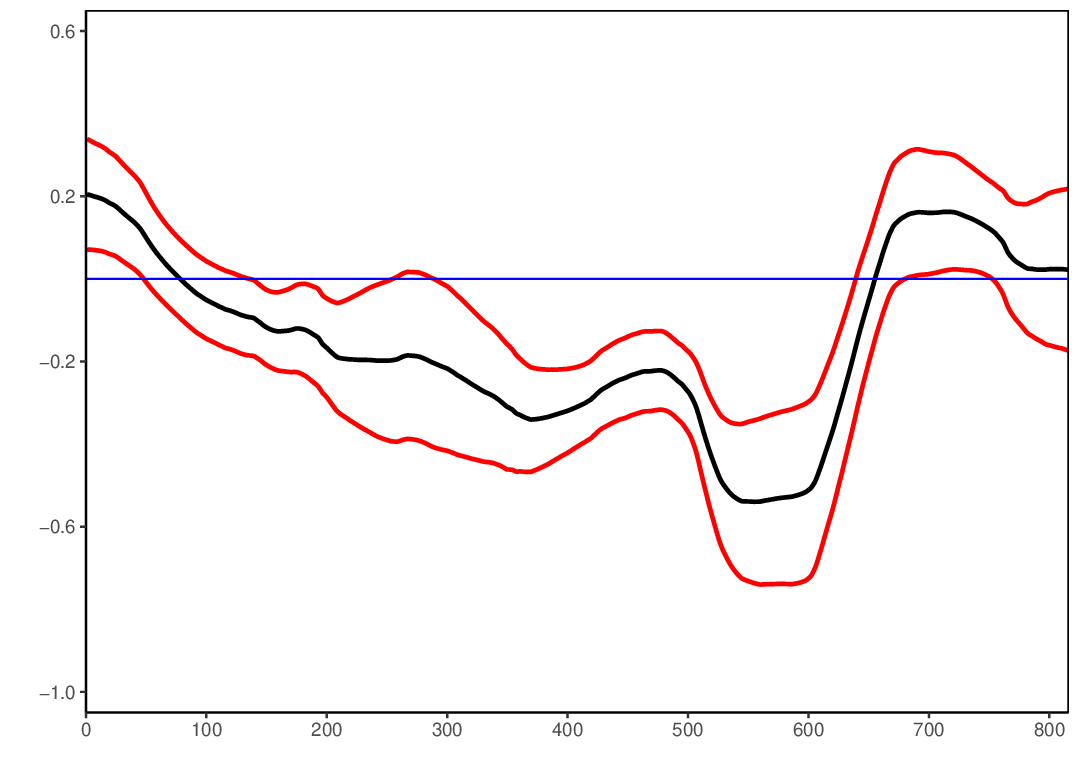}
    \caption{Plot of the time-varying beta}
	\label{fig:vmg_slope_kernel}
    \end{subfigure}
    \caption{Estimates and 95\% confidence band from the kernel-based method ($h=\hat{c}T^{-1/3}$) for V-G}
    (Horizontal lines in (a) and (b) indicate the value zero.)
    \label{fig:vmg}
\end{figure}

\clearpage
\begin{figure}
    \centering
    \begin{subfigure}{0.56\textheight}
        \includegraphics[width=\columnwidth]{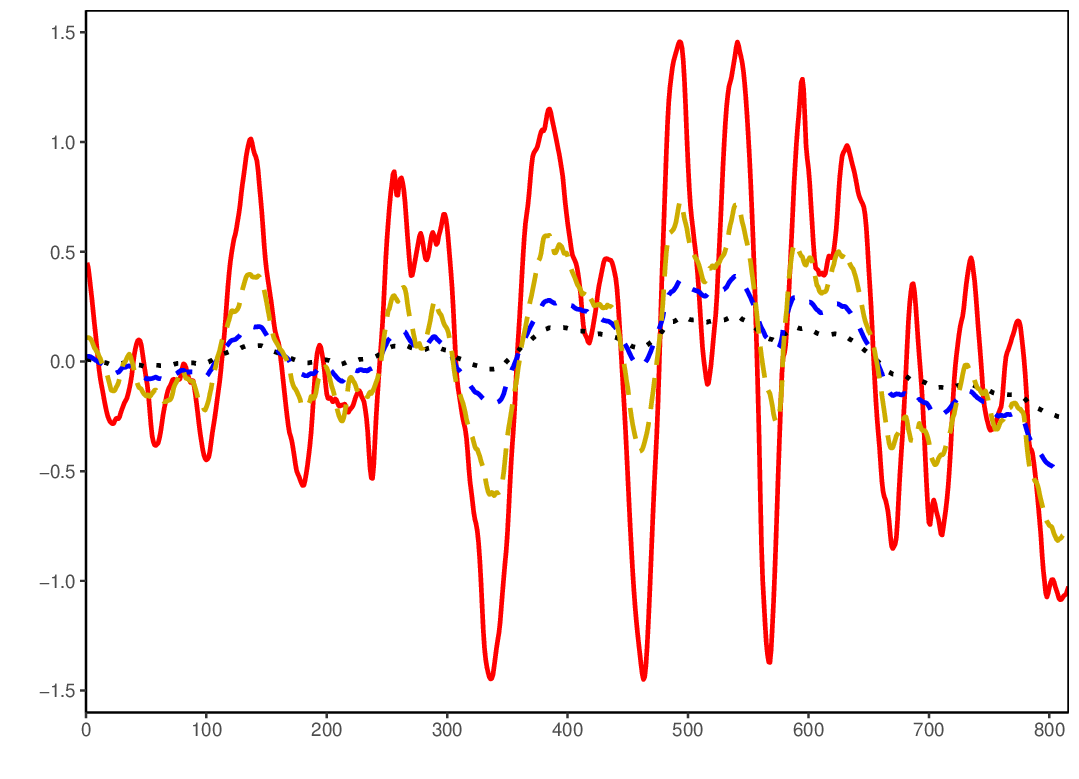}
    \caption{Plot of the time-varying alpha}
	\label{fig:vmg_int_comp}
    \end{subfigure}
    \begin{subfigure}{0.56\textheight}
        \includegraphics[width=\columnwidth]{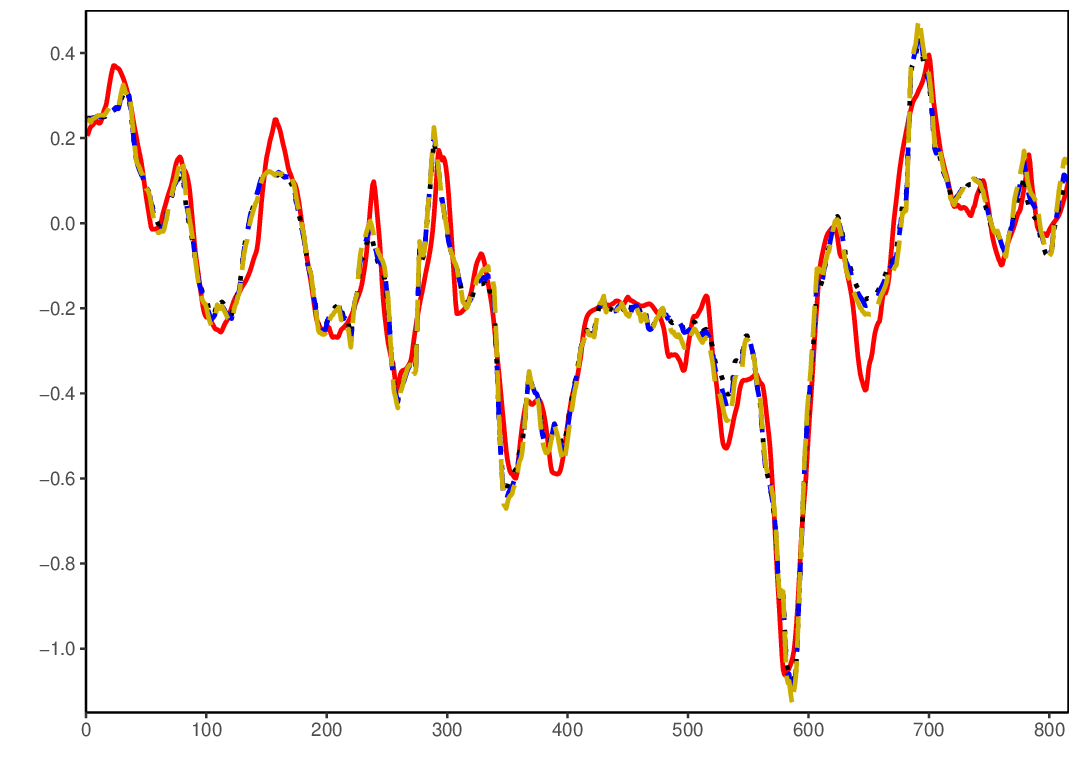}
    \caption{Plot of the time-varying beta}
	\label{fig:vmg_slope_comp}
    \end{subfigure}
    \caption{Estimates from the kernel method with $h=\hat{c}T^{-1/2}$ and Bayesian method (posterior means) for V-G}
    \label{fig:vmg_comp}
    \caption*{\color{red}\full\color{black}: Kernel, \quad \color{black}\dotted\color{black}: Bayesian ($v_1=1$), \quad \color{blue}\dashed\color{black}: Bayesian ($v_1=2$), \quad \color{gold(metallic)}\longdash\color{black}: Bayesian ($v_1=4$)}
\end{figure}

\clearpage

\titleformat*{\section}{\Large\centering}
\section*{Appendix to ``Estimating Time-Varying Parameters of Various Smoothness in Linear Models via Kernel Regression" by M. Nishi}

Throughout the Appendix, $C>0$ is a generic constant that may vary across lines.

\titleformat*{\section}{\Large\bfseries\centering}

\section*{Appendix A: Proofs of the Main Results}
\setcounter{equation}{0}
\renewcommand{\theequation}{A.\arabic{equation}}

\begin{lemapp}
	Under Assumptions \ref{asm:kernel} and \ref{asm:dgp}, for each $t=\lfloor Tr \rfloor, \ r\in(0,1)$, we have $(1/Th)\sum_{i=1}^{T}K((t-i)/Th)x_ix_i' \stackrel{p}{\to} \Omega(r)$, where $\Omega(r) = \lim_{T\to\infty}(1/Th)\sum_{i=1}^{T}K((t-i)/Th)E[x_ix_i']$.
	\label{lemapp:variance_plim}
\end{lemapp}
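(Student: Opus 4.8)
The plan is to split the kernel-weighted average into its deterministic mean and a mean-zero stochastic remainder, dispatch the mean by Assumption~\ref{asm:dgp}(c), and show the remainder is $o_p(1)$ using the weak dependence in Assumption~\ref{asm:dgp}(a). Writing $W_{T,i}\coloneqq(Th)^{-1}K\bigl((t-i)/(Th)\bigr)$ and $Z_i\coloneqq x_ix_i'-E[x_ix_i']$,
\begin{align}
\frac{1}{Th}\sum_{i=1}^{T}K\Bigl(\frac{t-i}{Th}\Bigr)x_ix_i'=\frac{1}{Th}\sum_{i=1}^{T}K\Bigl(\frac{t-i}{Th}\Bigr)E[x_ix_i']+\sum_{i=1}^{T}W_{T,i}Z_i .
\end{align}
The first term converges to $\Omega(r)$ by Assumption~\ref{asm:dgp}(c), so it suffices to show $\sum_{i=1}^{T}W_{T,i}Z_i\stackrel{p}{\to}0$, and since convergence in Frobenius norm reduces to entrywise convergence I will work with an arbitrary scalar entry $z_i$ of $Z_i$. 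By Assumption~\ref{asm:kernel} the kernel $K$ is nonnegative, bounded ($0\le K\le\bar K$), Lipschitz, and supported on $[-1,1]$; because $t=\lfloor Tr\rfloor$ with $r\in(0,1)$ and $Th\to\infty$, $h\to0$, for all large $T$ the sum runs only over the block $\mathcal I_T\coloneqq\{i:|t-i|\le\lfloor Th\rfloor\}\subset\{1,\dots,T\}$, which has $2\lfloor Th\rfloor+1$ elements, with $0\le W_{T,i}\le\bar K/(Th)$; moreover $\sup_iE|z_i|\le 2\sup_iE\|x_i\|^2<\infty$ by the moment bound in Assumption~\ref{asm:dgp}(a).

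The argument for the remainder has two ingredients. First, $\{x_{i,a}x_{i,b}\}$ inherits weak dependence from $\{(x_t',\varepsilon_t)\}$: writing $\widehat{x}_{i,a}$ for the conditional expectation of $x_{i,a}$ on the NED coordinate $\sigma$-fields, the identity $x_{i,a}x_{i,b}-\widehat{x}_{i,a}\widehat{x}_{i,b}=(x_{i,a}-\widehat{x}_{i,a})x_{i,b}+\widehat{x}_{i,a}(x_{i,b}-\widehat{x}_{i,b})$ combined with the Cauchy--Schwarz inequality and $\sup_tE\|x_t\|^{2r}<\infty$ shows $\{x_{i,a}x_{i,b}\}$ is $\mathrm{L}_1$-NED on the same $\alpha$-mixing base with NED coefficients of the same order as those of $\{(x_t',\varepsilon_t)\}$; hence $\{z_i\}$ is a uniformly integrable $\mathrm{L}_1$-mixingale array whose mixingale coefficients tend to zero, and a standard weak law of large numbers for such arrays \citep{davidsonStochasticLimitTheory1994} gives $|B_T|^{-1}\sum_{i\in B_T}z_i\stackrel{p}{\to}0$ for every block $B_T$ of consecutive indices inside $\mathcal I_T$ with $|B_T|\to\infty$. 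Second, I would approximate $K$ on $[-1,1]$ by a step function $K_\epsilon$ with $\|K-K_\epsilon\|_\infty\le\epsilon$ (possible by the Lipschitz continuity in Assumption~\ref{asm:kernel}): the contribution of $K_\epsilon$ is a finite linear combination of block averages $(Th)^{-1}\sum_{i\in B_T}z_i$, each of which is $o_p(1)$ by the first ingredient since $|B_T|\asymp Th\to\infty$, while the contribution of $K-K_\epsilon$ is bounded in absolute value by $\epsilon\,(Th)^{-1}\sum_{i\in\mathcal I_T}|z_i|$, which is $O_p(\epsilon)$ since $E[(Th)^{-1}\sum_{i\in\mathcal I_T}|z_i|]\le(2\lfloor Th\rfloor+1)(Th)^{-1}\sup_iE|z_i|=O(1)$. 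Letting $\epsilon\downarrow0$ yields $\sum_{i}W_{T,i}z_i\stackrel{p}{\to}0$; collecting the entries completes the proof.

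I expect the main obstacle to be the weak-dependence bookkeeping: verifying carefully, in the triangular-array and possibly non-stationary setting, that the products $x_{i,a}x_{i,b}$ form a uniformly integrable $\mathrm{L}_1$-mixingale array with mixingale coefficients tending to zero — this is precisely where the $2r$-th moment bound and the NED/mixing sizes of Assumption~\ref{asm:dgp}(a) are used, since they guarantee that the product's coefficients vanish, which is all the weak LLN requires. The step-function approximation of $K$ (exploiting its Lipschitz continuity) and the remaining manipulations are routine. An alternative route, available if one is willing to invoke the $\mathrm{L}_2$ structure under slightly stronger dependence conditions, is to bound the second moment directly via $E\bigl[(\sum_{i\in\mathcal I_T}W_{T,i}z_i)^2\bigr]\le(\bar K/Th)^2\sum_{i,j\in\mathcal I_T}|\mathrm{Cov}(z_i,z_j)|=O(1/(Th))$, but this requires summable mixingale coefficients for $\{z_i\}$ and so is less robust than the mixingale-LLN argument above.
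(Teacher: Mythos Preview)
Your proposal is correct and shares the paper's overall strategy: split into the deterministic mean (dispatched by Assumption~\ref{asm:dgp}(c)) and a centered remainder, then kill the remainder via a mixingale weak law. The paper's execution is more direct than yours: rather than approximating $K$ by a step function and reducing to unweighted block averages, it observes (via Example~17.17 of \citet{davidsonStochasticLimitTheory1994}) that the kernel-weighted array $\{K((t-i)/(Th))(x_ix_i'-E[x_ix_i'])\}_i$ is itself $L_r$-bounded and $L_2$-NED, hence a uniformly integrable mixingale, and applies Andrews' LLN once after splitting the two-sided window into two one-sided sums of length $\lfloor Th\rfloor$. Your step-function route is a valid and self-contained alternative, but it buys nothing here since $K$ is bounded and Lipschitz, so absorbing it into the array costs nothing. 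One minor point worth tightening: your passage from $L_1$-NED of the products to ``uniformly integrable $L_1$-mixingale'' is a bit loose; the cleaner (and the paper's) route is to use the $2r$-th moment bound to get $L_2$-NED for the products directly from Example~17.17, after which the NED-to-mixingale step is standard.
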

\noindent \begin{proof}
    Decompose $(1/Th)\sum_{i=1}^{T}K((t-i)/Th)x_ix_i'$ as
	\begin{align}
		\frac{1}{Th}\sum_{i=1}^{T}K\Bigl(\frac{t-i}{Th}\Bigr)x_ix_i' &= \frac{1}{Th}\sum_{i=1}^{T}K\Bigl(\frac{t-i}{Th}\Bigr)E[x_ix_i'] + \frac{1}{Th}\sum_{i=1}^{T}K\Bigl(\frac{t-i}{Th}\Bigr)\bigl(x_ix_i' - E[x_ix_i']\bigr) \\
		&\eqqcolon A_{T,1} + A_{T,2}.
	\end{align}
	Because $A_{T,1} \to \Omega(r)$ by Assumption \ref{asm:dgp}(c), it suffices to show $A_{T,2} = o_p(1)$. Following the argument of Example 17.17 of \citetapp{davidsonStochasticLimitTheory1994}, we can show that $\{K((t-i)/Th)(x_ix_i' - E[x_ix_i'])\}_i$ is an $L_{r}$-bounded ($r>2$), mean-zero $L_2$-NED triangular array under Assumptions \ref{asm:kernel} and \ref{asm:dgp}(a), and thus it is a uniformly integrable $L_2$-mixingale \citepapp[see][]{andrewsLawsLargeNumbers1988}. This result allows us to apply the law of large numbers (see \citetapp{andrewsLawsLargeNumbers1988}, p.464) and obtain
	\begin{align}
		A_{T,2} &= \frac{1}{Th}\sum_{i=0}^{\lfloor Th \rfloor}K\Bigl(\frac{i}{Th}\Bigr)\bigl(x_{t-i}x_{t-i}' - E[x_{t-i}x_{t-i}']\bigr) + \frac{1}{Th}\sum_{i=1}^{\lfloor Th \rfloor}K\Bigl(\frac{-i}{Th}\Bigr)\bigl(x_{t+i}x_{t+i}' - E[x_{t+i}x_{t+i}']\bigr) \\
		&\stackrel{p}{\to}0,
	\end{align}
	which, together with $A_{T,1}\to\Omega(r)$, shows that $(1/Th)\sum_{i=1}^{T}K((t-i)/Th)x_ix_i' \stackrel{p}{\to} \Omega(r)$.
\end{proof}

\begin{lemapp}
	Under Assumptions \ref{asm:kernel} and \ref{asm:dgp}, for each $t=\lfloor Tr \rfloor, \ r\in(0,1)$, we have
	\begin{align}
		\left\|\sum_{i=1}^{T}K\Bigl(\frac{t-i}{Th}\Bigr)x_ix_i'(\beta_{T,i} - \beta_{T,t}) \right\| = \begin{cases}
			O_p(Th^{1+\alpha}) & \text{if $\beta_{T,t}$ satisfies Definition \ref{def:tvp_alpha_class}(a)} \\
			O_p(T^{1-\alpha}h) & \text{if $\beta_{T,t}$ satisfies Definition \ref{def:tvp_alpha_class}(b)}
		\end{cases}.
	\end{align}
	\label{lemapp:bias_order}
\end{lemapp}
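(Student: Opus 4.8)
The plan is to use the compact support of $K$ to restrict the sum to the estimation window, peel off the factor $\|\beta_{T,i}-\beta_{T,t}\|$ uniformly over that window via Definition \ref{def:tvp_alpha_class}, and then show that the remaining weighted sum of $\|x_i\|^2$ is $O_p(Th)$.

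First I would observe that, by Assumption \ref{asm:kernel}(a), $K((t-i)/Th)=0$ whenever $|t-i|>Th$, so the sum ranges effectively over $i$ with $|t-i|\leq \lfloor Th\rfloor$. Writing $a_T=\lfloor Th\rfloor$, the conditions $h\to0$ and $Th\to\infty$ guarantee that $a_T\to\infty$ and $a_T\leq T$ for large $T$, so $\{a_T\}$ is an admissible sequence in Definition \ref{def:tvp_alpha_class}. Using submultiplicativity of the Frobenius norm (note $\|x_ix_i'\|=\|x_i\|^2$), together with $K\geq0$ and the triangle inequality,
\[
\left\|\sum_{i=1}^{T}K\left(\frac{t-i}{Th}\right)x_ix_i'(\beta_{T,i}-\beta_{T,t})\right\| \leq \left(\max_{j:|t-j|\leq a_T}\|\beta_{T,t}-\beta_{T,j}\|\right)\sum_{i=1}^{T}K\left(\frac{t-i}{Th}\right)\|x_i\|^2 .
\]

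Next I would bound the two factors separately. Under Definition \ref{def:tvp_alpha_class}(a), the first factor is $O_p((a_T/T)^{\alpha})$, and since $a_T/T=\lfloor Th\rfloor/T\leq h$ with $\alpha>0$, this is $O_p(h^{\alpha})$; under Definition \ref{def:tvp_alpha_class}(b) it is directly $O_p(T^{-\alpha})$. For the second factor, taking expectations and using $\sup_tE[\|x_t\|^2]<\infty$ from Assumption \ref{asm:dgp}(a) together with $(1/Th)\sum_{i=1}^{T}K((t-i)/Th)\to\int_{-1}^{1}K(x)\,dx=1$ (Assumption \ref{asm:kernel}(b)), one obtains $E[(1/Th)\sum_{i=1}^{T}K((t-i)/Th)\|x_i\|^2]=O(1)$, so Markov's inequality gives $\sum_{i=1}^{T}K((t-i)/Th)\|x_i\|^2=O_p(Th)$; equivalently, this is the trace of the quantity shown to converge in Lemma \ref{lemapp:variance_plim}. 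Multiplying, the right-hand side is $O_p(h^{\alpha})\cdot O_p(Th)=O_p(Th^{1+\alpha})$ in the type-a case and $O_p(T^{-\alpha})\cdot O_p(Th)=O_p(T^{1-\alpha}h)$ in the type-b case, which is the claim.

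I do not expect a serious obstacle; the only point requiring care is that the bound on $\|\beta_{T,i}-\beta_{T,t}\|$ must hold uniformly over the full window $|t-i|\leq a_T$ on which the kernel is supported, which is exactly what Definition \ref{def:tvp_alpha_class} supplies once we identify $a_T=\lfloor Th\rfloor$ and check that this sequence meets the hypotheses ($a_T\to\infty$, $a_T\leq T$) of that definition.
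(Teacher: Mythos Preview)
Your proposal is correct and follows essentially the same approach as the paper's proof: restrict the sum to $|t-i|\leq\lfloor Th\rfloor$ via the compact support of $K$, factor out $\max_{j:|t-j|\leq\lfloor Th\rfloor}\|\beta_{T,t}-\beta_{T,j}\|$ and control it by Definition~\ref{def:tvp_alpha_class}, and show the remaining kernel-weighted sum of $\|x_i\|^2$ is $O_p(Th)$ via its expectation, $\sup_tE[\|x_t\|^2]<\infty$, and the Riemann-sum approximation $(Th)^{-1}\sum_iK(i/Th)\to\int_{-1}^1K=1$. Your explicit verification that $a_T=\lfloor Th\rfloor$ meets the hypotheses of Definition~\ref{def:tvp_alpha_class} is a nice touch the paper leaves implicit.
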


\noindent \begin{proof}
	First, we have
	\begin{align}
		\left\|\sum_{i=1}^{T}K\Bigl(\frac{t-i}{Th}\Bigr)x_ix_i'(\beta_{T,i} - \beta_{T,t}) \right\| &= \left\|\sum_{i=t-\lfloor Th \rfloor}^{t+\lfloor Th \rfloor}K\Bigl(\frac{t-i}{Th}\Bigr)x_ix_i'(\beta_{T,i} - \beta_{T,t}) \right\| \\
		&\leq \max_{t-\lfloor Th \rfloor\leq i \leq t+\lfloor Th \rfloor}\|\beta_{T,i} - \beta_{T,t}\| \\ 
        & \qquad \times \sum_{i = t-\lfloor Th \rfloor}^{t+\lfloor Th \rfloor}K\Bigl(\frac{t-i}{Th}\Bigr)\|x_ix_i'\|, \label{eqn:bias_upper}
	\end{align}
        because the support of $K$ is $[-1,1]$ under Assumption \ref{asm:kernel}. Note that
	\begin{align}
		\sum_{i = t-\lfloor Th \rfloor}^{t+\lfloor Th \rfloor}K\Bigl(\frac{t-i}{Th}\Bigr)\|x_ix_i'\| &= O_p(Th), \label{eqn:bias_1} \\
		\intertext{because}
		E\biggl[\sum_{i = t-\lfloor Th \rfloor}^{t+\lfloor Th \rfloor}K\Bigl(\frac{t-i}{Th}\Bigr)\|x_ix_i'\|\biggr] &\leq \max_iE\bigl[\|x_ix_i'\|\bigr] \sum_{i = t-\lfloor Th \rfloor}^{t+\lfloor Th \rfloor}K\Bigl(\frac{t-i}{Th}\Bigr) \\
		&\leq \sup_tE[\|x_t\|^2] \times Th\times\frac{1}{Th}\sum_{i=-\lfloor Th \rfloor}^{\lfloor Th \rfloor}K\Bigl(\frac{i}{Th}\Bigr) \\
		&= O(1) \times Th\sum_{i=-\lfloor Th \rfloor}^{\lfloor Th \rfloor}\int_{(i-1)/Th}^{i/Th}K\Bigl(\frac{i}{Th}\Bigr)dr \\
		&= O(1) \times Th\sum_{i=-\lfloor Th \rfloor}^{\lfloor Th \rfloor}\int_{(i-1)/Th}^{i/Th}\biggl\{K\Bigl(\frac{i}{Th}\Bigr) - K(r)+K(r)\biggr\}dr \\
		&= O(1) \times Th\Biggl(\int_{-\lfloor Th \rfloor/Th}^{\lfloor Th \rfloor/Th}K(r)dr + O(1/Th)\Biggr) = O(Th),
	\end{align}
	because $\sup_tE[\|x_t\|^2] < \infty$ under Assumption \ref{asm:dgp}, $K$ is Lipschitz continuous, and $\int_{-1}^{1}K(x)dx=1$ under Assumption \ref{asm:kernel}. We also have
	\begin{align}
		\max_{t-\lfloor Th \rfloor\leq i \leq t+\lfloor Th \rfloor}\|\beta_{T,i} - \beta_{T,t}\|= \begin{cases}
			O_p(h^{\alpha}) & \text{if $\beta_{T,t}$ satisfies Definition \ref{def:tvp_alpha_class}(a)} \\
			O_p(T^{-\alpha}) & \text{if $\beta_{T,t}$ satisfies Definition \ref{def:tvp_alpha_class}(b)} \\
		\end{cases}.
	\label{eqn:bias_2}
	\end{align}
	Substituting \eqref{eqn:bias_1} and \eqref{eqn:bias_2} into \eqref{eqn:bias_upper}, we deduce
	\begin{align}
		\left\|\sum_{i=1}^{T}K\Bigl(\frac{t-i}{Th}\Bigr)x_ix_i'(\beta_{T,i} - \beta_{T,t}) \right\|  = \begin{cases}
			O_p(Th^{1+\alpha}) & \text{if $\beta_{T,t}$ satisfies Definition \ref{def:tvp_alpha_class}(a)} \\
			O_p(T^{1-\alpha}h) & \text{if $\beta_{T,t}$ satisfies Definition \ref{def:tvp_alpha_class}(b)}
		\end{cases}.
	\end{align}
\end{proof}

\begin{lemapp}
	Under Assumptions \ref{asm:kernel} and \ref{asm:dgp}, for each $t=\lfloor Tr \rfloor, \ r\in(0,1)$, we have
	\begin{align}
		\frac{1}{\sqrt{Th}}\sum_{i=1}^{T}K\Bigl(\frac{t-i}{Th}\Bigr)x_i\varepsilon_i \stackrel{d}{\to}N(0,\Sigma(r)).
	\end{align}
	\label{lemapp:weak_convergence}
\end{lemapp}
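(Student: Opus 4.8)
The plan is to reduce the vector statement to a scalar one by the Cram\'er--Wold device and then verify the hypotheses of a central limit theorem for triangular arrays of near-epoch-dependent random variables. I would fix $\lambda\in\mathbb{R}^{p}$, $\lambda\neq0$ (the case $\lambda=0$ being trivial), put $z_i\coloneqq\lambda'x_i\varepsilon_i$, and study the triangular array
\[
	w_{T,i}\coloneqq\frac{1}{\sqrt{Th}}K\Bigl(\frac{t-i}{Th}\Bigr)z_i,\qquad i=1,\ldots,T,
\]
whose sum $S_T\coloneqq\sum_{i=1}^{T}w_{T,i}$ is $\lambda'$ times the statistic in the lemma. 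By the Cauchy--Schwarz inequality and Assumption \ref{asm:dgp}(a), $\sup_iE|z_i|^{r}\le\|\lambda\|^{r}\sup_i\bigl(E\|x_i\|^{2r}\bigr)^{1/2}\bigl(E|\varepsilon_i|^{2r}\bigr)^{1/2}<\infty$ for the same $r>2$; by Assumption \ref{asm:dgp}(b), $E[z_i]=0$ and $\{z_i\}$ is serially uncorrelated. Since $K$ is bounded with support $[-1,1]$ (Assumption \ref{asm:kernel}), only the $2\lfloor Th\rfloor+1$ indices with $|t-i|\le Th$ enter $S_T$, and for large $T$ this window lies inside $\{1,\ldots,T\}$ because $r\in(0,1)$ and $h\to0$.

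Next I would pin down the dependence structure of $\{w_{T,i}\}$. Each of $\lambda'x_i$ and $\varepsilon_i$ is $\mathrm{L}_{2r}$-bounded and $\mathrm{L}_2$-NED on the $\alpha$-mixing base of Assumption \ref{asm:dgp}(a), so the product $z_i=(\lambda'x_i)\varepsilon_i$ is $\mathrm{L}_2$-NED on the same base, with a size still adequate for the CLT below, by the standard product rule for near-epoch-dependent sequences (see \citetapp{davidsonStochasticLimitTheory1994}). Multiplying by the uniformly bounded deterministic weights $K((t-i)/Th)$ only rescales the NED constants by a bounded factor and preserves $\mathrm{L}_r$-boundedness, so $\{w_{T,i}\}$ is a zero-mean, serially uncorrelated, $\mathrm{L}_r$-bounded, $\mathrm{L}_2$-NED triangular array on an $\alpha$-mixing base whose NED and mixing sizes meet the requirements of the triangular-array central limit theorem for near-epoch-dependent sequences (see \citetapp{davidsonStochasticLimitTheory1994}).

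It then remains to check two quantitative conditions. For the limiting variance, Assumption \ref{asm:dgp}(c) gives directly $\mathrm{Var}(S_T)=\lambda'\,\mathrm{Var}\bigl((Th)^{-1/2}\sum_{i=1}^{T}K((t-i)/Th)x_i\varepsilon_i\bigr)\lambda\to\lambda'\Sigma(r)\lambda$, which is strictly positive because $\Sigma(r)>0$; moreover each term satisfies $\mathrm{Var}(w_{T,i})=(Th)^{-1}K((t-i)/Th)^{2}E[z_i^{2}]=O\bigl((Th)^{-1}\bigr)$ uniformly in $i$, so no single summand is asymptotically non-negligible. For a Lindeberg/Lyapunov condition, take $\delta=r-2>0$; then
\[
	\sum_{i=1}^{T}E|w_{T,i}|^{2+\delta}=(Th)^{-r/2}\sum_{i:|t-i|\le Th}K\Bigl(\frac{t-i}{Th}\Bigr)^{r}E|z_i|^{r}\le C\,(Th)^{-r/2}\bigl(2\lfloor Th\rfloor+1\bigr)\sup_iE|z_i|^{r}=O\bigl((Th)^{1-r/2}\bigr)=o(1),
\]
since $r>2$ and $Th\to\infty$. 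With the dependence structure established above and these two conditions in hand, the CLT yields $S_T\stackrel{d}{\to}N(0,\lambda'\Sigma(r)\lambda)$; as $\lambda$ is arbitrary, the Cram\'er--Wold device delivers $(Th)^{-1/2}\sum_{i=1}^{T}K((t-i)/Th)x_i\varepsilon_i\stackrel{d}{\to}N(0,\Sigma(r))$.

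The step needing the most care will be the bookkeeping on dependence ``sizes'': one must confirm that applying the product rule to $x_i\varepsilon_i$, and then imposing the $T$-dependent but uniformly bounded kernel weights, leaves both the NED size and the $\alpha$-mixing size of the underlying base at or above the thresholds demanded by the particular triangular-array NED CLT being invoked---the exponents posited in Assumption \ref{asm:dgp}(a) appear to be calibrated precisely so that this works. One must also make sure the chosen CLT genuinely allows triangular arrays, since the kernel weights depend on $T$. By contrast, the remaining ingredients---the moment bound via Cauchy--Schwarz, the variance via Assumption \ref{asm:dgp}(c), and the asymptotic negligibility of individual terms via the compact support of $K$---should be routine.
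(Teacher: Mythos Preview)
Your approach is essentially the same as the paper's---Cram\'er--Wold reduction to a scalar, then verification of an NED triangular-array CLT---and the ingredients you assemble (product rule for $z_i=\lambda'x_i\varepsilon_i$, preservation under bounded kernel weights, variance limit via Assumption~\ref{asm:dgp}(c)) all match. The one substantive difference is in the CLT invoked and the negligibility condition verified: you check a Lyapunov bound $\sum_iE|w_{T,i}|^{r}=O\bigl((Th)^{1-r/2}\bigr)$, whereas the paper appeals specifically to Theorem~2 of \citetapp{dejongCentralLimitTheorems1997}, whose hypotheses are stated not in terms of a Lyapunov sum on the summands but in terms of a constant array $c_{T,i}$ dominating both the NED constants and the $L_r$ moments, together with a blocked-maxima condition $\max_jM_{T,j}=o(b_T^{-1/2})$ and $\sum_jM_{T,j}^2=O(b_T^{-1})$ for a suitable $b_T$. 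The paper constructs $c_{T,i}$ explicitly (essentially $O(1/\sqrt{Th})$ inside the kernel window, $1/\sqrt{T}$ outside) and verifies the blocked condition with $b_T=\sqrt{Th}$. Your closing caveat---that the ``bookkeeping on dependence sizes'' and the choice of a CLT that genuinely handles $T$-dependent weights need the most care---is exactly where the paper concentrates its effort; your Lyapunov calculation is correct and morally sufficient, but to make the argument airtight you would need to name a specific triangular-array NED CLT and check its stated hypotheses, which in the most commonly cited version (de Jong) are not literally a Lyapunov condition.
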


\noindent \begin{proof}
	To prove this result, we use the Cramer-Wold device. Define $z_{T,i}^* \coloneqq \lambda'K((t-i)/Th)x_i\varepsilon_i$, where $\lambda\in\mathbb{R}^p$ is any vector such that $\lambda'\lambda=1$, $\sigma_T^2 \coloneqq \mathrm{Var}\bigl(\sum_{i=1}^{T}z_{T,i}^*\bigr)$, and $z_{T,i} \coloneqq z_{T,i}^*/\sigma_T$. Note that $\sigma_T^2/Th\to \lambda'\Sigma(r)\lambda>0$ by Assumption \ref{asm:dgp}(c). Moreover, define positive constant array $\{c_{T,i}\}$ as 
    \begin{align}
        c_{T,i} = \begin{cases}
            \max\Bigl\{\sqrt{\mathrm{Var}(z_{T,i}^*)}, 1\Bigr\}/\sigma_T & \mathrm{for} \ i \in [t-\lfloor Th \rfloor, t + \lfloor Th \rfloor] \\
            1/\sqrt{T} & \mathrm{otherwise}
        \end{cases}.
    \end{align}
    To show Lemma \ref{lemapp:weak_convergence}, we rely on Theorem 2 of \citetapp{dejongCentralLimitTheorems1997}, which requires that the following conditions hold for $\{z_{T,i}, c_{T,i}\}$:
    \begin{itemize}
        \item[(i)] $z_{T,i}$ has mean zero, and $\mathrm{Var}(\sum_{i=1}^{T}z_{T,i})=1$.
        \item[(ii)] $z_{T,i}/c_{T,i}$ is $L_r$-bounded for some $r>2$ uniformly in $i$ and $T$.
        \item[(iii)] $z_{T,i}$ is $L_2$-NED of size $-1/2$ on an $\alpha$-mixing array of size $-r/(r-2)$, with respect to some constants $d_{T,i}$. Moreover, $d_{T,i}/c_{T,i}$ is bounded uniformly in $i$ and $T$.
        \item[(iv)] Let $b_T$ be a positive non-decreasing integer-valued sequence such that $b_T\leq T$, $b_T \to \infty$, and $b_T/T \to 0$ as $T\to \infty$. Also let $r_T \coloneqq \lfloor T/b_T \rfloor$. Define $M_{T,j} \coloneqq \max_{(j-1)b_T+1\leq i \leq jb_T} c_{T,i}, \ j=1,\ldots,r_T$, and $M_{T,r_T+1} \coloneqq \max_{r_Tb_T+1\leq i \leq T}c_{T,i}$. Then, we have $\max_{1\leq j \leq r_T+1} M_{T,j} = o(b_T^{-1/2})$ and $\sum_{j=1}^{r_T}M_{T,j}^2 = O(b_T^{-1})$.
    \end{itemize}
    Conditions (i)-(iv) imply that $\sum_{i=1}^Tz_{t,i} \stackrel{d}{\to} N(0,1)$. We show that the above four conditions hold.

    (i) This condition trivially follows from Assumption \ref{asm:dgp}(b) and the definition of $z_{T,i}$.

    (ii) Noting that $z_{T,i}^*=0$ for $i<t-\lfloor Th \rfloor$ and $i>t+\lfloor Th \rfloor$, we have
    \begin{align}
        z_{T,i}/c_{T,i} = \begin{cases}
            z_{T,i}^*/\max\Bigl\{\sqrt{\mathrm{Var}(z_{T,i}^*)},1\Bigr\} & \mathrm{for} \ i\in [t-\lfloor Th \rfloor, t+\lfloor Th \rfloor] \\
            0 & \mathrm{otherwise}
            \label{eqn:z_over_c}
        \end{cases}.
    \end{align}
    Because $(x_i',\varepsilon_i)$ is uniformly $L_{2r}$-bounded for $r>2$ by Assumption \ref{asm:dgp}(a), $z_{T,i}^*$ is $L_r$-bounded uniformly in $i$ and $T$ since kernel $K(\cdot)$ is bounded. This implies that  $z_{T,i}/c_{T,i}$ is also $L_r$-bounded uniformly in $i$ and $T$ in view of \eqref{eqn:z_over_c}.

    (iii) Note that $(x_i',\varepsilon_i)$ is $L_{2r}$-bounded and $L_2$-NED of size $-(r-1)/(r-2)$ on an $\alpha$-mixing sequence of size $-r/(r-2)$. Thus, following the argument of Example 17.17 of \citetapp{davidsonStochasticLimitTheory1994}, we can show that $z_{T,i}$ is $L_2$-NED of size $-1/2$ on the same $\alpha$-mixing sequence, with respect to positive constant array $d_{T,i}$ satisfying
    \begin{align}
        \sup_{t-\lfloor Th \rfloor\leq i \leq t+\lfloor Th \rfloor}d_{T,i} \leq \frac{C}{\sigma_T} = O\Biggl(\frac{1}{\sqrt{Th}}\Biggr),
    \end{align}
    for some positive constant $C<\infty$ independent of $T$, and $d_{T,i}=0$ for $i\notin [t-\lfloor Th \rfloor, t+\lfloor Th \rfloor]$. This follows from the fact that $z_{T,i}^*$ is $L_2$-NED of size $-1/2$ with respect to some positive constant array $d_{T,i}^*$ satisfying $\sup_{T,i}d_{T,i}^*<\infty$ under Assumption \ref{asm:dgp}(a), $\sigma_T^2/Th\to \lambda'\Sigma(r)\lambda>0$ as $T\to\infty$, and $K((t-i)/Th) = 0$ for $i<t-\lfloor Th \rfloor$ and $i>t+\lfloor Th \rfloor$. This implies that $d_{T,i}/c_{T,i}$ is bounded uniformly in $i$ and $T$.

    (iv) Let $b_T = \sqrt{Th}$. Then, by the definition of $c_{T,i}$ and the fact that $\sigma_T/\sqrt{Th}\geq C>0$ for sufficiently large $T$ and $\mathrm{Var}(z_{T,i}^*) < \infty$ uniformly in $i$ and $T$ by Assumption \ref{asm:dgp}(a), we get
    \begin{align}
        \max_{1\leq j \leq r_T+1} M_{T,j} = O((Th)^{-1/2}) = o(b_T^{-1/2}).
    \end{align}
    Furthermore, letting $j_1 \coloneqq \lfloor (t-\lfloor Th \rfloor)/b_T \rfloor$ and $j_2 \coloneqq \lfloor (t+\lfloor Th \rfloor)/b_T \rfloor$, we obtain
    \begin{align}
        \sum_{j=1}^{r_T}M_{T,j}^2 &= \sum_{j=1}^{j_1}M_{T,j}^2 + \sum_{j=j_1+1}^{j_2}M_{T,j}^2 + \sum_{j=j_2+1}^{r_T}M_{T,j}^2 \\
        &= \frac{j_1}{T} + O\Biggl(\frac{j_2-j_1}{Th}\Biggr) + \frac{r_T-j_2}{T} = O(b_T^{-1}).
    \end{align}
     
	Now that conditions (i)-(iv) are seen to hold, we obtain
    \begin{align}
        \sum_{i=1}^{T}z_{T,i} = \lambda'\sum_{i=1}^{T}K\Bigl(\frac{t-i}{Th}\Bigr)x_i\varepsilon_i/\sigma_T \stackrel{d}{\to} N(0,1).
    \end{align}
     Moreover, we have
    \begin{align}
        \frac{1}{Th}\sigma_T^2 = \frac{1}{Th}\lambda'\mathrm{Var}\Biggl(\sum_{i=1}^{T}K\Bigl(\frac{t-i}{Th}\Bigr)x_i\varepsilon_i\Biggr)\lambda \to \lambda'\Sigma(r)\lambda>0,
    \end{align}
    by Assumption \ref{asm:dgp}(c). This implies that
	\begin{align}
		\lambda'\frac{1}{\sqrt{Th}}\sum_{i=1}^{T}K\Bigl(\frac{t-i}{Th}\Bigr)x_i\varepsilon_i \stackrel{d}{\to} N(0,\lambda'\Sigma(r)\lambda).
	\end{align}
	 By the Cramer-Wold device, we deduce
	\begin{align}
		\frac{1}{\sqrt{Th}}\sum_{i=1}^{T}K\Bigl(\frac{t-i}{Th}\Bigr)x_i\varepsilon_i \stackrel{d}{\to}N(0,\Sigma(r)).
	\end{align}
\end{proof}

\noindent \begin{proof}[Proof of Theorem \ref{thm:bias_bandwidth}]
	Since
	\begin{align}
		\hat{\beta}_t &= \Biggl(\frac{1}{Th}\sum_{i=1}^{T}K\Bigl(\frac{t-i}{Th}\Bigr)x_ix_i'\Biggr)^{-1}\frac{1}{Th}\sum_{i=1}^{T}K\Bigl(\frac{t-i}{Th}\Bigr) x_ix_i'\beta_{T,i} \\
		&\qquad + \Biggl(\frac{1}{Th}\sum_{i=1}^{T}K\Bigl(\frac{t-i}{Th}\Bigr)x_ix_i'\Biggr)^{-1}\frac{1}{Th}\sum_{i=1}^{T}K\Bigl(\frac{t-i}{Th}\Bigr) x_i\varepsilon_i, 
	\end{align}
	we have
	\begin{align}
		\sqrt{Th}(\hat{\beta}_t - \beta_{T,t} - R_{T,t}) 
        &=\Biggl(\frac{1}{Th}\sum_{i=1}^{T}K\Bigl(\frac{t-i}{Th}\Bigr)x_ix_i'\Biggr)^{-1}\frac{1}{\sqrt{Th}}\sum_{i=1}^{T}K\Bigl(\frac{t-i}{Th}\Bigr) x_i\varepsilon_i,
        \label{eqn:beta_hat_decompose}
	\end{align}
	where $R_{T,t} \coloneqq \Bigl(\frac{1}{Th}\sum_{i=1}^{T}K\Bigl(\frac{t-i}{Th}\Bigr)x_ix_i'\Bigr)^{-1}\frac{1}{Th}\sum_{i=1}^{T}K\Bigl(\frac{t-i}{Th}\Bigr) x_ix_i'(\beta_{T,i}-\beta_{T,t})$. It follows from Lemmas \ref{lemapp:variance_plim} and \ref{lemapp:weak_convergence} that
	\begin{align}
		\Biggl(\frac{1}{Th}\sum_{i=1}^{T}K\Bigl(\frac{t-i}{Th}\Bigr)x_ix_i'\Biggr)^{-1}\frac{1}{\sqrt{Th}}\sum_{i=1}^{T}K\Bigl(\frac{t-i}{Th}\Bigr) x_i\varepsilon_i &\stackrel{d}{\to} \Omega(r)^{-1} \times N(0,\Sigma(r)) \\ 
		&= N(0,\Omega(r)^{-1}\Sigma(r)\Omega(r)^{-1}).
	\end{align}
	The bias term, $R_{T,t}$, satisfies
	\begin{align}
		R_{T,t} = \begin{cases}
			O_p(h^{\alpha}) & \text{if $\beta_{T,t}$ satisfies Definition \ref{def:tvp_alpha_class}(a)} \\
			O_p(T^{-\alpha}) & \text{if $\beta_{T,t}$ satisfies Definition \ref{def:tvp_alpha_class}(b)}
		\end{cases},
	\end{align}
	by Lemmas \ref{lemapp:variance_plim} and \ref{lemapp:bias_order}.
	
	Set $h=cT^{\gamma}$ for some $c>0$ and $\gamma\in(-1,0)$. Because $\sqrt{Th}R_{T,t} = O_p(T^{1/2+\gamma(1/2+\alpha)})$ for the type-a $\mathrm{TVP}(\alpha)$ case and $\sqrt{Th}R_{T,t} = O_p(T^{1/2-\alpha+\gamma/2})$ for the type-b $\mathrm{TVP}(\alpha)$ case, $\sqrt{Th}R_{T,t} = o_p(1)$ if
	\begin{align}
		\gamma \in \begin{cases}
			(-1,-\frac{1}{2\alpha+1}) & \text{if $\beta_{T,t}$ satisfies Definition \ref{def:tvp_alpha_class}(a)} \\
			(-1,2\alpha -1)\cap (-1,0) & \text{if $\beta_{T,t}$ satisfies Definition \ref{def:tvp_alpha_class}(b)}
		\end{cases},
	\end{align}
	under which choice we obtain
	\begin{align}
		\sqrt{cT^{1+\gamma}}(\hat{\beta}_t - \beta_{T,t}) &= \Biggl(\frac{1}{Th}\sum_{i=1}^{T}K\Bigl(\frac{t-i}{Th}\Bigr)x_ix_i'\Biggr)^{-1}\frac{1}{\sqrt{Th}}\sum_{i=1}^{T}K\Bigl(\frac{t-i}{Th}\Bigr) x_i\varepsilon_i + o_p(1) \\ 
		&\stackrel{d}{\to} N(0,\Omega(r)^{-1}\Sigma(r)\Omega(r)^{-1}).
	\end{align}
\end{proof}

\noindent \begin{proof}[Proof of Corollary \ref{cor:bias_bandwidth}]
    We show that Assumption \ref{asm:dgp}(c) holds with $\Omega(r)=\Omega=E[x_1x_1']$ and $\Sigma(r)=\Sigma=\int_{-1}^{1}K(x)^2dxE[\varepsilon_1^2x_1x_1']$ under Assumptions \ref{asm:kernel} and \ref{asm:dgp}(a)-(b) and covariance-stationarity.

	First, we show $(1/Th)\sum_{i=1}^{T}K((t-i)/Th)E[x_ix_i'] \to \Omega = E[x_1x_1']$. By the covariance-stationarity of $x_i$ and Assumption \ref{asm:kernel}, we have
	\begin{align}
		\frac{1}{Th}\sum_{i=1}^{T}K\left(\frac{t-i}{Th}\right)E[x_ix_i'] &=E[x_1x_1']\frac{1}{Th}\sum_{i=-\lfloor Th \rfloor}^{\lfloor Th \rfloor}K\Bigl(\frac{i}{Th}\Bigr) \\
		&=E[x_1x_1']\int_{-\lfloor Th \rfloor/Th}^{\lfloor Th \rfloor/Th}K(r)dr + O(1/Th) \\
		&\to E[x_1x_1']\int_{-1}^{1}K(r)dr = E[x_1x_1'].
	\end{align}
    Similarly, noting that $x_t\varepsilon_t$ is serially uncorrelated under Assumption \ref{asm:dgp}(b), we have
    \begin{align}
        \mathrm{Var}\Biggl(\frac{1}{\sqrt{Th}}\sum_{i=1}^{T}K\Bigl(\frac{t-i}{Th}\Bigr)x_i\varepsilon_i\Biggr) &= \frac{1}{Th}\sum_{i=1}^{T}K\Bigl(\frac{t-i}{Th}\Bigr)^2E[\varepsilon_1^2x_1x_1'] \\
        &\to \int_{-1}^{1}K(x)^2dxE[\varepsilon_1^2x_1x_1'],
    \end{align}
    since $x_t\varepsilon_t$ is covariance-stationary.
\end{proof}

Define $t_j \coloneqq t+j$ and
\begin{align}
    \Delta_{t_j} \coloneqq \frac{1}{Th}\sum_{i=t_j-\lfloor Th \rfloor}^{t_j+\lfloor Th \rfloor}K\left(\frac{t_j-i}{Th}\right)\left(x_ix_i'-E\left[x_ix_i'\right]\right).
    \label{def:delta}
\end{align}

\begin{lemapp}
    Under Assumptions \ref{asm:kernel}, \ref{asm:dgp_2}, and \ref{asm:min_eigen}, for each $t=\lfloor Tr \rfloor, \ r\in(0,1)$, the following results hold:
    \begin{itemize}
        \item[(i)] There exists a constant $C>0$ such that for sufficiently large $T$,
        \begin{align}
            \min_{-\lfloor Th \rfloor\leq j \leq \lfloor Th \rfloor}\lambda_{\min}\left(\frac{1}{Th}\sum_{i=t_j-\lfloor Th \rfloor}^{t_j+\lfloor Th \rfloor}K\left(\frac{t_j-i}{Th}\right)E\left[x_ix_i'\right]\right) \geq C>0,
        \end{align}
        where $\lambda_{\min}(\cdot)$ denotes the minimum eigenvalue.
        \item[(ii)] $\max_{-\lfloor Th \rfloor\leq j \leq \lfloor Th \rfloor}\left\|\Delta_{t_j}\right\|=o_p(1)$.
        \item[(iii)]
        \begin{align}
            \max_{-\lfloor Th \rfloor\leq j \leq \lfloor Th \rfloor}\left\|\frac{1}{Th}\sum_{i=t_j-\lfloor Th \rfloor}^{t_j+\lfloor Th \rfloor}K\left(\frac{t_j-i}{Th}\right)x_i\varepsilon_i\right\|=o_p(1).
        \end{align}
    \end{itemize}
    \label{lemapp:uniform_order}
\end{lemapp}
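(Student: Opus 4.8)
\noindent The plan is to treat the three parts separately: part (i) by a direct quadratic-form bound, and parts (ii)--(iii) by a union bound over the $O(Th)$ indices $j$, backed by a higher-order moment inequality that upgrades to uniformity the single-index arguments already used in Lemmas \ref{lemapp:variance_plim} and \ref{lemapp:weak_convergence}.

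\emph{Part (i).} Fix a unit vector $\lambda\in\mathbb{R}^p$ and $|j|\leq\lfloor Th\rfloor$. Since $h\to0$ and $r\in(0,1)$, for all large $T$ every index $i$ appearing in the sum lies in $[1,T]$, so Assumption \ref{asm:min_eigen} together with $K\geq0$ gives
\begin{align}
  \lambda'\biggl(\frac{1}{Th}\sum_{i=t_j-\lfloor Th\rfloor}^{t_j+\lfloor Th\rfloor}K\Bigl(\frac{t_j-i}{Th}\Bigr)E[x_ix_i']\biggr)\lambda \ \geq\ \frac{\rho}{Th}\sum_{i=t_j-\lfloor Th\rfloor}^{t_j+\lfloor Th\rfloor}K\Bigl(\frac{t_j-i}{Th}\Bigr) \ =\ \frac{\rho}{Th}\sum_{\ell=-\lfloor Th\rfloor}^{\lfloor Th\rfloor}K\Bigl(\frac{\ell}{Th}\Bigr).
\end{align}
The right-hand side does not depend on $j$ and, by the Riemann-sum estimate already used in the proof of Lemma \ref{lemapp:bias_order} (using Lipschitz continuity of $K$ and $\int_{-1}^1K=1$), converges to $\rho$; hence it exceeds $\rho/2$ for all large $T$, uniformly in $j$. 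Taking the infimum over unit $\lambda$ yields (i) with $C=\rho/2$.

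\emph{Parts (ii) and (iii).} Write $Z_i^{(1)}\coloneqq K((t_j-i)/Th)(x_ix_i'-E[x_ix_i'])$ and $Z_i^{(2)}\coloneqq K((t_j-i)/Th)x_i\varepsilon_i$. In either case $\{Z_i^{(k)}\}_i$ is a mean-zero triangular array (using Assumption \ref{asm:dgp}(b), retained under Assumption \ref{asm:dgp_2}, for $k=2$), uniformly $L_r$-bounded with $r>2$ --- the bound for $x_i\varepsilon_i$ following from Cauchy--Schwarz and the $2r$-th moment bounds in Assumption \ref{asm:dgp_2}(a') --- and, as in the arguments behind Lemmas \ref{lemapp:variance_plim} and \ref{lemapp:weak_convergence} (cf.\ Example 17.17 of \citetapp{davidsonStochasticLimitTheory1994}), it is $L_2$-NED on the underlying $\alpha$-mixing sequence; since $0\leq K\leq\sup K<\infty$, the NED constants and the $L_r$-bound can be taken uniform in $j$ and $T$. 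The key point is that, because Assumption \ref{asm:dgp_2}(a') doubles the mixing and NED sizes relative to Assumption \ref{asm:dgp}(a), a moment inequality of Rosenthal type is available for such arrays: there are $q\in(2,r]$ and a constant $C$, uniform in $j$, $T$, and the summation range, with $E\bigl\|\sum_{i=a}^bZ_i^{(k)}\bigr\|^q\leq C(b-a+1)^{q/2}$.

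Given this, the union bound finishes the argument. Each $\Delta_{t_j}$ (resp.\ the object in (iii)) equals $(Th)^{-1}$ times a sum over at most $2\lfloor Th\rfloor+1$ consecutive indices, so $E\|\Delta_{t_j}\|^q\leq(Th)^{-q}\cdot C(2\lfloor Th\rfloor+1)^{q/2}=O((Th)^{-q/2})$, uniformly in $j$; since $j$ ranges over $O(Th)$ values,
\begin{align}
  E\Bigl[\max_{|j|\leq\lfloor Th\rfloor}\|\Delta_{t_j}\|^q\Bigr] \ \leq\ \sum_{|j|\leq\lfloor Th\rfloor}E\|\Delta_{t_j}\|^q \ =\ O\bigl((Th)^{1-q/2}\bigr) \ =\ o(1),
\end{align}
because $q>2$ and $Th\to\infty$. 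Markov's inequality then gives $\max_{|j|\leq\lfloor Th\rfloor}\|\Delta_{t_j}\|=o_p(1)$, which is (ii), and (iii) is identical with $Z_i^{(2)}$ in place of $Z_i^{(1)}$. I expect the one genuine obstacle to be the $q$-th moment bound with $q>2$: the single-index statements only need second moments and the weaker dependence of Assumption \ref{asm:dgp}, whereas taking a maximum over $O(Th)$ overlapping windows forces a higher moment, which is precisely why Assumption \ref{asm:dgp_2}(a') strengthens the mixing and NED sizes; checking that the constant in the Rosenthal-type inequality is uniform over the windows (so the union bound closes) is the remaining technical detail.
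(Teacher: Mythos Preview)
Part~(i) is essentially the paper's argument.

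For parts (ii)--(iii) you take a different route. The paper avoids a union bound: it enlarges the summation window to the $j$-independent range $[t-2\lfloor Th\rfloor,\, t+2\lfloor Th\rfloor]$, shows that under Assumption~\ref{asm:dgp_2} the kernel-weighted array is an $L_2$-mixingale of size $-1$ with constants $O(1/Th)$ (via Example~17.17 and Theorem~17.5 of \citetapp{davidsonStochasticLimitTheory1994}), and then applies Lemma~2 of \citetapp{hansenStrongLawsDependent1991} --- an $L_2$ maximal inequality --- to bound $E[\max_j\|\Delta_{t_j}\|^2]$ by $O(1/Th)$ in one step. Only second moments are needed.

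Your strategy is coherent, but the step you flag as ``the remaining technical detail'' is in fact the whole difficulty. The doubling of the NED and mixing sizes in Assumption~\ref{asm:dgp_2}(a') is calibrated so that the \emph{products} $x_ix_i'-E[x_ix_i']$ and $x_i\varepsilon_i$ become $L_2$-mixingales of size $-1$ --- exactly the hypothesis of Hansen's maximal inequality --- not to deliver an $L_q$-mixingale structure or a Rosenthal bound with $q>2$. Rosenthal-type inequalities for $L_2$-NED arrays are not standard, and with $q=2$ your union bound gives only $\sum_{|j|\leq\lfloor Th\rfloor}E\|\Delta_{t_j}\|^2=O(1)$, which does not close. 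The paper's approach sidesteps this entirely by exploiting the overlapping-window structure to reduce the maximum over $j$ to a single maximal-partial-sum problem, for which the $L_2$ tool already in hand suffices.
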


\noindent\begin{proof}
    (i) Note that under Assumptions \ref{asm:kernel} and \ref{asm:min_eigen}, for any $\lambda\neq0$, we have
    \begin{align}
        \lambda'\left(\frac{1}{Th}\sum_{i=t_j-\lfloor Th \rfloor}^{t_j+\lfloor Th \rfloor}K\left(\frac{t_j-i}{Th}\right)E\left[x_ix_i'\right]\right)\lambda 
        &\geq \frac{1}{Th}\sum_{i=t_j-\lfloor Th \rfloor}^{t_j+\lfloor Th \rfloor}K\left(\frac{t_j-i}{Th}\right)\rho\left\|\lambda\right\|^2 \\
        &=\rho\left\|\lambda\right\|^2\left(1+O\left(1/Th\right)\right)
    \end{align}
    uniformly in $j$. This implies that there exists some constant $\epsilon\in(0,1)$ such that for $T$ sufficiently large, uniformly in $j$,
    \begin{align}
        \lambda_{\min}\left(\frac{1}{Th}\sum_{i=t_j-\lfloor Th \rfloor}^{t_j+\lfloor Th \rfloor}K\left(\frac{t_j-i}{Th}\right)E\left[x_ix_i'\right]\right) \geq \rho(1-\epsilon)>0.
    \end{align}
    The proof is completed by taking $C=\rho(1-\epsilon)>0$.

    (ii) It suffices to show
    \begin{align}
        \max_{-\lfloor Th \rfloor\leq j \leq \lfloor Th \rfloor}\left\|\Delta_{t_j}\right\|
        &\leq \max_{-\lfloor Th \rfloor\leq j \leq \lfloor Th \rfloor}\left\|\frac{1}{Th}\sum_{i=t-2\lfloor Th \rfloor}^{t+2\lfloor Th \rfloor}K\left(\frac{t_j-i}{Th}\right)\left(x_ix_i'-E\left[x_ix_i'\right]\right)\right\| \\
        &\quad + \max_{-\lfloor Th \rfloor\leq j \leq \lfloor Th \rfloor}\left\|\frac{1}{Th}\sum_{i=t-2\lfloor Th \rfloor}^{t_j-\lfloor Th \rfloor-1}K\left(\frac{t_j-i}{Th}\right)\left(x_ix_i'-E\left[x_ix_i'\right]\right)\right\| \\
        &\quad + \max_{-\lfloor Th \rfloor\leq j \leq \lfloor Th \rfloor}\left\|\frac{1}{Th}\sum_{i=t_j+\lfloor Th \rfloor+1}^{t+2\lfloor Th \rfloor}K\left(\frac{t_j-i}{Th}\right)\left(x_ix_i'-E\left[x_ix_i'\right]\right)\right\| \\
        &= o_p(1). \label{bound:delta}
    \end{align}

    Following the argument of Example 17.17 of \citetapp{davidsonStochasticLimitTheory1994}, under Assumption \ref{asm:dgp_2}, $\{K((t_j-i)/Th)(x_ix_i'-E[x_ix_i'])/Th\}$ is an $L_r$-bounded ($r>2$), zero-mean $L_2$-NED triangular array of size $-1$ on an $\alpha$-mixing sequence of size $-2r/(r-2)$. By Theorem 17.5 of \citetapp{davidsonStochasticLimitTheory1994}, this array is an $L_2$-mixingale of size $-1$ with constants $c_i\leq C\max\{\sup_iE[\|x_i\|^{2r}]^{1/r}, \sup_i d_i\}/Th = O(1/Th)$ uniformly in $i$. Therefore, Lemma 2 of \citetapp{hansenStrongLawsDependent1991} can be applied to obtain
    \begin{align}
        &E\left[\max_{-\lfloor Th \rfloor\leq j \leq \lfloor Th \rfloor}\left\|\frac{1}{Th}\sum_{i=t-2\lfloor Th \rfloor}^{t+2\lfloor Th \rfloor}K\left(\frac{t_j-i}{Th}\right)\left(x_ix_i'-E\left[x_ix_i'\right]\right)\right\|^2\right] \\
        &\qquad \leq CE\left[\max_{-\lfloor Th \rfloor\leq j \leq \lfloor Th \rfloor}\left\|\frac{1}{Th}\sum_{i=t-2\lfloor Th \rfloor}^{t+2\lfloor Th \rfloor}\left(x_ix_i'-E\left[x_ix_i'\right]\right)\right\|^2\right] = O\left(\frac{1}{Th}\right),
    \end{align}
    \begin{align}
        E\left[\max_{-\lfloor Th \rfloor\leq j \leq \lfloor Th \rfloor}\left\|\frac{1}{Th}\sum_{i=t-2\lfloor Th \rfloor}^{t_j-\lfloor Th \rfloor}K\left(\frac{t_j-i}{Th}\right)\left(x_ix_i'-E\left[x_ix_i'\right]\right)\right\|^2\right] = O\left(\frac{1}{Th}\right),
        \intertext{and}
        E\left[\max_{-\lfloor Th \rfloor\leq j \leq \lfloor Th \rfloor}\left\|\frac{1}{Th}\sum_{i=t_j+\lfloor Th \rfloor}^{t+2\lfloor Th \rfloor}K\left(\frac{t_j-i}{Th}\right)\left(x_ix_i'-E\left[x_ix_i'\right]\right)\right\|^2\right] = O\left(\frac{1}{Th}\right),
    \end{align}
    which, in conjunction with the Markov inequality, proves \eqref{bound:delta}.

    (iii) Following the same argument used to prove \eqref{bound:delta}, part (iii) follows since $\{x_i\varepsilon_i\}$ is a zero-mean process that shares the same NED properties with $\{x_ix_i'-E[x_ix_i']\}$.
\end{proof}

\begin{lemapp}
    Under Assumptions \ref{asm:kernel} and \ref{asm:dgp_2}-\ref{asm:uniform_negligible}, for each $t=\lfloor Tr \rfloor, \ r\in(0,1)$, we have
    \begin{align}
        \max_{-\lfloor Th \rfloor \leq j \leq \lfloor Th \rfloor}\left\|\hat{\beta}_{t_j}-\beta_{T,t_j}\right\| = o_p(1).
    \end{align}
    \label{lemapp:beta_hat_uniform}
\end{lemapp}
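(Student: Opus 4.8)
The plan is to reuse the algebraic decomposition from the proof of Theorem~\ref{thm:bias_bandwidth}, now applied at each time point $t_j = t+j$, and to control every ingredient \emph{uniformly} over $j\in[-\lfloor Th\rfloor,\lfloor Th\rfloor]$. Set $\hat{\Omega}_{t_j}\coloneqq (Th)^{-1}\sum_{i=1}^{T}K((t_j-i)/Th)x_ix_i'$. Then
\begin{align}
    \hat{\beta}_{t_j} - \beta_{T,t_j} = \hat{\Omega}_{t_j}^{-1}\bigl(B_{t_j} + V_{t_j}\bigr),
\end{align}
where
\begin{align}
    B_{t_j} \coloneqq \frac{1}{Th}\sum_{i=1}^{T}K\Bigl(\frac{t_j-i}{Th}\Bigr)x_ix_i'(\beta_{T,i}-\beta_{T,t_j}), \qquad V_{t_j} \coloneqq \frac{1}{Th}\sum_{i=1}^{T}K\Bigl(\frac{t_j-i}{Th}\Bigr)x_i\varepsilon_i.
\end{align}
Since $\|\hat{\Omega}_{t_j}^{-1}(B_{t_j}+V_{t_j})\| \leq \|\hat{\Omega}_{t_j}^{-1}\|(\|B_{t_j}\| + \|V_{t_j}\|)$, it suffices to establish
\begin{align}
    \max_{-\lfloor Th\rfloor\leq j\leq\lfloor Th\rfloor}\|\hat{\Omega}_{t_j}^{-1}\| = O_p(1), \qquad \max_{-\lfloor Th\rfloor\leq j\leq\lfloor Th\rfloor}\|B_{t_j}\| = o_p(1), \qquad \max_{-\lfloor Th\rfloor\leq j\leq\lfloor Th\rfloor}\|V_{t_j}\| = o_p(1).
\end{align}

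For the inverse, I would use the compact support of $K$ to write $\hat{\Omega}_{t_j} = (Th)^{-1}\sum_{i=1}^{T}K((t_j-i)/Th)E[x_ix_i'] + \Delta_{t_j}$ with $\Delta_{t_j}$ as in \eqref{def:delta}, and then apply Weyl's inequality: $\lambda_{\min}(\hat{\Omega}_{t_j}) \geq \lambda_{\min}\bigl((Th)^{-1}\sum_{i=1}^{T}K((t_j-i)/Th)E[x_ix_i']\bigr) - \|\Delta_{t_j}\|$. Lemma~\ref{lemapp:uniform_order}(i) bounds the first term below by a constant $C>0$, uniformly in $j$, for $T$ large, while Lemma~\ref{lemapp:uniform_order}(ii) gives $\max_j\|\Delta_{t_j}\| = o_p(1)$. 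Hence, on an event whose probability tends to one, $\min_j\lambda_{\min}(\hat{\Omega}_{t_j}) \geq C/2 > 0$, and on that event $\|\hat{\Omega}_{t_j}^{-1}\| \leq \sqrt{p}\,\lambda_{\min}(\hat{\Omega}_{t_j})^{-1} \leq 2\sqrt{p}/C$ for every $j$; this yields $\max_j\|\hat{\Omega}_{t_j}^{-1}\| = O_p(1)$.

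The remaining two bounds are immediate from results already in hand: $\max_j\|B_{t_j}\| = o_p(1)$ is exactly Assumption~\ref{asm:uniform_negligible}, and $\max_j\|V_{t_j}\| = o_p(1)$ is Lemma~\ref{lemapp:uniform_order}(iii). Combining the three displays (the product of an $O_p(1)$ factor and an $o_p(1)$ factor is $o_p(1)$) gives $\max_j\|\hat{\beta}_{t_j}-\beta_{T,t_j}\| = o_p(1)$.

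The one point that needs care is the handling of the random inverse $\hat{\Omega}_{t_j}^{-1}$: one should first isolate the high-probability event on which $\lambda_{\min}(\hat{\Omega}_{t_j})$ is bounded away from zero uniformly in $j$, and only bound $\|\hat{\Omega}_{t_j}^{-1}\|$ on that event, since off it the inverse need not exist. No genuinely new difficulty arises, however, because the substantive maximal-inequality estimates have already been proved in Lemma~\ref{lemapp:uniform_order}; the argument here is a matter of assembling these pieces together with Weyl's inequality.
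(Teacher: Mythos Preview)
Your proposal is correct and follows essentially the same route as the paper's own proof: the same decomposition $\hat{\beta}_{t_j}-\beta_{T,t_j}=\hat{\Omega}_{t_j}^{-1}(B_{t_j}+V_{t_j})$, with $\max_j\|B_{t_j}\|$ handled by Assumption~\ref{asm:uniform_negligible}, $\max_j\|V_{t_j}\|$ by Lemma~\ref{lemapp:uniform_order}(iii), and $\max_j\|\hat{\Omega}_{t_j}^{-1}\|=O_p(1)$ via the eigenvalue perturbation bound combined with Lemma~\ref{lemapp:uniform_order}(i)--(ii). The only cosmetic difference is that the paper writes the perturbation step as $\lambda_{\min}(\hat{\Omega}_{t_j})\geq \lambda_{\min}(\cdot)-\lambda_{\max}^{1/2}(\Delta_{t_j}^2)\geq \lambda_{\min}(\cdot)-\|\Delta_{t_j}\|$ rather than invoking Weyl's inequality by name.
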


\noindent\begin{proof}
    From the decomposition given in \eqref{eqn:beta_hat_decompose}, we have
    \begin{align}
        &\max_{-\lfloor Th \rfloor \leq j \leq \lfloor Th \rfloor}\left\|\hat{\beta}_{t_j}-\beta_{T,t_j}\right\| \\
        &\leq \max_{-\lfloor Th \rfloor \leq j \leq \lfloor Th \rfloor}\left\|\Biggl(\frac{1}{Th}\sum_{i=t_j-\lfloor Th \rfloor}^{t_j+\lfloor Th \rfloor}K\left(\frac{t_j-i}{Th}\right)x_ix_i'\Biggr)^{-1}\right\| \\
        &\quad \times \left(\max_{-\lfloor Th \rfloor \leq j \leq \lfloor Th \rfloor}\left\|\frac{1}{Th}\sum_{i=t_j-\lfloor Th \rfloor}^{t_j+\lfloor Th \rfloor}K\left(\frac{t_j-i}{Th}\right) x_i\varepsilon_i\right\| \right. \\
        &\left. \hspace{2cm} + \max_{-\lfloor Th \rfloor \leq j \leq \lfloor Th \rfloor}\left\|\frac{1}{Th}\sum_{i=t_j-\lfloor Th \rfloor}^{t_j+\lfloor Th \rfloor}K\left(\frac{t_j-i}{Th}\right)x_ix_i'\left(\beta_{T,i}-\beta_{T,t_j}\right)\right\|\right). \label{bound:betahat_error}
    \end{align}
    The first term in the parentheses is $o_p(1)$ by Lemma \ref{lemapp:uniform_order}(iii), and the second term is also $o_p(1)$ by Assumption \ref{asm:uniform_negligible}. Therefore, it suffices to show
    \begin{align}
        \max_{-\lfloor Th \rfloor \leq j \leq \lfloor Th \rfloor}\left\|\Biggl(\frac{1}{Th}\sum_{i=t_j-\lfloor Th \rfloor}^{t_j+\lfloor Th \rfloor}K\left(\frac{t_j-i}{Th}\right)x_ix_i'\Biggr)^{-1}\right\|=O_p(1). \label{eqn:uniform_denominator}
    \end{align}
    Letting $\lambda_{\max}(\cdot)$ denote the maximum eigenvalue, and using Lemma \ref{lemapp:uniform_order} and the inequality $\lambda_{\max}(A)\leq \|A\| \leq \sqrt{p}\lambda_{\max}(A)$ for any $p\times p$ symmetric matrix $A\geq0$, we have
    \begin{align}
         &\max_{-\lfloor Th \rfloor \leq j \leq \lfloor Th \rfloor}\left\|\left(\frac{1}{Th}\sum_{i=t_j-\lfloor Th \rfloor}^{t_j+\lfloor Th \rfloor}K\left(\frac{t_j-i}{Th}\right)x_ix_i'\right)^{-1}\right\| \\
        &\leq \sqrt{p}\left\{\min_{-\lfloor Th \rfloor \leq j \leq \lfloor Th \rfloor}\lambda_{\min}\left(\frac{1}{Th}\sum_{i=t_j-\lfloor Th \rfloor}^{t_j+\lfloor Th \rfloor}K\left(\frac{t_j-i}{Th}\right)x_ix_i'\right)\right\}^{-1} \\
        &\leq\sqrt{p}\left\{\min_{-\lfloor Th \rfloor \leq j \leq \lfloor Th \rfloor}\lambda_{\min}\left(\frac{1}{Th}\sum_{i=t_j-\lfloor Th \rfloor}^{t_j+\lfloor Th \rfloor}K\left(\frac{t_j-i}{Th}\right)E[x_ix_i']\right) - \max_{-\lfloor Th \rfloor \leq j \leq \lfloor Th \rfloor}\lambda_{\max}^{1/2}\left(\Delta_{t_j}^2\right)\right\}^{-1} \\
        &\leq\sqrt{p}\left\{\min_{-\lfloor Th \rfloor \leq j \leq \lfloor Th \rfloor}\lambda_{\min}\left(\frac{1}{Th}\sum_{i=t_j-\lfloor Th \rfloor}^{t_j+\lfloor Th \rfloor}K\left(\frac{t_j-i}{Th}\right)E[x_ix_i']\right) - \max_{-\lfloor Th \rfloor \leq j \leq \lfloor Th \rfloor}\left\|\Delta_{t_j}\right\|\right\}^{-1} \\
        &\leq\frac{\sqrt{p}}{C-o_p(1)} = O_p(1).
    \end{align}
    This proves \eqref{eqn:uniform_denominator}.
\end{proof}

\noindent\begin{proof}[Proof of Theorem \ref{thm:variance_consistency}]
    Using $\hat{\varepsilon}_i = y_i-x_i'\hat{\beta}_i=\varepsilon_i-x_i'(\hat{\beta}_i-\beta_{T,i})$, we have
    \begin{align}
        \hat{\Sigma}(r) = B_{T,1} + B_{T,2} + B_{T,3},
    \end{align}
    where
    \begin{align}
        B_{T,1} &\coloneqq \frac{1}{Th}\sum_{i=1}^TK\left(\frac{\lfloor Tr \rfloor-i}{Th}\right)^2\varepsilon_i^2x_ix_i', \\
        B_{T,2} &\coloneqq - \frac{2}{Th}\sum_{i=1}^TK\left(\frac{\lfloor Tr \rfloor-i}{Th}\right)^2\left\{\varepsilon_ix_i'\left(\hat{\beta}_i-\beta_{T,i}\right)\right\}x_ix_i', 
        \intertext{and}
        B_{T,3} &\coloneqq \frac{1}{Th}\sum_{i=1}^TK\left(\frac{\lfloor Tr \rfloor-i}{Th}\right)^2\left(\hat{\beta}_i-\beta_{T,i}\right)'x_ix_i'\left(\hat{\beta}_i-\beta_{T,i}\right)x_ix_i'.
    \end{align}

    For $B_{T,2}$, by the fact that $K(\cdot)$ is bounded on compact support $[-1,1]$ under Assumption \ref{asm:kernel}, we obtain
    \begin{align}
        \left\|B_{T,2}\right\| \leq C \max_{t-\lfloor Th \rfloor \leq i \leq t+\lfloor Th \rfloor}\left\|\hat{\beta}_i-\beta_{T,i}\right\|\frac{2}{Th}\sum_{i=t-\lfloor Th \rfloor}^{t+\lfloor Th \rfloor}\left|\varepsilon_i\right|\left\|x_i\right\|^3.
    \end{align}
    An application of the H\"{o}lder inequality and Assumption \ref{asm:dgp_2}(a') yields
    \begin{align}
        \frac{2}{Th}\sum_{i=t-\lfloor Th \rfloor}^{t+\lfloor Th \rfloor}E\left[\left|\varepsilon_i\right|\left\|x_i\right\|^3\right] \leq \frac{2}{Th}\sum_{i=t-\lfloor Th \rfloor}^{t+\lfloor Th \rfloor}E\left[\left|\varepsilon_i\right|^4\right]^{1/4}E\left[\left\|x_i\right\|^4\right]^{3/4} = O(1).
    \end{align}
    This, in conjunction with the Markov inequality and Lemma \ref{lemapp:beta_hat_uniform}, shows $\left\|B_{T,2}\right\|=o_p(1)$. An analogous argument shows $\left\|B_{T,3}\right\|=o_p(1)$. Finally, decompose $B_{T,1}$ as
    \begin{align}
        B_{T,1} &= \frac{1}{Th}\sum_{i=1}^TK\left(\frac{\lfloor Tr \rfloor-i}{Th}\right)^2E\left[\varepsilon_i^2x_ix_i'\right]
        + \frac{1}{Th}\sum_{i=1}^TK\left(\frac{\lfloor Tr \rfloor-i}{Th}\right)^2\left(\varepsilon_i^2x_ix_i'-E\left[\varepsilon^2x_ix_i'\right]\right) \\
        &\eqqcolon B_{T,11} + B_{T,12}.
    \end{align}
    By Assumptions \ref{asm:dgp_2}(b) and (c), $B_{T,11}\to\Sigma(r)$. For $B_{T,12}$, note that $\{\varepsilon_ix_i\}$ is $L_2$-NED (of size $-1$) with respect to uniformly bounded constants under Assumption \ref{asm:dgp_2}, which is a direct consequence of Example 17.17 of \citetapp{davidsonStochasticLimitTheory1994}. It follows from Theorem 17.9 of \citetapp{davidsonStochasticLimitTheory1994} that $\{\varepsilon_i^2x_ix_i'\}$ is $L_1$-NED. Therefore, $\{K((\lfloor Tr \rfloor - i)/Th)^2(\varepsilon_i^2x_ix_i'-E[\varepsilon_i^2x_ix_i'])\}$ is an $L_{r'}$-bounded ($r'=r/2>1$), $L_1$-NED triangular array, and thus is a uniformly integrable $L_1$-mixingale \citepapp{andrewsLawsLargeNumbers1988}. Applying the law of large numbers of \citetapp{andrewsLawsLargeNumbers1988}, we deduce $B_{T,12}\stackrel{p}{\to}0$. Collecting above results gives
    \begin{align}
        \hat{\Sigma}(r) = B_{T,11} + o_p(1) \stackrel{p}{\to} \Sigma(r),
    \end{align}
    which completes the proof.
\end{proof}

Set $h_1=T^{\gamma_1}$ and $h_2=T^{\gamma_2}$ with $\gamma_2\leq\gamma_1$.
\begin{lemapp}
    Under Assumptions \ref{asm:kernel} and \ref{asm:dgp_2}, for $t=\lfloor Tr\rfloor, \ r\in(0,1)$, we have
    \begin{align}
        \max_{-\lfloor Th_2\rfloor\leq j \leq\lfloor Th_2\rfloor}\left\|\frac{1}{Th_1}\sum_{i=1}^TK\left(\frac{t+j-i}{Th_1}\right)x_i\varepsilon_i\right\| = \begin{cases}
            O_p\left(1/\sqrt{Th_2}\right) & \mathrm{if} \ \gamma_2=\gamma_1 \\
            o_p\left(1/\sqrt{Th_2}\right) & \mathrm{if} \ \gamma_2<\gamma_1
        \end{cases}.
    \end{align}
    \label{lemapp:product_uniform_order}
\end{lemapp}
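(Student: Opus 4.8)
The plan is to read the bound off the (sharpened) argument already behind Lemma \ref{lemapp:uniform_order}(iii). The first step is to note that, since $\gamma_2\leq\gamma_1<0$, for all large $T$ we have $h_2=T^{\gamma_2}\leq T^{\gamma_1}=h_1$, hence $\lfloor Th_2\rfloor\leq\lfloor Th_1\rfloor$, so that
\begin{align}
    \max_{-\lfloor Th_2\rfloor\leq j\leq\lfloor Th_2\rfloor}\left\|\frac{1}{Th_1}\sum_{i=1}^TK\!\left(\frac{t+j-i}{Th_1}\right)x_i\varepsilon_i\right\|
    \leq \max_{-\lfloor Th_1\rfloor\leq j\leq\lfloor Th_1\rfloor}\left\|\frac{1}{Th_1}\sum_{i=1}^TK\!\left(\frac{t+j-i}{Th_1}\right)x_i\varepsilon_i\right\| .
\end{align}
The right-hand side is precisely the quantity treated in Lemma \ref{lemapp:uniform_order}(iii) with the bandwidth taken to be $h_1$ (which satisfies $h_1\to0$ and $Th_1\to\infty$). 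That part only records the conclusion $o_p(1)$, but its proof — the same one used for \eqref{bound:delta} — in fact delivers the sharper rate $O_p\!\left(1/\sqrt{Th_1}\right)$, and this is what I would invoke.

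For completeness I would recall why that sharper rate holds. Since $K$ has support $[-1,1]$ (Assumption \ref{asm:kernel}), for $|j|\leq\lfloor Th_1\rfloor$ the inner sum runs only over $i$ in the block $[a,b]$ with $a\coloneqq t-2\lfloor Th_1\rfloor$ and $b\coloneqq t+2\lfloor Th_1\rfloor$, which contains $O(Th_1)$ indices. Writing $\xi_i\coloneqq x_i\varepsilon_i$ and $S_m\coloneqq\sum_{i=a}^m\xi_i$ (with $S_{a-1}\coloneqq0$), Abel summation gives, for each fixed $j$,
\begin{align}
    \sum_{i=a}^bK\!\left(\frac{t+j-i}{Th_1}\right)\xi_i = K\!\left(\frac{t+j-b}{Th_1}\right)S_b + \sum_{i=a}^{b-1}\left[K\!\left(\frac{t+j-i}{Th_1}\right)-K\!\left(\frac{t+j-i-1}{Th_1}\right)\right]S_i .
\end{align}
Because $K$ is bounded and Lipschitz, each bracketed difference is $O(1/(Th_1))$ and $b-a=4\lfloor Th_1\rfloor$, so there is a constant $C$ not depending on $j$ with $\bigl\|\sum_{i=a}^bK((t+j-i)/(Th_1))\xi_i\bigr\|\leq C\max_{a\leq m\leq b}\|S_m\|$. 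Under Assumption \ref{asm:dgp_2}, $\{x_i\varepsilon_i\}$ is a zero-mean, $L_r$-bounded ($r>2$) array that is $L_2$-NED of size $-1$ on an $\alpha$-mixing sequence of size $-2r/(r-2)$ (Example 17.17 of \citetapp{davidsonStochasticLimitTheory1994}), hence an $L_2$-mixingale of size $-1$ with constants bounded uniformly in $i$ and $T$ (Theorem 17.5 of \citetapp{davidsonStochasticLimitTheory1994}). Since the size is $-1$, one has $\sum_{m\geq0}\psi_m<\infty$, so Lemma 2 of \citetapp{hansenStrongLawsDependent1991} applies with no logarithmic factor and yields $E[\max_{a\leq m\leq b}\|S_m\|^2]\leq C\sum_{i=a}^bc_i^2=O(Th_1)$. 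Combining the two displays with Markov's inequality gives $\max_{|j|\leq\lfloor Th_1\rfloor}\|\frac{1}{Th_1}\sum_iK((t+j-i)/(Th_1))x_i\varepsilon_i\|=O_p(1/\sqrt{Th_1})$.

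Finally I would translate this into the two claimed orders: if $\gamma_2=\gamma_1$ then $Th_1=Th_2$, so $O_p(1/\sqrt{Th_1})=O_p(1/\sqrt{Th_2})$; if $\gamma_2<\gamma_1$ then $1/\sqrt{Th_1}=T^{(\gamma_2-\gamma_1)/2}/\sqrt{Th_2}$ with $T^{(\gamma_2-\gamma_1)/2}\to0$, giving $o_p(1/\sqrt{Th_2})$, which is the assertion. The step needing the most care is the maximal inequality: a $\sqrt{\log T}$ loss there would be harmless when $\gamma_2<\gamma_1$ but would destroy the conclusion in the equality case, so it is essential that the mixingale size be strictly below $-1/2$ — which is exactly what the strengthened mixing and NED rates in Assumption \ref{asm:dgp_2}(a') are there to guarantee.
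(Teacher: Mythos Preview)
Your proof is correct and follows essentially the same route as the paper: enlarge the maximum from $|j|\leq\lfloor Th_2\rfloor$ to $|j|\leq\lfloor Th_1\rfloor$, read off the rate $O_p(1/\sqrt{Th_1})$ from the argument behind Lemma~\ref{lemapp:uniform_order}(ii)--(iii) (the $L_2$-mixingale maximal inequality of \citetapp{hansenStrongLawsDependent1991} under the size~$-1$ guaranteed by Assumption~\ref{asm:dgp_2}), and translate into the two cases. The only cosmetic difference is that you spell out the uniform-in-$j$ bound via Abel summation, whereas the paper simply points back to the three-piece decomposition in~\eqref{bound:delta} and uses Cauchy--Schwarz to pass from the $L_2$ to the $L_1$ bound.
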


\noindent\begin{proof}
    Using the same argument used to prove \eqref{bound:delta} and the Cauchy-Schwarz inequality, we obtain
    \begin{align}
        E&\left[\max_{-\lfloor Th_2\rfloor\leq j \leq\lfloor Th_2\rfloor}\left\|\frac{1}{Th_1}\sum_{i=1}^TK\left(\frac{t+j-i}{Th_1}\right)x_i\varepsilon_i\right\|\right] \\
        &\leq E\left[\max_{-\lfloor Th_1\rfloor\leq j \leq\lfloor Th_1\rfloor}\left\|\frac{1}{Th_1}\sum_{i=1}^TK\left(\frac{t+j-i}{Th_1}\right)x_i\varepsilon_i\right\|^2\right]^{1/2} \\
        &=O\left(1/\sqrt{Th_1}\right),
    \end{align}
    for $\gamma_2\leq\gamma_1$. The result follows from Markov's inequality. 
\end{proof}

\begin{lemapp}
    Under Assumptions \ref{asm:kernel}, \ref{asm:dgp_2}, \ref{asm:min_eigen}, and \ref{asm:uniform_negligible_2}, for $t=\lfloor Tr\rfloor, \ r\in(0,1)$, we have
    \begin{align}
        \sqrt{Th_2}\max_{-\lfloor Th_2\rfloor\leq i \leq \lfloor Th_2\rfloor}\left\|\hat{\beta}_{t+i}(\gamma_1)-\beta_{T,t+i}\right\| = \begin{cases}
            O_p(1) & \mathrm{if} \ \gamma_2=\gamma_1 \\
            o_p(1) & \mathrm{if} \ \gamma_2<\gamma_1
        \end{cases}.
    \end{align}
    \label{lemapp:beta_hat_unifrom_strong}
\end{lemapp}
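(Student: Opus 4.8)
The plan is to follow the structure of the proof of Lemma \ref{lemapp:beta_hat_uniform}, but to track the $\sqrt{Th_2}$ scaling and, crucially, to replace the maximum over the $h_1$-neighborhood by a maximum over the coarser $h_2$-neighborhood. First I would invoke the decomposition \eqref{eqn:beta_hat_decompose}: writing $s=t+i$, for every $s$ we have
\begin{align*}
\hat{\beta}_s(\gamma_1) - \beta_{T,s} = {} & \left(\frac{1}{Th_1}\sum_{l=1}^T K\left(\frac{s-l}{Th_1}\right)x_lx_l'\right)^{-1} \\
& \times \left\{\frac{1}{Th_1}\sum_{l=1}^T K\left(\frac{s-l}{Th_1}\right)x_l\varepsilon_l + \frac{1}{Th_1}\sum_{l=1}^T K\left(\frac{s-l}{Th_1}\right)x_lx_l'\left(\beta_{T,l}-\beta_{T,s}\right)\right\},
\end{align*}
where, by the compact support of $K$ in Assumption \ref{asm:kernel}, all indices $l$ that contribute lie in $[t+i-\lfloor Th_1\rfloor, t+i+\lfloor Th_1\rfloor]\subseteq[t-2\lfloor Th_1\rfloor, t+2\lfloor Th_1\rfloor]$ once $|i|\leq\lfloor Th_2\rfloor$ (using $h_2\leq h_1$, which holds since $\gamma_2\leq\gamma_1$). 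Taking norms, multiplying by $\sqrt{Th_2}$, and maximizing over $i\in[-\lfloor Th_2\rfloor,\lfloor Th_2\rfloor]$, I would bound the left-hand side by the product of three factors: the operator norm of the inverse design matrix, the scaled "score" term, and the scaled "smoothing bias" term.

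For the inverse design matrix, I would show
\[
\max_{-\lfloor Th_2\rfloor\leq i \leq \lfloor Th_2\rfloor}\left\|\left(\frac{1}{Th_1}\sum_{l=1}^T K\left(\frac{t+i-l}{Th_1}\right)x_lx_l'\right)^{-1}\right\| = O_p(1).
\]
Since $\{t+i : |i|\leq\lfloor Th_2\rfloor\}\subseteq\{t+j : |j|\leq\lfloor Th_1\rfloor\}$, this maximum is dominated by the one appearing in \eqref{eqn:uniform_denominator} with $h$ replaced by $h_1$, so the bound $O_p(1)$ follows verbatim from the argument given there, which rests on Lemma \ref{lemapp:uniform_order}(i)--(ii) together with Assumptions \ref{asm:kernel}, \ref{asm:dgp_2}, and \ref{asm:min_eigen}.

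The remaining two factors are precisely the objects controlled by the two inputs I would quote directly. The scaled score term, $\sqrt{Th_2}\max_{|i|\leq\lfloor Th_2\rfloor}\|(Th_1)^{-1}\sum_l K((t+i-l)/Th_1)x_l\varepsilon_l\|$, is $O_p(1)$ when $\gamma_2=\gamma_1$ and $o_p(1)$ when $\gamma_2<\gamma_1$ by Lemma \ref{lemapp:product_uniform_order}; the scaled smoothing bias term, $\sqrt{Th_2}\max_{|i|\leq\lfloor Th_2\rfloor}\|(Th_1)^{-1}\sum_l K((t+i-l)/Th_1)x_lx_l'(\beta_{T,l}-\beta_{T,t+i})\|$, is $o_p(1)$ by Assumption \ref{asm:uniform_negligible_2}. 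Combining the three factors, $\sqrt{Th_2}\max_{|i|\leq\lfloor Th_2\rfloor}\|\hat{\beta}_{t+i}(\gamma_1)-\beta_{T,t+i}\|$ is bounded by $O_p(1)\cdot(O_p(1)+o_p(1))=O_p(1)$ when $\gamma_2=\gamma_1$, and by $O_p(1)\cdot(o_p(1)+o_p(1))=o_p(1)$ when $\gamma_2<\gamma_1$, which is the claim.

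The main obstacle is bookkeeping rather than a genuine analytic difficulty: one must keep the two bandwidths straight, since the estimator $\hat{\beta}_{t+i}(\gamma_1)$ is built with $h_1$ while the maximum is taken only over the narrower $h_2$-neighborhood, and it is exactly this mismatch (the possibility that $\gamma_2<\gamma_1$) that upgrades the bias control from the $o_p(1)$ of Lemma \ref{lemapp:beta_hat_uniform} to the sharper $o_p(1/\sqrt{Th_2})$ supplied by Assumption \ref{asm:uniform_negligible_2} and the sharper score bound of Lemma \ref{lemapp:product_uniform_order}. Beyond that, one only needs to verify that the index ranges behave as claimed, which is immediate from the compact support of $K$.
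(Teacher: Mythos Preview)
Your proposal is correct and follows essentially the same approach as the paper's proof: decompose $\hat{\beta}_{t+i}(\gamma_1)-\beta_{T,t+i}$ via \eqref{eqn:beta_hat_decompose}, bound the inverse design matrix uniformly by invoking \eqref{eqn:uniform_denominator} (using $h_2\leq h_1$ so the $h_2$-maximum is dominated by the $h_1$-maximum), and then apply Lemma \ref{lemapp:product_uniform_order} to the score term and Assumption \ref{asm:uniform_negligible_2} to the smoothing-bias term. The paper's argument is identical in substance, only slightly more terse in its citation of \eqref{eqn:uniform_denominator}.
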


\noindent\begin{proof}
    From \eqref{bound:betahat_error}, we have
    \begin{align}
        &\sqrt{Th_2}\max_{-\lfloor Th_2\rfloor\leq i \leq \lfloor Th_2\rfloor}\left\|\hat{\beta}_{t+i}(\gamma_1)-\beta_{T,t+i}\right\| \\
        &\leq \max_{-\lfloor Th_2 \rfloor\leq i \leq \lfloor Th_2\rfloor}\left\|\Biggl(\frac{1}{Th_1}\sum_{j=t_i-\lfloor Th_1 \rfloor}^{t_i+\lfloor Th_1 \rfloor}K\left(\frac{t_i-j}{Th_1}\right)x_jx_j'\Biggr)^{-1}\right\| \\
        &\quad \times \left(\sqrt{Th_2}\max_{-\lfloor Th_2 \rfloor \leq i \leq \lfloor Th_2 \rfloor}\left\|\frac{1}{Th_1}\sum_{j=t_i-\lfloor Th_1 \rfloor}^{t_i+\lfloor Th_1 \rfloor}K\left(\frac{t_i-j}{Th_1}\right) x_j\varepsilon_j\right\| \right. \\
        &\left. \hspace{2cm} + \sqrt{Th_2} \max_{-\lfloor Th_2 \rfloor \leq i \leq \lfloor Th_2 \rfloor}\left\|\frac{1}{Th_1}\sum_{j=t_i-\lfloor Th_1 \rfloor}^{t_i+\lfloor Th_1 \rfloor}K\left(\frac{t_i-j}{Th_1}\right)x_jx_j'\left(\beta_{T,j}-\beta_{T,t_i}\right)\right\|\right).
    \end{align}
    The desired result now follows from \eqref{eqn:uniform_denominator}, Lemma \ref{lemapp:product_uniform_order}, and Assumption \ref{asm:uniform_negligible_2}.
\end{proof}

In the proof of Theorem \ref{thm:bootstrap} below, we write, for any bootstrap statistic $S_T^*$ and any distribution $D$, $S_T^*\stackrel{d_{p^*}}{\to}D$, in probability, when convergence in distribution under the bootstrap probability measure occurs on a sequence of events with probability approaching one. We also let $E^*[\cdot]$ and $V^*[\cdot]$ denote the expectation and variance under the bootstrap measure, respectively.

\noindent\begin{proof}[Proof of Theorem \ref{thm:bootstrap}]
    Decompose $\hat{\beta}_t^*(\gamma_1,\gamma_2)-\hat{\beta}_t(\gamma_1)$ as
    \begin{align}
        &\hat{\beta}_t^*(\gamma_1,\gamma_2)-\hat{\beta}_t(\gamma_1) \\
        &=\left(\sum_{i=t-\lfloor Th_2\rfloor}^{t+\lfloor Th_2\rfloor}K\left(\frac{t-i}{Th_2}\right)x_ix_i'\right)^{-1} \sum_{i=t-\lfloor Th_2\rfloor}^{t+\lfloor Th_2\rfloor}K\left(\frac{t-i}{Th_2}\right)x_ix_i'\left(\hat{\beta}_i(\gamma_1)-\hat{\beta}_t(\gamma_1)\right) \\
        & \qquad +\left(\sum_{i=t-\lfloor Th_2\rfloor}^{t+\lfloor Th_2\rfloor}K\left(\frac{t-i}{Th_2}\right)x_ix_i'\right)^{-1} \sum_{i=t-\lfloor Th_2\rfloor}^{t+\lfloor Th_2\rfloor}K\left(\frac{t-i}{Th_2}\right)x_i\varepsilon_i^*(\gamma_1) \\
        &\eqqcolon C_{T,1} + C_{T,2}. \label{eqn:decomposition_bootstrap}
    \end{align}

    We first show
    \begin{align}
        C_{T,1} = \begin{cases}
            O_p\left(1/\sqrt{Th_2}\right) & \mathrm{if} \ \gamma_2=\gamma_1 \\
            o_p\left(1/\sqrt{Th_2}\right) & \mathrm{if} \ \gamma_2<\gamma_1
        \end{cases}.
        \label{eqn:order_ct1}
    \end{align}
    A straightforward calculation shows
    \begin{align}
        &\left\|\sum_{i=t-\lfloor Th_2\rfloor}^{t+\lfloor Th_2\rfloor}K\left(\frac{t-i}{Th_2}\right)x_ix_i'\left(\hat{\beta}_i(\gamma_1)-\hat{\beta}_t(\gamma_1)\right)\right\| \\
        &\leq \left(\max_{t-\lfloor Th_2\rfloor\leq i\leq t+\lfloor Th_2\rfloor}\left\|\beta_{T,i}-\beta_{T,t}\right\| + 2\max_{t-\lfloor Th_2\rfloor\leq i\leq t+\lfloor Th_2\rfloor}\left\|\hat{\beta}_{i}(\gamma_1)-\beta_{T,i}\right\|\right) \\
        &\qquad \times\frac{C}{Th_2}\sum_{i=t-\lfloor Th_2\rfloor}^{t+\lfloor Th_2\rfloor}\left\|x_i\right\|^2 \\
        &=\begin{cases}
            O_p\left(h_2^{\alpha}\right) + O_p\left(1/\sqrt{Th_2}\right) & \mathrm{if} \ \gamma_2=\gamma_1 \\
            O_p\left(h_2^{\alpha}\right) + o_p\left(1/\sqrt{Th_2}\right) & \mathrm{if} \ \gamma_2<\gamma_1
        \end{cases},
    \end{align}
    where the last probability order follows from Definition \ref{def:tvp_alpha_class}, Lemma \ref{lemapp:beta_hat_unifrom_strong}, and Assumption \ref{asm:dgp_2}. Since $\gamma_2\leq\gamma_1<-(2\alpha+1)^{-1}$ by assumption, we have $\sqrt{Th_2}h_2^{\alpha}=O(T^{(1+(2\alpha+1)\gamma_2)/2})=o(1)$, which implies $O_p(h_2^\alpha)=o_p(1/\sqrt{Th_2})$. In view of the fact that $(Th_2)^{-1}\sum_{i=1}^TK((t-i)/Th_2)x_ix_i'\stackrel{p}{\to}\Omega(r)>0$ by Lemma \ref{lemapp:variance_plim}, this proves \eqref{eqn:order_ct1}.

    Next, we consider $C_{T,2}$, whose numerator can be decomposed as
    \begin{align}
        \frac{1}{Th_2}\sum_{i=t-\lfloor Th_2\rfloor}^{t+\lfloor Th_2\rfloor}K\left(\frac{t-i}{Th_2}\right)x_i\varepsilon_i^*(\gamma_1)=\frac{1}{Th_2}\sum_{i=t-\lfloor Th_2\rfloor}^{t+\lfloor Th_2\rfloor}K\left(\frac{t-i}{Th_2}\right)x_i\varepsilon_i\eta_i - r_{T,t},
    \end{align}
    where $r_{T,t}\coloneqq (Th_2)^{-1}\sum_{i=t-\lfloor Th_2\rfloor}^{t+\lfloor Th_2\rfloor}K((t-i)/(Th_2))x_ix_i'\eta_i(\hat{\beta}_i(\gamma_1)-\beta_{T,i})$. $r_{T,t}$ is bounded by
    \begin{align}
        \left\|r_{T,t}\right\|\leq \max_{t-\lfloor Th_2\rfloor\leq i\leq t+\lfloor Th_2\rfloor}\left\|\hat{\beta}_i(\gamma_1)-\beta_{T,i}\right\|\times\frac{C}{Th_2}\sum_{i=t-\lfloor Th_2\rfloor}^{t+\lfloor Th_2\rfloor}\|x_i\|^2|\eta_i|.
    \end{align}
    The second term satisfies
    \begin{align}
        E^*\left[\frac{C}{Th_2}\sum_{i=t-\lfloor Th_2\rfloor}^{t+\lfloor Th_2\rfloor}\|x_i\|^2|\eta_i|\right] 
        &= \frac{C}{Th_2}\sum_{i=t-\lfloor Th_2\rfloor}^{t+\lfloor Th_2\rfloor}\|x_i\|^2E^*[|\eta_i|]=O_p(1).        
    \end{align}
    This implies that for any $\epsilon>0$, there exists a (large) $T_1$ such that $(C/Th_2)\sum_{i=t-\lfloor Th_2\rfloor}^{t+\lfloor Th_2\rfloor}\|x_i\|^2|\eta_i| = O_{p^*}(1)$ with probability at least $1-\epsilon$ for all $T\geq T_1$. This, in conjunction with Lemma \ref{lemapp:beta_hat_unifrom_strong}, yields
    \begin{align}
        r_{T,t} = \begin{cases}
            O_{p^*}\left(1/\sqrt{Th_2}\right) & \mathrm{if} \ \gamma_2=\gamma_1 \\
            o_{p^*}\left(1/\sqrt{Th_2}\right) & \mathrm{if} \ \gamma_2<\gamma_1
        \end{cases},
    \end{align}
    with arbitrarily high probability for $T$ sufficiently large. Consequently, we obtain
    \begin{align}
        C_{T,2} = \left(\frac{1}{Th_2}\sum_{i=t-\lfloor Th_2\rfloor}^{t+\lfloor Th_2\rfloor}K\left(\frac{t-i}{Th_2}\right)x_ix_i'\right)^{-1} \frac{1}{Th_2}\sum_{i=t-\lfloor Th_2\rfloor}^{t+\lfloor Th_2\rfloor}K\left(\frac{t-i}{Th_2}\right)x_i\varepsilon_i\eta_i + r_{T,t}^*, \\
        \label{eqn:ct2_decomposition}
    \end{align}
    where $r_{T,t}^* \coloneqq ((1/Th_2)\sum_{i=t-\lfloor Th_2\rfloor}^{t+\lfloor Th_2\rfloor}K((t-i)/Th_2)x_ix_i')^{-1}\times r_{T,t}$ has the same asymptotic order as $r_{T,t}$.

    Substituting \eqref{eqn:order_ct1} and \eqref{eqn:ct2_decomposition} into \eqref{eqn:decomposition_bootstrap} gives
    \begin{align}
        \sqrt{Th_2}&\left(\hat{\beta}_t^*(\gamma_1,\gamma_2)-\hat{\beta}_t(\gamma_1)-R_{T,t}^*\right) \\
        &= \left(\frac{1}{Th_2}\sum_{i=t-\lfloor Th_2\rfloor}^{t+\lfloor Th_2\rfloor}K\left(\frac{t-i}{Th_2}\right)x_ix_i'\right)^{-1} \frac{1}{\sqrt{Th_2}}\sum_{i=t-\lfloor Th_2\rfloor}^{t+\lfloor Th_2\rfloor}K\left(\frac{t-i}{Th_2}\right)x_i\varepsilon_i\eta_i,
    \end{align}
    where $R_{T,t}^*$ satisfies the condition stated in Theorem \ref{thm:bootstrap}.
    
    Now, if we show
    \begin{align}
        \frac{1}{\sqrt{Th_2}}\sum_{i=t-\lfloor Th_2\rfloor}^{t+\lfloor Th_2\rfloor}K\left(\frac{t-i}{Th_2}\right)x_i\varepsilon_i\eta_i \stackrel{d_{p^*}}{\to} N\left(0,\Sigma(r)\right), \ \mathrm{in}\ \mathrm{probability},
        \label{wc:bootstrap}
    \end{align}
    then the proof is completed by the CMT and Polya's theorem, noting that the normal distribution is everywhere continuous. Take any unit vector $\lambda\in \mathbb{R}^p$, and let $\zeta_{T,i}^*\coloneqq (Th_2)^{-1/2}\lambda'K((t-i)/Th_2)x_i\varepsilon_i\eta_i$. Note that $E^*[\sum_{i=1}^T\zeta_{T,i}^*]=0$, and $V^*[\sum_{i=1}^T\zeta_{T,i}^*]=\lambda'(Th_2)^{-1}\sum_{i=t-\lfloor Th_2\rfloor}^{i=t+\lfloor Th_2\rfloor}K((t-i)/Th_2)^2x_ix_i'\varepsilon_i^2\lambda\stackrel{p}{\to}\lambda'\Sigma(r)\lambda>0$, as shown in the proof of Theorem \ref{thm:variance_consistency}. To show that $\sum_{i=1}^T\zeta_{T,i}^*\stackrel{d_{p^*}}{\to}N(0,\lambda'\Sigma(r)\lambda)$, in probability, we check Liapunov's condition \citep[e.g., Theorem 23.11 of][]{davidsonStochasticLimitTheory1994}. For $\delta>1$, we have
    \begin{align}
        \sum_{i=t-\lfloor Th_2\rfloor}^{t+\lfloor Th_2 \rfloor}E^*\left[\left|\zeta_{T,i}^*\right|^{2\delta}\right] 
        &= \frac{1}{(Th_2)^\delta}\sum_{i=t-\lfloor Th_2\rfloor}^{t+\lfloor Th_2\rfloor}K\left(\frac{t-i}{Th_2}\right)^{2\delta}|\lambda x_i\varepsilon_i|^{2\delta}E^*\left[|\eta_i|^{2\delta}\right] \\
        &=O_p\left((Th_2)^{1-\delta}\right) = o_p(1),
    \end{align}
    since $E\left[\left|\sum_{i=t-\lfloor Th_2\rfloor}^{t+\lfloor Th_2\rfloor}K((t-i)/Th_2)^{2\delta}|\lambda x_i\varepsilon_i|^{2\delta}E^*[|\eta_i|^{2\delta}]\right|\right]\leq CTh_2\sup_iE[\|x_i\|^{4\delta}]^{1/2}E[|\varepsilon_i|^{4\delta}]^{1/2}=O(Th_2)$ under Assumption \ref{asm:dgp_2}. Therefore, \eqref{wc:bootstrap} follows from Liapunov's CLT and the Cramer-Wold device. This completes the proof.
\end{proof}

\titleformat*{\section}{\Large\bfseries\centering}
\section*{Appendix B: MSE-Minimizing Bandwidth in the Case of Rescaled Random Walk Coefficients}

\setcounter{equation}{0}
\renewcommand{\theequation}{B.\arabic{equation}}
\setcounter{prop}{0}
\renewcommand{\theprop}{B.\arabic{prop}}
\setcounter{asm}{0}
\renewcommand{\theasm}{B.\arabic{asm}}
\setcounter{section}{0}
\renewcommand{\thesection}{B.\arabic{section}}
\setcounter{figure}{0}
\renewcommand{\thefigure}{B.\arabic{figure}}
\setcounter{table}{0}
\renewcommand{\thetable}{B.\arabic{table}}

In this appendix, we show that, in the case of random-walk coefficients, the bandwidth that minimizes the MSE of the kernel-based estimator is proportional to $T^{-1/2}$. In what follows, we will assume that $Th$ is an integer for simplicity.

\titleformat*{\section}{\large\bfseries}
\section{A simple case}

To gain some insight, we begin with the following local-level model:
\begin{align}
    y_t = \beta_{T,t} + \varepsilon_t, \label{model:rw_simple}    
\end{align}
where $\beta_{T,t} = T^{-1/2}\sum_{i=1}^tu_i$.

\begin{asm}
    $(\varepsilon_t,u_t)$ is an i.i.d. sequence with mean zero and variance $\Sigma=\mathrm{diag}(\sigma_\varepsilon^2,\sigma_u^2)$. Moreover, $\varepsilon_t$ and $u_t$ are independent.
    \label{asm:rw_mse_simple}
\end{asm}

We estimate $\beta_{T,t}$ using $\hat{\beta}_t$ with $K(\cdot)$ being the uniform kernel, that is, $\hat{\beta}_t = (2Th+1)^{-1}\sum_{i=t-Th}^{t+Th}y_i$. Let $\mathrm{MSE}(h) \coloneqq E[(\hat{\beta_t}-\beta_{T,t})^2]$ denote the MSE of $\hat{\beta}_t$ as a function of bandwidth parameter $h$. 

From model \eqref{model:rw_simple}, $\hat{\beta}_t - \beta_{T,t}$ admits the following decomposition:
\begin{align}
    \hat{\beta}_t - \beta_{T,t} &= \frac{1}{2Th+1}\sum_{i=t-Th}^{t+Th}(\beta_{T,i} - \beta_{T,t}) + \frac{1}{2Th + 1}\sum_{i=t-Th}^{t+Th}\varepsilon_i \\
    &= -\frac{1}{2Th+1}\sum_{i=t-Th}^{t-1}\Bigl(\frac{1}{\sqrt{T}}\sum_{k=i+1}^tu_k\Bigr) + \frac{1}{2Th+1}\sum_{i=t+1}^{t+Th}\Bigl(\frac{1}{\sqrt{T}}\sum_{k=t+1}^iu_k\Bigr) + \frac{1}{2Th + 1}\sum_{i=t-Th}^{t+Th}\varepsilon_i.
\end{align}

Given that $(\varepsilon_t,u_t)$ and $(\varepsilon_s,u_s)$ ($t\neq s$) are independent, and that $\varepsilon_t$ and $u_t$ are independent, we have
\begin{align}
    \mathrm{MSE}(h) &= \Bigl(\frac{1}{2Th+1}\Bigr)^2\biggl\{E\Bigl[\bigl(\sum_{i=t-Th}^{t-1}\frac{1}{\sqrt{T}}\sum_{k=i+1}^tu_k\bigr)^2\Bigr] + E\Bigl[\bigl(\sum_{i=t+1}^{t+Th}\frac{1}{\sqrt{T}}\sum_{k=t+1}^iu_k\bigr)^2\Bigr] + (2Th+1)\sigma_\varepsilon^2\biggr\} \\
    &= \Bigl(\frac{1}{2Th+1}\Bigr)^2\biggl\{\frac{1}{T}E\Bigl[\bigl(\sum_{i=1}^{Th}(Th-i+1)u_{t-i+1}\bigr)^2\Bigr] + \frac{1}{T}E\Bigl[\bigl(\sum_{i=1}^{Th}(Th-i+1)u_{t+i}\bigr)^2\Bigr]\biggr\} + \frac{\sigma_\varepsilon^2}{2Th+1} \\
    &= \frac{2\sigma_u^2}{(2Th+1)^2T}\frac{Th(Th+1)(2Th+1)}{6} + \frac{\sigma_\varepsilon^2}{2Th+1} \\
    &= \frac{\sigma_u^2h(1+o(1))}{6(1+o(1))} + \frac{\sigma_\varepsilon^2}{2Th(1+o(1))}.
\end{align}
Ignoring the $o(1)$ terms, the MSE of $\hat{\beta}_t$ is asymptotically
\begin{align}
    \mathrm{MSE}(h) = \frac{\sigma_u^2}{6}h + \frac{\sigma_\varepsilon^2}{2T}h^{-1}. \label{MSE:simple}
\end{align}
Letting $h_{\mathrm{min}}$ denote the minimizer of \eqref{MSE:simple}, it can be easily shown that
\begin{align}
    h_{\mathrm{min}} = \Biggl(\frac{3\sigma_\varepsilon^2}{\sigma_u^2}\Biggr)^{1/2}T^{-1/2}.
\end{align}
Therefore, the MSE-minimizing bandwidth is proportional to $T^{-1/2}$.

\titleformat*{\section}{\large\bfseries}
\section{A general case}

The argument above can be extended to the multiple regression. Suppose we are interested in the following model: $y_t= x_t'\beta_{T,t} + \varepsilon_t$, where $\beta_{T,t} = T^{-1/2}\sum_{i=1}^tu_i$ is a $p$-dimensional rescaled random walk driven by $u_t=(u_{t,1}, \ldots, u_{t,p})'$. We impose the following assumption.
\begin{asm}
    \begin{itemize}
        \item[(a)] $\{x_t\}_t$ is a $p$-dimensional stationary sequence with $E[x_1x_1'] >0$.

        \item[(b)] $\{(\varepsilon_t,u_t')\}_t$ is a $(p+1)$-dimensional i.i.d. sequence that is independent of $\{x_t\}_t$ and has mean zero and variance $\mathrm{diag}(\sigma_\varepsilon^2, \sigma_u^2I_p)$. Moreover, $\varepsilon_t$ and $u_t$ are independent.

        \item[(c)] There exist nonrandom matrices $\Omega>0$, $\Lambda$, $\bar{\Lambda}$, and $\Xi$ such that $\Lambda \bar{\Lambda} + \bar{\Lambda}\Lambda - 2\Xi > 0$, $(2Th)^{-1}\sum_{i=t-Th}^{t+Th}x_ix_i' \stackrel{p}{\to}\Omega$, $(Th)^{-1}\sum_{i=t-Th}^{t-1}x_ix_i' \stackrel{p}{\to}\Lambda$, $(Th)^{-1}\sum_{i=1}^{Th}\Bigl(\frac{i}{Th}\Bigr)x_{t-i}x_{t-i}' \stackrel{p}{\to}\bar{\Lambda}$, and $(Th)^{-1}\sum_{i=t-Th}^{t-1}x_ix_i'(Th)^{-1}\sum_{j=t-Th}^i\frac{i-j}{Th}x_jx_j' \stackrel{p}{\to} \Xi$ as $T\to\infty$.

        \item[(d)] Matrices $\{(2Th)^{-1}\sum_{i=t-Th}^{t+Th}x_ix_i'\}^{-1}$, $(Th)^{-1}\sum_{i=t-Th}^{t-1}x_ix_i', \ (Th)^{-1}\sum_{i=1}^{Th}\Bigl(\frac{i}{Th}\Bigr)x_{t-i}x_{t-i}'$, $(Th)^{-1}\sum_{i=t-Th}^{t-1}x_ix_i'(Th)^{-1}\sum_{j=t-Th}^i\frac{i-j}{Th}x_jx_j'$, and their products are all uniformly integrable.
    \end{itemize}
    \label{asm:rw_mse_general}
\end{asm}

Assumptions \ref{asm:rw_mse_general}(a)-(b) extend Assumption \ref{asm:rw_mse_simple} to the case of the multiple regression. Assumption \ref{asm:rw_mse_general}(c) will hold if $x_tx_t' - E[x_tx_t']$ satisfies the condition of the law of large numbers. In this case, we will have $\Omega=E[x_1x_1']$, $\Lambda = E[x_1x_1']$, $\bar{\Lambda}=E[x_1x_1']/2$, and $\Xi=E[x_1x_1']^2/6$ under Assumption \ref{asm:rw_mse_general}(a). Assumption \ref{asm:rw_mse_general} (d) holds if all the matrices mentioned are uniformly bounded.

The estimator of $\beta_{T,t}$ is $\hat{\beta}_t = (\sum_{i=t-Th}^{t+Th}x_ix_i')^{-1}\sum_{i=t-Th}^{t+Th}x_iy_i$.

\begin{prop}
    Under Assumption \ref{asm:rw_mse_general}, we have
    \begin{align}
        \mathrm{MSE}(h) = \frac{\sigma_u^2h}{4}\mathrm{tr}&[\Omega^{-1}(\Lambda\bar{\Lambda} + \bar{\Lambda}\Lambda - 2\Xi)\Omega^{-1}](1+o(1)) + \frac{\sigma_\varepsilon^2}{2Th}\mathrm{tr}[\Omega^{-1}](1+o(1)).
    \end{align}
\end{prop}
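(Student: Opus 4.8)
The plan is to split $\hat\beta_t-\beta_{T,t}=V_T+B_T$ into a noise term and a bias term,
\begin{align}
V_T\coloneqq\Bigl(\sum_{i=t-Th}^{t+Th}x_ix_i'\Bigr)^{-1}\sum_{i=t-Th}^{t+Th}x_i\varepsilon_i,\qquad
B_T\coloneqq\Bigl(\sum_{i=t-Th}^{t+Th}x_ix_i'\Bigr)^{-1}\sum_{i=t-Th}^{t+Th}x_ix_i'(\beta_{T,i}-\beta_{T,t}),
\end{align}
and treat the two contributions to $\mathrm{MSE}(h)=E\|\hat\beta_t-\beta_{T,t}\|^2$ separately. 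First I would observe that, since $\{\varepsilon_t\}$ has mean zero and is independent of $\{x_t\}$ and $\{u_t\}$, conditioning on $\{x_i\}$ gives $E[V_T\mid\{x_i\},\{u_i\}]=0$, so the cross term drops and $\mathrm{MSE}(h)=E\|V_T\|^2+E\|B_T\|^2$. For the noise term, $E[V_TV_T'\mid\{x_i\}]=\sigma_\varepsilon^2(\sum_{i=t-Th}^{t+Th}x_ix_i')^{-1}=(2Th)^{-1}\sigma_\varepsilon^2\widehat\Omega^{-1}$, where $\widehat\Omega\coloneqq(2Th)^{-1}\sum_{i=t-Th}^{t+Th}x_ix_i'\stackrel{p}{\to}\Omega$, and Assumption~\ref{asm:rw_mse_general}(c)--(d) (convergence in probability plus uniform integrability) upgrade this to $E\|V_T\|^2=(2Th)^{-1}\sigma_\varepsilon^2\,\mathrm{tr}(\Omega^{-1})(1+o(1))$, which is the second term in the statement.

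The bulk of the work is the bias term. Using $\beta_{T,i}-\beta_{T,t}=-T^{-1/2}\sum_{k=i+1}^t u_k$ for $i<t$ and $=T^{-1/2}\sum_{k=t+1}^i u_k$ for $i>t$ (the $i=t$ summand vanishes), I would write the numerator of $B_T$ as $T^{-1/2}(\Sigma_+-\Sigma_-)$ with $\Sigma_-\coloneqq\sum_{i=t-Th}^{t-1}x_ix_i'\sum_{k=i+1}^t u_k$ and $\Sigma_+\coloneqq\sum_{i=t+1}^{t+Th}x_ix_i'\sum_{k=t+1}^i u_k$. Conditionally on $\{x_i\}$, $\Sigma_-$ and $\Sigma_+$ are functions of the disjoint blocks $(u_k)_{k\le t}$ and $(u_k)_{k>t}$ and both have conditional mean zero, so $E[(\Sigma_+-\Sigma_-)(\Sigma_+-\Sigma_-)'\mid\{x_i\}]=E[\Sigma_+\Sigma_+'\mid\{x_i\}]+E[\Sigma_-\Sigma_-'\mid\{x_i\}]$. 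Since $\mathrm{Cov}(\sum_{k=i+1}^t u_k,\sum_{k=j+1}^t u_k)=\sigma_u^2(t-\max(i,j))I_p$, we have $E[\Sigma_-\Sigma_-'\mid\{x_i\}]=\sigma_u^2\sum_{i,j=t-Th}^{t-1}(t-\max(i,j))x_ix_i'x_jx_j'$. The crucial move is the elementary identity $t-\max(i,j)=\tfrac12\bigl((t-i)+(t-j)-|i-j|\bigr)$, which turns this into
\begin{align}
E[\Sigma_-\Sigma_-'\mid\{x_i\}]
&=\frac{\sigma_u^2}{2}\Bigl(\sum_{i=t-Th}^{t-1}(t-i)x_ix_i'\Bigr)\Bigl(\sum_{j=t-Th}^{t-1}x_jx_j'\Bigr)
+\frac{\sigma_u^2}{2}\Bigl(\sum_{i=t-Th}^{t-1}x_ix_i'\Bigr)\Bigl(\sum_{j=t-Th}^{t-1}(t-j)x_jx_j'\Bigr) \\
&\quad-\frac{\sigma_u^2}{2}\sum_{i,j=t-Th}^{t-1}|i-j|\,x_ix_i'x_jx_j'.
\end{align}
By Assumption~\ref{asm:rw_mse_general}(c) the first two products equal $(Th)^3\bar\Lambda\Lambda(1+o_p(1))$ and $(Th)^3\Lambda\bar\Lambda(1+o_p(1))$, while $\sum_{i,j}|i-j|x_ix_i'x_jx_j'=\sum_{i\ge j}(i-j)x_ix_i'x_jx_j'+\sum_{i<j}(j-i)x_ix_i'x_jx_j'$, the two one-sided sums being transposes of one another up to a negligible diagonal and each equal, after dividing by $(Th)^3$, to $\Xi(1+o_p(1))$; hence $E[\Sigma_-\Sigma_-'\mid\{x_i\}]=\tfrac12\sigma_u^2(Th)^3(\Lambda\bar\Lambda+\bar\Lambda\Lambda-2\Xi)(1+o_p(1))$. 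Repeating the computation for $\Sigma_+$ and transferring the limits of Assumption~\ref{asm:rw_mse_general}(c) from the block $[t-Th,t-1]$ to $[t+1,t+Th]$ by stationarity of $\{x_t\}$ gives the same expression, so $E[(\Sigma_+-\Sigma_-)(\Sigma_+-\Sigma_-)'\mid\{x_i\}]=\sigma_u^2(Th)^3(\Lambda\bar\Lambda+\bar\Lambda\Lambda-2\Xi)(1+o_p(1))$.

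To finish I would write $E\|B_T\|^2=T^{-1}E\,\mathrm{tr}\bigl((\sum_{i=t-Th}^{t+Th}x_ix_i')^{-1}E[(\Sigma_+-\Sigma_-)(\Sigma_+-\Sigma_-)'\mid\{x_i\}](\sum_{i=t-Th}^{t+Th}x_ix_i')^{-1}\bigr)$, substitute $(\sum x_ix_i')^{-1}=(2Th)^{-1}\widehat\Omega^{-1}$ together with the last display, and use Assumption~\ref{asm:rw_mse_general}(c)--(d) to replace $\widehat\Omega^{-1}$ by $\Omega^{-1}$ inside the expectation (uniform integrability turning $o_p(1)$ into $o(1)$); this gives $E\|B_T\|^2=T^{-1}\sigma_u^2(Th)^3(2Th)^{-2}\mathrm{tr}\bigl(\Omega^{-1}(\Lambda\bar\Lambda+\bar\Lambda\Lambda-2\Xi)\Omega^{-1}\bigr)(1+o(1))=\tfrac14\sigma_u^2 h\,\mathrm{tr}\bigl(\Omega^{-1}(\Lambda\bar\Lambda+\bar\Lambda\Lambda-2\Xi)\Omega^{-1}\bigr)(1+o(1))$. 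Adding the noise and bias parts yields the stated expansion.

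The main obstacle is the bias term: a naive expansion of $E[\Sigma_-\Sigma_-'\mid\{x_i\}]$ brings in partial sums $\sum_{i=t-Th}^{k}x_ix_i'$ of growing length whose limits are not among the quantities named in Assumption~\ref{asm:rw_mse_general}(c), and the identity $t-\max(i,j)=\tfrac12((t-i)+(t-j)-|i-j|)$ is precisely what collapses everything onto the four controlled averages ($\widehat\Omega$, $(Th)^{-1}\sum x_ix_i'$, $(Th)^{-2}\sum(t-i)x_ix_i'$, and the $(i-j)$-weighted double sum). Secondary points needing care are: promoting $o_p(1)$ to $o(1)$ under the expectations, which is the role of Assumption~\ref{asm:rw_mse_general}(d); obtaining the forward-block analogues of Assumption~\ref{asm:rw_mse_general}(c) from stationarity; and the identification of $\sum_{i,j}|i-j|x_ix_i'x_jx_j'$ with $2(Th)^3\Xi$, which is exact only when $\Xi$ is symmetric (equivalently, when the matrices $x_tx_t'$ commute in the limit, e.g.\ under the law of large numbers discussed after Assumption~\ref{asm:rw_mse_general}); in general one obtains $(Th)^3(\Xi+\Xi')$, and this is the sense in which $2\Xi$ enters the statement.
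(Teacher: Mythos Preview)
Your proposal is correct and follows essentially the same route as the paper: decompose $\hat\beta_t-\beta_{T,t}$ into noise and bias, condition on $\{x_i\}$, compute $E[\Sigma_\pm\Sigma_\pm'\mid X]$ as a $(t-\max(i,j))$-weighted double sum, reduce to $\Lambda,\bar\Lambda,\Xi$, and pass to unconditional expectations via uniform integrability. The only real difference is algebraic: the paper splits the double sum by cases $j\le i$ and $j>i$, writing $(t-i)=(t-j)+(j-i)$ on the first piece to obtain $\Lambda\bar\Lambda-\Xi$ for the backward block and $\bar\Lambda\Lambda-\Xi$ for the forward block directly, whereas you symmetrize via $t-\max(i,j)=\tfrac12\bigl((t-i)+(t-j)-|i-j|\bigr)$ and get $\tfrac12(\Lambda\bar\Lambda+\bar\Lambda\Lambda-\Xi-\Xi')$ for each block. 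The two expressions agree because $E[\Sigma_-\Sigma_-'\mid X]$ is symmetric, which forces $\Lambda\bar\Lambda-\Xi=\bar\Lambda\Lambda-\Xi'$ in the limit; your remark that the $2\Xi$ in the statement should be read as $\Xi+\Xi'$ (automatic under the LLN discussion after Assumption~\ref{asm:rw_mse_general}, where $\Xi=E[x_1x_1']^2/6$ is symmetric) is exactly right and is implicit in the paper as well.
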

Checking the first and second order conditions, one can easily verify that the MSE-minimizing $h$ is proportional to $T^{-1/2}$.

\noindent \begin{proof}
    Note that $ \hat{\beta}_t - \beta_{T,t} = \Bigl(\sum_{i=t-Th}^{t+Th}x_ix_i'\Bigr)^{-1}\Biggl\{\sum_{i=t-Th}^{t+Th}x_ix_i'(\beta_{T,i}-\beta_{T,t}) + \sum_{i=t-Th}^{t+Th}x_i\varepsilon_i\Biggr\}$. The conditional MSE given $X_T\coloneqq\{x_t\}_{t=1}^T$ is
    \begin{align}
        E&[\|\hat{\beta}_t - \beta_{T,t}\|^2|X_T]= E[\mathrm{tr}[(\hat{\beta}_t-\beta_{T,t})(\hat{\beta}_t-\beta_{T,t})']|X_T] \\
        &= \mathrm{tr}\Biggl[\Bigl(\sum_{i=t-Th}^{t+Th}x_ix_i'\Bigr)^{-1}E\biggl[\Bigl\{-\sum_{i=t-Th}^{t-1}x_ix_i'\frac{1}{\sqrt{T}}\sum_{k=i+1}^tu_k + \sum_{i=t+1}^{t+Th}x_ix_i'\frac{1}{\sqrt{T}}\sum_{k=t+1}^iu_k + \sum_{i=t-Th}^{t+Th}x_i\varepsilon_i\Bigr\} \\
        &\quad \times\Bigl\{-\sum_{i=t-Th}^{t-1}x_ix_i'\frac{1}{\sqrt{T}}\sum_{k=i+1}^tu_k + \sum_{i=t+1}^{t+Th}x_ix_i'\frac{1}{\sqrt{T}}\sum_{k=t+1}^iu_k
        + \sum_{i=t-Th}^{t+Th}x_i\varepsilon_i\Bigr\}'|X_T\biggr]\Bigl(\sum_{i=t-Th}^{t+Th}x_ix_i'\Bigr)^{-1} \Biggr] \\
        &= \mathrm{tr}\Biggl[\Bigl(\sum_{i=t-Th}^{t+Th}x_ix_i'\Bigr)^{-1}E\biggl[\frac{1}{T}\Bigl(\sum_{i=t-Th}^{t-1}x_ix_i'\sum_{k=i+1}^tu_k\Bigr)\Bigl(\sum_{i=t-Th}^{t-1}x_ix_i'\sum_{k=i+1}^tu_k\Bigr)' \\
        &\qquad + \frac{1}{T}\Bigl(\sum_{i=t+1}^{t + th}x_ix_i'\sum_{k=t+1}^iu_k\Bigr)\Bigl(\sum_{i=t+1}^{t + th}x_ix_i'\sum_{k=t+1}^iu_k\Bigr)' + \Bigl(\sum_{i=t-Th}^{t+Th}x_i\varepsilon_i\Bigr)\Bigl(\sum_{i=t-Th}^{t+Th}x_i\varepsilon_i\Bigr)'|X_T\biggr]\Bigl(\sum_{i=t-Th}^{t+Th}x_ix_i'\Bigr)^{-1}\Biggr], \label{eqn:mse_expand}
    \end{align}
    where the last equality follows from Assumption \ref{asm:rw_mse_general}(b).

    Consider each of the three terms in the conditional expectation in \eqref{eqn:mse_expand}.
    \begin{align}
        E\biggl[&\frac{1}{T}\Bigl(\sum_{i=t-Th}^{t-1}x_ix_i'\sum_{k=i+1}^tu_k\Bigr)\Bigl(\sum_{i=t-Th}^{t-1}x_ix_i'\sum_{k=i+1}^tu_k\Bigr)'|X_T\biggr] \\
        &=\frac{1}{T}\sum_{i=t-Th}^{t-1}\sum_{j=t-Th}^{t-1}x_ix_i'E\biggl[\sum_{k=i+1}^tu_k\sum_{l=j+1}^tu_l'\biggr]x_jx_j' \\
        &=\frac{1}{T}\sum_{i=t-Th}^{t-1}\sum_{j=t-Th}^{i}x_ix_i'\sum_{k=i+1}^tE[u_ku_k']x_jx_j' + \frac{1}{T}\sum_{i=t-Th}^{t-1}\sum_{j=i+1}^{t-1}x_ix_i'\sum_{k=j+1}^tE[u_ku_k']x_jx_j' \\
        &=\frac{\sigma_u^2}{T}\sum_{i=t-Th}^{t-1}x_ix_i'\biggl(\sum_{j=t-Th}^i(t-j+j-i)x_jx_j'+\sum_{j=i+1}^{t-1}(t-j)x_jx_j'\biggr) \\
        &=\frac{\sigma_u^2}{T}\biggl(\sum_{i=t-Th}^{t-1}x_ix_i'\sum_{j=t-Th}^{t-1}(t-j)x_jx_j' - \sum_{i=t-Th}^{t-1}x_ix_i'\sum_{j=t-Th}^i(i-j)x_jx_j'\biggr), \label{conditional_mse_1}
    \end{align}
    where we used the independence between $\{u_t\}$ and $\{x_t\}$, the serial independence of $\{u_t\}$, and $E[u_tu_t']=\sigma_u^2I_p$. Similarly, the second term becomes
    \begin{align}
        E\biggl[&\frac{1}{T}\Bigl(\sum_{i=t+1}^{t+Th}x_ix_i'\sum_{k=t+1}^iu_k\Bigr)\Bigl(\sum_{i=t+1}^{t+Th}x_ix_i'\sum_{k=t+1}^iu_k\Bigr)'|X_T\biggr]=\frac{1}{T}\sum_{i=t+1}^{t+Th}\sum_{j=t+1}^{t+Th}x_ix_i'E\biggl[\sum_{k=t+1}^iu_k\sum_{l=t+1}^ju_l'\biggr]x_jx_j' \\
        &=\frac{1}{T}\sum_{i=t+1}^{t+Th}\biggl(x_ix_i'\sum_{j=t+1}^{i}\sum_{k=t+1}^jE[u_ku_k']x_jx_j' + x_ix_i'\sum_{j=i+1}^{t+Th}\sum_{k=t+1}^iE[u_ku_k']x_jx_j'\biggr) \\
        &=\frac{\sigma_u^2}{T}\sum_{i=t+1}^{t+Th}x_ix_i'\biggl(\sum_{j=t+1}^i(j-i+i-t)x_jx_j'+\sum_{j=i+1}^{t+Th}(i-t)x_jx_j'\biggr) \\
        &=\frac{\sigma_u^2}{T}\biggl(\sum_{i=t+1}^{t+Th}(i-t)x_ix_i'\sum_{j=t+1}^{t+Th}x_jx_j' - \sum_{i=t+1}^{t+Th}x_ix_i'\sum_{j=t+1}^i(i-j)x_jx_j'\biggr). \label{conditional_mse_2}
    \end{align}
    The last term in the conditional expectation in \eqref{eqn:mse_expand} is
    \begin{align}
        E\biggl[\Bigl(\sum_{i=t-Th}^{t+Th}x_i\varepsilon_i\Bigr)\Bigl(\sum_{i=t-Th}^{t+Th}x_i\varepsilon_i\Bigr)'|X_T\biggr] = \sum_{i=t-Th}^{t+Th}\sum_{j=t-Th}^{t+Th}x_ix_j'E[\varepsilon_i\varepsilon_j]=\sigma_\varepsilon^2\sum_{i=t-Th}^{t+Th}x_ix_i'. \label{conditional_mse_3}
    \end{align}
    Substituting \eqref{conditional_mse_1}, \eqref{conditional_mse_2}, and \eqref{conditional_mse_3} into \eqref{eqn:mse_expand} yields
    \begin{align}
        E[&\|\hat{\beta}_t-\beta_{T,t}\|^2|X_T] \\
        &=\mathrm{tr}\biggl[\biggl(\sum_{i=t-Th}^{t+Th}x_ix_i'\biggr)^{-1}\biggl\{\frac{\sigma_u^2}{T}\biggl(\sum_{i=t-Th}^{t-1}x_ix_i'\sum_{j=t-Th}^{t-1}(t-j)x_jx_j' - \sum_{i=t-Th}^{t-1}x_ix_i'\sum_{j=t-Th}^i(i-j)x_jx_j' \\
        & \qquad + \sum_{i=t+1}^{t+Th}(i-t)x_ix_i'\sum_{j=t+1}^{t+Th}x_jx_j' - \sum_{i=t+1}^{t+Th}x_ix_i'\sum_{j=t+1}^i(i-j)x_jx_j'\biggr) + \sigma_\varepsilon^2\sum_{i=t-Th}^{t+Th}x_ix_i'\biggr\}\biggl(\sum_{i=t-Th}^{t+Th}x_ix_i'\biggr)^{-1}\biggr] \\
        &= \mathrm{tr}\biggl[\biggl(\frac{1}{Th}\sum_{i=t-Th}^{t+Th}x_ix_i'\biggr)^{-1}\biggl\{\sigma_u^2h\biggl(\frac{1}{Th}\sum_{i=t-Th}^{t-1}x_ix_i'\frac{1}{Th}\sum_{j=t-Th}^{t-1}\frac{t-j}{Th}x_jx_j' \\
        & \qquad- \frac{1}{Th}\sum_{i=t-Th}^{t-1}x_ix_i'\frac{1}{Th}\sum_{j=t-Th}^i\frac{i-j}{Th}x_jx_j' + \frac{1}{Th}\sum_{i=t+1}^{t+Th}\frac{i-t}{Th}x_ix_i'\frac{1}{Th}\sum_{j=t+1}^{t+Th}x_jx_j' \\
        & \qquad- \frac{1}{Th}\sum_{i=t+1}^{t+Th}x_ix_i'\frac{1}{Th}\sum_{j=t+1}^i\frac{i-j}{Th}x_jx_j'\biggr) + \frac{\sigma_\varepsilon^2}{Th}\frac{1}{Th}\sum_{i=t-Th}^{t+Th}x_ix_i'\biggr\}\biggl(\frac{1}{Th}\sum_{i=t-Th}^{t+Th}x_ix_i'\biggr)^{-1}\biggr] \\
        &= \frac{\sigma_u^2h}{4}\mathrm{tr}\Bigl[\Omega^{-1}\Bigl(\Lambda\bar{\Lambda} + \bar{\Lambda}\Lambda - 2\Xi\Bigr)\Omega^{-1}\Bigr](1+o_p(1)) +\frac{\sigma_\varepsilon^2}{2Th}\mathrm{tr}\Bigl[\Omega^{-1}\Bigr](1+o_p(1)).
    \end{align}
    Note that we used the stationarity of $x_t$ to derive the final expression. Therefore, the MSE of $\hat{\beta}_t$ satisfies
    \begin{align}
        \mathrm{MSE}(h) &= E[E[\|\hat{\beta}_t-\beta_{T,t}\|^2|X_T]] \\
        &=\frac{\sigma_u^2h}{4}\mathrm{tr}\Bigl[\Omega^{-1}\Bigl(\Lambda\bar{\Lambda} + \bar{\Lambda}\Lambda - 2\Xi\Bigr)\Omega^{-1}\Bigr](1+o(1)) +\frac{\sigma_\varepsilon^2}{2Th}\mathrm{tr}\Bigl[\Omega^{-1}\Bigr](1+o(1)),
    \end{align}
    where we interchanged the order of expectation and plim operator in view of Assumption \ref{asm:rw_mse_general}(d).
\end{proof}

\titleformat*{\section}{\Large\bfseries\centering}
\section*{Appendix C: Sufficient Conditions for Assumption \ref{asm:uniform_negligible}}

\setcounter{equation}{0}
\renewcommand{\theequation}{C.\arabic{equation}}
\setcounter{prop}{0}
\renewcommand{\theprop}{C.\arabic{prop}}
\setcounter{asm}{0}
\renewcommand{\theasm}{C.\arabic{asm}}
\setcounter{section}{0}
\renewcommand{\thesection}{C.\arabic{section}}
\setcounter{figure}{0}
\renewcommand{\thefigure}{C.\arabic{figure}}
\setcounter{table}{0}
\renewcommand{\thetable}{C.\arabic{table}}

\titleformat*{\section}{\large\bfseries}
\section{H\"{o}lder condition}

Under Assumptions \ref{asm:kernel} and \ref{asm:dgp}, the following condition is sufficient for Assumption \ref{asm:uniform_negligible} to hold.
\renewcommand*{\thecon}{H}
\begin{con}
    There exist some constants $C>0$ and $\alpha>0$ such that for all $i,j=1,\ldots,T$,
    \begin{align}
        \left\|\beta_{T,i}-\beta_{T,j}\right\| \leq C\left(\frac{\left|i-j\right|}{T}\right)^{\alpha}.
    \end{align}
    \label{con:holder}
\end{con}

Condition \ref{con:holder} is essentially the H\"{o}lder condition, and so it accommodates time-varying parameters $\beta_{T,t}=\beta(t/T)$ with $\beta(\cdot)$ continuously differentiable on $[0,1]$. Moreover, it accommodates models where $\beta_{T,t}$ experiences abrupt structural breaks and/or threshold effects of size $1/T^{\alpha}$. To see that Condition \ref{con:holder} implies Assumption \ref{asm:uniform_negligible}, bound the quantity that appears in Assumption \ref{asm:uniform_negligible} as follows:
\begin{align}
    &\max_{-\lfloor Th \rfloor\leq j \leq \lfloor Th \rfloor}\left\|\frac{1}{Th}\sum_{i=1}^TK\left(\frac{t_j-i}{Th}\right)x_ix_i'\left(\beta_{T,i}-\beta_{T,t_j}\right)\right\| \\
    &\leq C\max_{-\lfloor Th \rfloor\leq j \leq \lfloor Th \rfloor}\max_{t_j-\lfloor Th \rfloor \leq i \leq t_j + \lfloor Th \rfloor}\left\|\beta_{T,i}-\beta_{T,t_j}\right\|\max_{-\lfloor Th \rfloor\leq j \leq \lfloor Th \rfloor}\frac{1}{Th}\sum_{i=t_j-\lfloor Th \rfloor}^{t_j+\lfloor Th \rfloor}\left\|x_i\right\|^2 \\
    &\leq Ch^{\alpha}\frac{1}{Th}\sum_{i=t-2\lfloor Th \rfloor}^{t+2\lfloor Th \rfloor}\left\|x_i\right\|^2 = o_p(1),
\end{align}
where $t_j=t+j$, the first inequality holds because $K(\cdot)$ is bounded on compact support $[-1,1]$ under Assumption \ref{asm:kernel}, the second inequality follows from Condition \ref{con:holder}, and the last equality follows from Assumption \ref{asm:dgp}(a) and the assumed condition that $h\to0$ and $\alpha>0$.

\section{Random walk condition}

When $\beta_{T,t}$ follows the rescaled random walk as in Example \ref{exm:rw}, Assumption \ref{asm:uniform_negligible} holds under a set of conditions that are similar to Assumptions \ref{asm:dgp} and \ref{asm:min_eigen}. The following condition, which is attributed to \citetapp{giraitisTimevaryingInstrumentalVariable2021}, is sufficient for Assumption 5.
\renewcommand*{\thecon}{RW}
\begin{con}

    \begin{itemize}
        \item[(a)] $\{(x_t', \varepsilon_t)\}_t$ is $\alpha$-mixing (but not necessarily stationary) with mixing coefficients $b_k$ such that for some $c>0$ and $0<\phi<1$,
        \begin{align}
            b_k\leq c\phi^k, \quad k\geq 1.
        \end{align} Moreover, $\sup_t E[\|x_t\|^{r}] + \sup_t E[|\varepsilon_t|^{r}] < \infty$ for some $r>8$.
	
        \item[(b)] $\{x_t\varepsilon_t\}_t$ has mean zero.

        \item[(c)] There exists some constant $\rho>0$ such that for all $t\in\mathbb{N}$, $\lambda'E[x_tx_t']\lambda\geq \rho\left\|\lambda\right\|^2$ for any $\lambda\neq0$. Furthermore, $\inf_{t\geq1}E[\varepsilon_t^2]>0$.

        \item[(d)] For any element $\beta_{T,t}^{(\ell)}$ in $\beta_{T,t}$, $\ell=1,\ldots,p$, it holds that
        \begin{align}
            \left|\beta_{T,t}^{(\ell)}-\beta_{T,s}^{(\ell)}\right| \leq \left(\frac{|t-s|}{T}\right)^{1/2}r_{ts}^{(\ell)}
        \end{align}
        for some random variable $r_{ts}^{(\ell)}$, and the distribution of $X=\beta_{T,t}^{(\ell)}, \hspace{0.1cm} r_{ts}^{(\ell)}$ has a thin tail:
        \begin{align}
            P\left(|X|\geq \omega\right) \leq\exp(-c_0|\omega|^{a}), \ \omega>0
        \end{align}
        for some $c_0>0$ and $a>0$ that do not depend on $\ell,t,s$ and $T$.
    \end{itemize}
    \label{con:rw_assumption5}
\end{con}
Part (a) of Condition \ref{con:rw_assumption5} strengthens Assumption \ref{asm:dgp}(a) in two ways. First, we require the variables to be $\alpha$-mixing with mixing coefficients decaying exponentially fast. Second, $x_t$ and $\varepsilon_t$ have an $r$-th moment ($r>8$) that is finite uniformly in $t$. Part (b) is weaker than Assumption \ref{asm:dgp}(b) in that $x_t\varepsilon_t$ may be serially correlated. Part (c) strengthens Assumption \ref{asm:min_eigen} by bounding the variance of $\varepsilon_t$ away from zero uniformly in $t$. Part (d) is satisfied if $\beta_{T,t}=T^{-1/2}\sum_{i=1}^tu_i$ with $u_i$ being weakly serially dependent and having a thin tail. For example, part (d) holds if $u_t$ is i.i.d. normal, or stationary mixing and has a thin tail distribution, as discussed in \citetapp{giraitisTimevaryingInstrumentalVariable2021}. They show that, under Condition \ref{con:rw_assumption5},
\begin{align}
    \max_{-\lfloor Th \rfloor\leq j \leq \lfloor Th \rfloor}\left\|\frac{1}{Th}\sum_{i=1}^TK\left(\frac{t_j-i}{Th}\right)x_ix_i'\left(\beta_{T,i}-\beta_{T,t_j}\right)\right\| = O_p(h^{1/2}\log^{1/a}T),
\end{align}
which is $o_p(1)$ if $h=cT^{\gamma}$ for some $\gamma<0$ and $c>0$. In particular, the choice of $\gamma=-1/2$ ensures that Assumption \ref{asm:uniform_negligible} holds, and hence this assumption is compatible with the optimal bandwidth $h=cT^{-1/2}$ in the case of rescaled random walk coefficients.

\titleformat*{\section}{\Large\bfseries\centering}
\section*{Appendix D: Performance of Structural Break Tests}

\setcounter{equation}{0}
\renewcommand{\theequation}{D.\arabic{equation}}
\setcounter{section}{0}
\renewcommand{\thesection}{D.\arabic{section}}
\setcounter{figure}{0}
\renewcommand{\thefigure}{D.\arabic{figure}}
\setcounter{table}{0}
\renewcommand{\thetable}{D.\arabic{table}}

In this appendix, we investigate the behavior of structural break tests. Our focus is on whether the tests for structural breaks can correctly discover latent discontinuous breaks even when they are mixed with smooth parameter instabilities. We verify this via (limited) Monte Carlo experiments. The data is generated as $y_t = \beta_{T,t}x_t + \varepsilon_{t}, \ t=1,\ldots,T$, where $\varepsilon_t \sim \mathrm{i.i.d.} \ N(0,1)$, and $x_t = 0.5x_{t-1} + \varepsilon_{x,t}$ with $\varepsilon_{x,t} \sim \mathrm{i.i.d.} \ N(0,1)$. $\beta_{T,t}$ is defined as a smooth function or rescaled random walk with two abrupt breaks. Specifically, we let $\beta_{T,t} = \mu_{T,t} + h_{T,t}$, where $\mu_{T,t} = \sum_{i=1}^{3}T^{-\alpha}\mu_i1\{\lfloor \tau_{i-1}T\rfloor + 1 \leq t \leq \lfloor \tau_iT \rfloor\}$ with $\tau_{0} = 0$, $\tau_1 = 0.3$, $\tau_2 = 0.7$, and $\tau_{3} = 1$. $h_{T,t}$ is specified as either a deterministic smooth function $f(t/T)$ or rescaled random walk $g_{T,t}$. The function $f$ is equal to $f(u) = 2u + \exp(-16(u-0.5)^2)$ or $f(u) = \{\sin(\pi u) + \cos(2\pi u) + \sin(3\pi u) + \cos(4\pi u)\}/4$. $g_{T,t}$ is generated as $g_{T,t} = T^{-1/2}\sum_{i=1}^{t}v_i$, where $v_i \sim \mathrm{i.i.d.} \ N(0,1)$ or $v_i \sim \mathrm{i.i.d.} \ \mathrm{log \ normal}$ with parameters $\mu=0$ and $ \sigma = 1$.

When $h_{T,t} = f(t/T)$, $\beta_{T,t}$ evolves smoothly and deterministically over time but experiences two abrupt breaks at the 30\% and 70\% points of the sample period. The magnitude of the breaks is determined by $\mu_i$ and $\alpha$. We let $\mu_1 = 0$, $\mu_2 = 4$, $\mu_3=-2$, and $\alpha \in \{0.1,0.2\}$. When $h_{T,t} = g_{T,t}$, $\beta_{T,t}$ follows a rescaled random walk with two discontinuous jumps.

To identify abrupt breaks, we rely on the comprehensive estimation procedure developed by \citetapp{nguyenMbreaksEstimationInference2023}. In this procedure, the number of breaks and break dates are estimated by the sequential method (SEQ) proposed by \citetapp{baiEstimatingTestingLinear1998}, the BIC suggested by \citetapp{yaoEstimatingNumberChangepoints1988}, the modified SIC (LWZ) of \citetapp{liuSegmentedMultivariateRegression1997} or the modified BIC (KT) of \citetapp{kurozumiModelSelectionCriteria2011} (see \citetapp{nguyenMbreaksEstimationInference2023} for the detailed description of the procedure and the associated R package). We investigate the performance of these four methods through 2000 replications with the sample size being (i) $T=100$, (ii) $T=200$, (iii) $T=400$ and (iv) $T=800$.

We calculate the frequency of particular numbers of breaks (up to 5) being selected and the estimated break date fraction ($\hat{T}_B/T$) being in the $1/25$-neighborhood of the true one.\footnote{We check the behavior of the estimate for the break date fraction, $T_B/T$, rather than break date $T_B$ itself. This is because $T_B/T$ can be consistently estimated but $T_B$ cannot; see \citet{casiniStructuralBreaksTime2018}.}

\begin{table} \caption{Results of structural break tests for $h_{T,t} = f(t/T)$ with $f(u) = 2u + \exp(-16(u-0.5)^2)$}
	\centering
	\begin{threeparttable}
		\begin{tabular}{cc@{\hskip 22pt}c@{\hskip 22pt}c@{\hskip 22pt}c@{\hskip 22pt}c@{\hskip 22pt}c@{\hskip 22pt}c@{\hskip 20pt}c@{\hskip 25pt}c}
			\hline &  & \multicolumn{5}{c}{\# of estimated breaks} & &  \multicolumn{2}{c}{Frequency of $\hat{T}_B/T \in $}  \\
			\hline  &  & 0 & 1 & 2 & 3 & 4 & 5 & $[0.3\pm 1/25]$ & $[0.7\pm 1/25]$ \\
            \hline \multicolumn{10}{c}{$\alpha=0.1$} \\
             \hline \multirow{4}{*}{(i)}& \multicolumn{1}{l@{\hskip 20pt}}{SEQ} &0.126 &	0 &	0.613 &	0.239 &	0.022 &	0 &	0.831 	&0.868  \\ 
             & \multicolumn{1}{l}{BIC} &0 &	0 &	0.870 &	0.126& 	0.005& 	0 &	0.966 &	0.995 \\
             & \multicolumn{1}{l}{LWZ} &0 &	0 	&0.996 &	0.004 &	0 &	0 &	0.979 &	0.995   \\
             & \multicolumn{1}{l}{KT} &0 &	0 &	0.802 &	0.184 &	0.015 &	0 &	0.964 &	0.994 \\ 
             \hline \multirow{4}{*}{(ii)}& \multicolumn{1}{l}{SEQ} &0.004 &	0 &	0.618 &	0.351 	&0.027 &	0 &	0.977 &	0.996  \\ 
             & \multicolumn{1}{l}{BIC} &0 &	0 &	0.789 &	0.198& 	0.013 &	0 &	0.996 &	0.999  \\
             & \multicolumn{1}{l}{LWZ} &0 &	0 	&0.995 &	0.006 &	0 &	0 &	0.998 &	0.999  \\
             & \multicolumn{1}{l}{KT} &0 &	0 &	0.756 &	0.225 &	0.020 &	0 &	0.996 &	0.999 \\
             \hline  \multirow{4}{*}{(iii)}& \multicolumn{1}{l}{SEQ} &0 &	0 &	0.358 &	0.554 &	0.089 &	0 	&0.994& 	1  \\ 
             & \multicolumn{1}{l}{BIC} &0 &	0 &	0.518& 	0.411 &	0.072& 	0 &	1 &	1  \\
             & \multicolumn{1}{l}{LWZ} &0 &	0 &	0.995 &	0.006& 	0 &	0& 	1 &	1  \\
             & \multicolumn{1}{l}{KT} &0 &	0 &	0.515 &	0.430 &	0.056 &	0 &	1 &	1 \\
             \hline  \multirow{4}{*}{(iv)}& \multicolumn{1}{l}{SEQ} &0 &	0 &	0.069 &	0.620 &	0.311 &	0.001 &	0.999 &	1  \\ 
             & \multicolumn{1}{l}{BIC} &0 &	0 &	0.105 &	0.453 &	0.443 &	0 &	1 	&1  \\
             & \multicolumn{1}{l}{LWZ} &0 &	0 	&0.966 &	0.034 &	0 &	0 &	1& 	1  \\
             & \multicolumn{1}{l}{KT} &0 &	0 	&0.127 &	0.502 &	0.371 &	0 &	1 &	1 \\
            \hline \multicolumn{10}{c}{$\alpha=0.2$} \\
             \hline \multirow{4}{*}{(i)}& \multicolumn{1}{l@{\hskip 20pt}}{SEQ} &0.044 &	0 	&0.657& 	0.280 &	0.020 &	0 &	0.841 &	0.935  \\ 
             & \multicolumn{1}{l}{BIC} &0 &	0 &	0.847 &	0.149 &	0.005 &	0 	&0.883 	&0.981  \\
             & \multicolumn{1}{l}{LWZ} &0 &	0 &	0.993 &	0.007 &	0 &	0 &	0.920 &	0.981  \\
             & \multicolumn{1}{l}{KT} &0 &	0 &	0.775 &	0.211 &	0.015 &	0 &	0.876 &	0.980  \\ 
             \hline \multirow{4}{*}{(ii)}& \multicolumn{1}{l}{SEQ} &0 &	0 &	0.588 &	0.392 &	0.021 &	0.001& 	0.923 &	0.995  \\ 
             & \multicolumn{1}{l}{BIC} &0 &	0 &	0.750 &	0.241 &	0.010 &	0 &	0.938 &	0.995  \\
             & \multicolumn{1}{l}{LWZ} &0 &	0 &	0.992 &	0.008 &	0 &	0 &	0.983 &	0.995  \\
             & \multicolumn{1}{l}{KT} &0 &	0 &	0.722 &	0.264 &	0.015 &	0 &	0.936 &	0.995 \\
             \hline  \multirow{4}{*}{(iii)}& \multicolumn{1}{l}{SEQ} &0 &	0 &	0.343 &	0.590 	&0.068 &	0 &	0.968 &	1  \\ 
             & \multicolumn{1}{l}{BIC} &0 &	0 &	0.463 	&0.481 &	0.056 &	0 &	0.949 &	1  \\
             & \multicolumn{1}{l}{LWZ} &0 &	0 &	0.987 &	0.014 &	0 &	0 &	0.996 &	1  \\
             & \multicolumn{1}{l}{KT} &0 &	0 &	0.473 &	0.481 &	0.047 &	0 &	0.950 &	1 \\
             \hline  \multirow{4}{*}{(iv)}& \multicolumn{1}{l}{SEQ} &0 &	0 &	0.067 &	0.682 &	0.251 &	0 	&0.992 &	1  \\ 
             & \multicolumn{1}{l}{BIC} &0 &	0 &	0.091 &	0.556 &	0.354 &	0 &	0.980 &	1  \\
             & \multicolumn{1}{l}{LWZ} &0 &	0 &	0.938& 	0.063& 	0 &	0 &	0.992 	&1  \\
             & \multicolumn{1}{l}{KT} &0 &	0 &	0.119 &	0.583 &	0.298 &	0 &	0.977 &	1 \\
             \hline 
		\end{tabular}
		\begin{tablenotes}
			\footnotesize
			\item[]Note: $\beta_{T,t} = \mu_{T,t} + f(t/T)$, where $\mu_{T,t} = \sum_{i=1}^{3}\mu_iT^{-\alpha}1\{\lfloor \tau_{i-1}T \rfloor + 1 \leq t \leq \lfloor \tau_iT \rfloor\}$ and $f(u) = 2u+\exp(-16(u-0.5)^2)$. The number of replications is 2000. The sample size is $T=100$ for case (i), $T=200$ for case (ii) $T=400$ for case (iii), and $T=800$ for case (iv).
		\end{tablenotes}
	\end{threeparttable} \label{tab:sb_mixed_sf_iid}
\end{table}

Let us start with the case of $h_{T,t} = f(t/T)$ with $f(u) = 2u + \exp(-16(u-0.5)^2)$ (Table \ref{tab:sb_mixed_sf_iid}). When $\alpha=0.1$ and $T=100$ (case (i)), the SEQ method estimates no break with a probability of 13\%, while it overestimates the number of breaks in 26\% of the 2000 replications. The estimate of the break date fraction falls in the $1/25$-neighborhood of the true brake date fraction with a probability of 80\%-85\%. As $T$ gets larger, the frequency of underestimating the number of breaks decreases, and the true break points are detected more frequently, but the number of breaks is more likely to be overestimated. In particular, the estimated number of breaks is more than two in 93\% of the 2000 replications when $T=800$. The same tendency to overestimate the number of breaks is shared by the BIC and KT methods, although they can identify the true break points with a high probability even when $T=100$. This implies that BIC and KT often detect spurious breaks in addition to the true ones. LWZ is the most successful in this case, identifying the true breaks in almost all replications for $T\geq200$ without detecting an additional spurious break. However, LWZ is more likely to overestimate the number of breaks as $T$ grows. When $\alpha=0.2$, the tendency to overestimate the number of breaks is greater for all the four tests than in the case with $\alpha=0.1$, and the probability of the true breaks being identified decreases. The LWZ method still performs well, estimating the number of breaks to be two with a probability of not less than 93\%.

\begin{table} \caption{Results of structural break tests for $h_{T,t} = f(t/T)$ with $f(u) = \{\sin(\pi u) + \cos(2\pi u) + \sin(3\pi u) + \cos(4\pi u)\}/4$}
	\centering
	\begin{threeparttable}
		\begin{tabular}{cc@{\hskip 22pt}c@{\hskip 22pt}c@{\hskip 22pt}c@{\hskip 22pt}c@{\hskip 22pt}c@{\hskip 22pt}c@{\hskip 20pt}c@{\hskip 25pt}c}
			\hline &  & \multicolumn{5}{c}{\# of estimated breaks} & &  \multicolumn{2}{c}{Frequency of $\hat{T}_B/T \in $}  \\
			\hline  &  & 0 & 1 & 2 & 3 & 4 & 5 & $[0.3\pm 1/25]$ & $[0.7\pm 1/25]$ \\
            \hline \multicolumn{10}{c}{$\alpha=0.1$} \\
             \hline \multirow{4}{*}{(i)}& \multicolumn{1}{l@{\hskip 20pt}}{SEQ} &0.001 &	0 &	0.829 &	0.162 &	0.009 &	0 &	0.962 &	0.995  \\ 
             & \multicolumn{1}{l}{BIC} &0 &	0 &	0.972 &	0.028 &	0.001 &	0 &	0.962 &	0.995 \\
             & \multicolumn{1}{l}{LWZ} &0 &	0.002 &	0.999 &	0 &	0 &	0 &	0.961 &	0.995  \\
             & \multicolumn{1}{l}{KT} &0 &	0 &	0.924 &	0.073 &	0.004 &	0 &	0.962 &	0.995 \\ 
             \hline \multirow{4}{*}{(ii)}& \multicolumn{1}{l}{SEQ} &0 &	0 	&0.845 &	0.151 &	0.005 &	0 &	0.997 	&0.999  \\ 
             & \multicolumn{1}{l}{BIC} &0 &	0 	&0.967& 	0.033 &	0 &	0 &	0.997 &	1  \\
             & \multicolumn{1}{l}{LWZ} &0 &	0 &	1& 	0& 	0 &	0 &	0.997 &	1  \\
             & \multicolumn{1}{l}{KT} &0 &	0 &	0.947 	&0.054& 	0 	&0 &	0.997 &	1 \\
             \hline  \multirow{4}{*}{(iii)}& \multicolumn{1}{l}{SEQ} &0 &	0 &	0.773 &	0.219 &	0.009 &	0 &	1 	&1  \\ 
             & \multicolumn{1}{l}{BIC} &0 &	0 &	0.929 	&0.070& 	0.001 &	0 &	1 &	1  \\
             & \multicolumn{1}{l}{LWZ} &0 &	0 &	1 &	0 &	0 &	0 &	1 &	1  \\
             & \multicolumn{1}{l}{KT} &0 &	0 &	0.921 &	0.077 &	0.003 &	0 &	1 &	1 \\
             \hline  \multirow{4}{*}{(iv)}& \multicolumn{1}{l}{SEQ} &0 &	0 &	0.522 &	0.425 	&0.053 &	0.001 &	1 &	1  \\ 
             & \multicolumn{1}{l}{BIC} &0 &	0 	&0.789 	&0.201 &	0.011 &	0 &	1 &	1  \\
             & \multicolumn{1}{l}{LWZ} &0 &	0 &	1 &	0 &	0 &	0 &	1 &	1  \\
             & \multicolumn{1}{l}{KT} &0 &	0 	&0.800 &	0.190 &	0.011 &	0 &	1 &	1 \\
            \hline \multicolumn{10}{c}{$\alpha=0.2$} \\
             \hline \multirow{4}{*}{(i)}& \multicolumn{1}{l@{\hskip 20pt}}{SEQ} &0.025 &	0.047 &	0.782 &	0.139 &	0.009 &	0 &	0.765 	&0.950 \\ 
             & \multicolumn{1}{l}{BIC} &0 &	0.044 	&0.924 &	0.032 &	0.001 &	0 &	0.798 	&0.970  \\
             & \multicolumn{1}{l}{LWZ} &0.078 &	0.303 &	0.620 &	0 &	0 &	0 &	0.529 &	0.898  \\
             & \multicolumn{1}{l}{KT} &0.001 &	0.047 &	0.877 &	0.073 &	0.004 &	0 &	0.793 &	0.971  \\ 
             \hline \multirow{4}{*}{(ii)}& \multicolumn{1}{l}{SEQ} &0 &	0.006 &	0.845 	&0.145 	&0.005 &	0 &	0.912 &	0.994  \\ 
             & \multicolumn{1}{l}{BIC} &0 &	0.007 &	0.957 &	0.036 &	0 &	0 &	0.911 &	0.985  \\
             & \multicolumn{1}{l}{LWZ} &0.009 &	0.248& 	0.744 &	0 &	0 	&0 &	0.694 	&0.980  \\
             & \multicolumn{1}{l}{KT} &0 &	0.011 &	0.928 &	0.061& 	0.001 &	0 &	0.905 &	0.985 \\
             \hline  \multirow{4}{*}{(iii)}& \multicolumn{1}{l}{SEQ} &0& 	0.001 &	0.768 &	0.224 	&0.008 &	0 &	0.966 	&1  \\ 
             & \multicolumn{1}{l}{BIC} &0 &	0.001 &	0.923 &	0.076& 	0.001& 	0 &	0.968& 	0.997  \\
             & \multicolumn{1}{l}{LWZ} &0 &	0.134 &	0.866 &	0& 	0 &	0 &	0.842 &	0.997  \\
             & \multicolumn{1}{l}{KT} &0 &	0.001 &	0.917 &	0.079 &	0.004 &	0 &	0.967 &	0.997 \\
             \hline  \multirow{4}{*}{(iv)}& \multicolumn{1}{l}{SEQ} &0& 	0 &	0.528& 	0.413 &	0.059 &	0.001& 	0.979 &	1  \\ 
             & \multicolumn{1}{l}{BIC} &0 &	0 &	0.767 &	0.220 &	0.013 &	0 	&0.979& 	1  \\
             & \multicolumn{1}{l}{LWZ} &0 &	0.074 &	0.926 &	0 &	0 &	0 &	0.907 &	1  \\
             & \multicolumn{1}{l}{KT} &0 &	0.001 &	0.782 &	0.206 &	0.012 &	0 	&0.979 &	1 \\
             \hline 
		\end{tabular}
		\begin{tablenotes}
			\footnotesize
			\item[]Note: $\beta_{T,t} = \mu_{T,t} + f(t/T)$, where $\mu_{T,t} = \sum_{i=1}^{3}\mu_iT^{-\alpha}1\{\lfloor \tau_{i-1}T \rfloor + 1 \leq t \leq \lfloor \tau_iT \rfloor\}$ and $f(u) = \{\sin(\pi u) + \cos(2\pi u) + \sin(3\pi u) + \cos(4\pi u)\}/4$. The number of replications is 2000. The sample size is $T=100$ for case (i), $T=200$ for case (ii) $T=400$ for case (iii), and $T=800$ for case (iv).
		\end{tablenotes}
	\end{threeparttable} \label{tab:sb_mixed_sf_iid_2}
\end{table}

Next, we consider the case of $h_{T,t} = f(t/T)$ with $f(u) = \{\sin(\pi u) + \cos(2\pi u) + \sin(3\pi u) + \cos(4\pi u)\}/4$ (Table \ref{tab:sb_mixed_sf_iid_2}). When $\alpha=0.1$, the behaviors of the four methods are similar to those in the case of $f(u) = 2u+\exp(-16(u-0.5)^2)$ with $\alpha=0.1$, but the SEQ procedure estimates the number of breaks to be not less than two in almost all replications and detects the true breaks with a high probability even when $T=100$. In this case, the LWZ procedure is the most successful one, identifying the true breaks without detecting a spurious one in all replications for $T=400,800$. When $\alpha=0.2$, there are several differences. First, LWZ is more likely to underestimate the number of breaks than the other tests. For example, it estimates the number of breaks to be less than two with probabilities of 38\% and 26\% for $T=100$ and $T=200$, respectively. The probability of the underestimation is still nonnegligible even when $T=400,800$, under which sample size the other tests estimate the number of breaks to be not less than two in almost all replications. This causes the true breaks (in particular, the first one) to be overlooked by LWZ. For the other tests (SEQ, BIC, KT), the tendency to overestimate the number of breaks becomes stronger as $T$ increases. These tests are the most successful procedures in terms of identifying the true breaks.

\begin{table} \caption{Results of structural break tests for $h_{T,t} = g_{T,t} = T^{-1/2}\sum_{i=1}^{t}v_i$ with $v_i \sim \mathrm{i.i.d.} \ N(0,1)$}
	\centering
	\begin{threeparttable}
	   	\begin{tabular}{cc@{\hskip 22pt}c@{\hskip 22pt}c@{\hskip 22pt}c@{\hskip 22pt}c@{\hskip 22pt}c@{\hskip 22pt}c@{\hskip 20pt}c@{\hskip 25pt}c}
			\hline &  & \multicolumn{5}{c}{\# of estimated breaks} & &  \multicolumn{2}{c}{Frequency of $\hat{T}_B/T \in $}  \\
			\hline  &  & 0 & 1 & 2 & 3 & 4 & 5 & $[0.3\pm 1/25]$ & $[0.7\pm 1/25]$ \\
            \hline \multicolumn{10}{c}{$\alpha=0.1$} \\
             \hline \multirow{4}{*}{(i)}& \multicolumn{1}{l@{\hskip 20pt}}{SEQ} &0.025 &	0.002 	&0.737 &	0.223 &	0.014 &	0 	&0.932 	&0.967  \\ 
             & \multicolumn{1}{l}{BIC} &0 &	0.001 &	0.897 &	0.100 &	0.003 &	0 &	0.957 &	0.994 \\
             & \multicolumn{1}{l}{LWZ} &0 &	0.007 	&0.984 &	0.010& 	0 &	0 &	0.953 &	0.994   \\
             & \multicolumn{1}{l}{KT} &0 &	0.001 &	0.838 &	0.152 &	0.010 &	0 &	0.957 &	0.994 \\ 
             \hline \multirow{4}{*}{(ii)}& \multicolumn{1}{l}{SEQ} &0.001 &	0.001 &	0.718 &	0.262 &	0.019 &	0.001 &	0.990 &	0.998  \\ 
             & \multicolumn{1}{l}{BIC} &0 &	0 &	0.812 &	0.181 &	0.008 &	0 	&0.996 &	1  \\
             & \multicolumn{1}{l}{LWZ} &0 &	0.003 &	0.979 &	0.019 &	0 &	0 &	0.991 &	1  \\
             & \multicolumn{1}{l}{KT} &0 &	0 &	0.779 &	0.214 &	0.008 &	0 	&0.996 &	1 \\
             \hline  \multirow{4}{*}{(iii)}& \multicolumn{1}{l}{SEQ} &0 &	0 &	0.581 &	0.369 &	0.050 &	0.001 &	0.999 &	1  \\ 
             & \multicolumn{1}{l}{BIC} &0 &	0 &	0.682 &	0.289 &	0.029 &	0.001 &	0.999 &	1  \\
             & \multicolumn{1}{l}{LWZ} &0 &	0.001 &	0.959 &	0.040 &	0.001& 	0 &	0.998 &	1  \\
             & \multicolumn{1}{l}{KT} &0 &	0 &	0.677 &	0.294 &	0.029 &	0.002 &	0.999 &	1 \\ 
             \hline  \multirow{4}{*}{(iv)}& \multicolumn{1}{l}{SEQ} &0 &	0.001 	&0.378 &	0.501 &	0.116 &	0.006 &	0.999 &	1  \\ 
             & \multicolumn{1}{l}{BIC} &0 &	0 	&0.470 	&0.423 &	0.103 &	0.005 &	1 	&1  \\
             & \multicolumn{1}{l}{LWZ} &0 &	0.001& 	0.881 &	0.116 &	0.003 &	0 &	0.999 &	1  \\
             & \multicolumn{1}{l}{KT} &0 &	0 	&0.487 &	0.416 &	0.093 &	0.005& 	1 &	1 \\
            \hline \multicolumn{10}{c}{$\alpha=0.2$} \\ 
            \hline \multirow{4}{*}{(i)}& \multicolumn{1}{l@{\hskip 20pt}}{SEQ} &0.056 &	0.065 	&0.685 &	0.183 &	0.011& 	0 &	0.730 &	0.896  \\ 
             & \multicolumn{1}{l}{BIC} &0.001 &	0.059 &	0.849 &	0.089 &	0.002 &	0 &	0.792 &	0.969 \\
             & \multicolumn{1}{l}{LWZ} &0.025 &	0.187 &	0.784 	&0.004 &	0 	&0 &	0.680 	&0.938   \\
             & \multicolumn{1}{l}{KT} &0.001& 	0.064 &	0.790& 	0.136 &	0.010 &	0 &	0.789 &	0.967 \\ 
             \hline \multirow{4}{*}{(ii)}& \multicolumn{1}{l}{SEQ} &0.006 &	0.050 &	0.718 &	0.214& 	0.012& 	0.001 &	0.845 &	0.969  \\ 
             & \multicolumn{1}{l}{BIC} &0 &	0.046 &	0.792 &	0.157 &	0.006 &	0 &	0.867 	&0.985  \\
             & \multicolumn{1}{l}{LWZ} &0.010 &	0.170 &	0.812 &	0.009 &	0 &	0 &	0.762 &	0.968  \\
             & \multicolumn{1}{l}{KT} &0 &	0.044 &	0.760 &	0.188 &	0.009 &	0 	&0.865 &	0.984 \\
             \hline  \multirow{4}{*}{(iii)}& \multicolumn{1}{l}{SEQ} &0 &	0.047 &	0.605 &	0.317 &	0.031 &	0 &	0.863 &	0.982  \\ 
             & \multicolumn{1}{l}{BIC} &0 &	0.041 &	0.671 &	0.265 &	0.024 &	0.001 &	0.895 &	0.996  \\
             & \multicolumn{1}{l}{LWZ} &0.003 &	0.169 &	0.804 &	0.025 &	0 &	0 &	0.777 	&0.982  \\
             & \multicolumn{1}{l}{KT} &0 &	0.045 &	0.671 &	0.259 &	0.025 &	0.001 &	0.888 	&0.995 \\
             \hline  \multirow{4}{*}{(iv)}& \multicolumn{1}{l}{SEQ} &0.001 &	0.027 	&0.431 &	0.464 &	0.076 &	0.003 &	0.870 &	0.979  \\ 
             & \multicolumn{1}{l}{BIC} &0 &	0.025 &	0.482 	&0.402 	&0.088 &	0.004 &	0.906 &	0.998  \\
             & \multicolumn{1}{l}{LWZ} &0.005 &	0.163 &	0.761 	&0.069 &	0.003& 	0 &	0.775 	&0.980  \\
             & \multicolumn{1}{l}{KT} &0 &	0.031 &	0.495 &	0.388 &	0.084 &	0.003 &	0.902 	&0.997 \\
             \hline 
		\end{tabular}
		\begin{tablenotes}
			\footnotesize
			\item[]Note: $\beta_{T,t} = \mu_{T,t} + g_{T,t}$, where $\mu_{T,t} = \sum_{i=1}^{3}\mu_iT^{-\alpha}1\{\lfloor \tau_{i-1}T \rfloor + 1 \leq t \leq \lfloor \tau_iT \rfloor\}$ and $g_{T,t} = T^{-1/2}\sum_{i=1}^{t}v_i$ with $v_i \sim \mathrm{i.i.d.} \ N(0,1)$. The sample size is $T=100$ for case (i), $T=200$ for case (ii) $T=400$ for case (iii), and $T=800$ for case (iv).
		\end{tablenotes}
	\end{threeparttable} \label{tab:sb_mixed_rw_iid}
\end{table} 

\begin{table} \caption{Results of structural break tests for $h_{T,t} = g_{T,t} = T^{-1/2}\sum_{i=1}^{t}v_i$ with $v_i \sim \mathrm{i.i.d.} \ \mathrm{log \ normal}$}
	\centering
	\begin{threeparttable}
	   	\begin{tabular}{cc@{\hskip 22pt}c@{\hskip 22pt}c@{\hskip 22pt}c@{\hskip 22pt}c@{\hskip 22pt}c@{\hskip 22pt}c@{\hskip 20pt}c@{\hskip 25pt}c}
			\hline &  & \multicolumn{5}{c}{\# of estimated breaks} & &  \multicolumn{2}{c}{Frequency of $\hat{T}_B/T \in $}  \\
			\hline  &  & 0 & 1 & 2 & 3 & 4 & 5 & $[0.3\pm 1/25]$ & $[0.7\pm 1/25]$ \\
            \hline \multicolumn{10}{c}{$\alpha=0.1$} \\
             \hline \multirow{4}{*}{(i)}& \multicolumn{1}{l@{\hskip 20pt}}{SEQ} &0.026 &	0.003 &	0.739 &	0.216 &	0.017 &	0.001& 	0.930 &	0.963  \\ 
             & \multicolumn{1}{l}{BIC} &0 &	0.002 &	0.899 &	0.095 &	0.005 &	0 &	0.958 &	0.994 \\
             & \multicolumn{1}{l}{LWZ} &0 &	0.008 &	0.986 &	0.007 &	0& 	0 &	0.954 &	0.995   \\
             & \multicolumn{1}{l}{KT} &0 &	0.002 &	0.864 &	0.126 &	0.009 &	0 &	0.956 	&0.994 \\ 
             \hline \multirow{4}{*}{(ii)}& \multicolumn{1}{l}{SEQ} &0 &	0.001 &	0.728 &	0.259 &	0.014 &	0 &	0.992 &	0.997  \\ 
             & \multicolumn{1}{l}{BIC} &0 &	0 &	0.830 &	0.162& 	0.009 &	0 	&0.992 	&0.998  \\
             & \multicolumn{1}{l}{LWZ} &0 &	0.004 &	0.978 &	0.019& 	0 &	0 &	0.991 &	0.998  \\
             & \multicolumn{1}{l}{KT} &0 &	0 &	0.813 &	0.177 &	0.010 &	0 &	0.992 &	0.999 \\
             \hline  \multirow{4}{*}{(iii)}& \multicolumn{1}{l}{SEQ} &0& 	0 &	0.592 &	0.358 &	0.050 &	0.001 &	0.999 &	1  \\ 
             & \multicolumn{1}{l}{BIC} &0 &	0 &	0.703 &	0.266 &	0.031 &	0.001 &	1 &	1  \\
             & \multicolumn{1}{l}{LWZ} &0 &	0.002 &	0.954& 	0.044 &	0.001 &	0 &	0.998 &	1  \\
             & \multicolumn{1}{l}{KT} &0 &	0 	&0.705 &	0.261 &	0.033 &	0.001 &	1 &	1 \\ 
             \hline  \multirow{4}{*}{(iv)}& \multicolumn{1}{l}{SEQ} &0 &	0 &	0.378 &	0.495 &	0.124 &	0.003 &	0.999 &	1  \\ 
             & \multicolumn{1}{l}{BIC} &0 &	0 &	0.475 &	0.420 &	0.100 &	0.006& 	1 &	1 \\
             & \multicolumn{1}{l}{LWZ} &0 &	0.001 &	0.878 &	0.118 &	0.004 &	0 &	0.999 &	1  \\
             & \multicolumn{1}{l}{KT} &0 &	0 	&0.492 &	0.411 &	0.092 &	0.006 &	1 &	1 \\
            \hline \multicolumn{10}{c}{$\alpha=0.2$} \\ 
            \hline \multirow{4}{*}{(i)}& \multicolumn{1}{l@{\hskip 20pt}}{SEQ} &0.057 &	0.078 	&0.680 &	0.175 &	0.011 &	0 &	0.719 &	0.898  \\ 
             & \multicolumn{1}{l}{BIC} &0 &	0.067 &	0.843 &	0.088 &	0.003 &	0 &	0.780 &	0.968 \\
             & \multicolumn{1}{l}{LWZ} &0.026 &	0.198& 	0.772& 	0.004& 	0 &	0 &	0.665 	&0.935   \\
             & \multicolumn{1}{l}{KT} &0 &	0.070 &	0.800 &	0.122 &	0.009 &	0 &	0.775 	&0.966 \\ 
             \hline \multirow{4}{*}{(ii)}& \multicolumn{1}{l}{SEQ} &0.004 &	0.059& 	0.710 &	0.219 &	0.010 &	0& 	0.846 &	0.966  \\ 
             & \multicolumn{1}{l}{BIC} &0 &	0.053 &	0.799 &	0.142& 	0.007 &	0 &	0.856& 	0.985  \\
             & \multicolumn{1}{l}{LWZ} &0.006 &	0.189 &	0.795 &	0.011 &	0 &	0 &	0.742 &	0.970  \\
             & \multicolumn{1}{l}{KT} &0 &	0.057 &	0.784 &	0.152 &	0.008 &	0 &	0.856 &	0.985 \\
             \hline  \multirow{4}{*}{(iii)}& \multicolumn{1}{l}{SEQ} &0.001 &	0.043 	&0.617 &	0.306 &	0.033 &	0.001 &	0.870 &	0.978  \\ 
             & \multicolumn{1}{l}{BIC} &0 	&0.042 &	0.688 &	0.247 &	0.023 &	0.001 &	0.892 &	0.994 \\
             & \multicolumn{1}{l}{LWZ} &0.004 &	0.170 &	0.803 &	0.023 &	0.001 &	0 &	0.776 &	0.978  \\
             & \multicolumn{1}{l}{KT} &0 &	0.045 &	0.688 	&0.240 &	0.027& 	0.001 &	0.889 &	0.993 \\
             \hline  \multirow{4}{*}{(iv)}& \multicolumn{1}{l}{SEQ} &0 &	0.034 &	0.428& 	0.448 &	0.089 &	0.002& 	0.872 &	0.986  \\ 
             & \multicolumn{1}{l}{BIC} &0 &	0.032 &	0.476& 	0.404 &	0.086& 	0.004 &	0.908 &	0.998  \\
             & \multicolumn{1}{l}{LWZ} &0.003 &	0.144 &	0.763 	&0.090 &	0.002 &	0 &	0.791& 	0.981  \\
             & \multicolumn{1}{l}{KT} &0 &	0.037& 	0.487 &	0.395 &	0.078 &	0.004 &	0.903 &	0.997 \\
             \hline 
		\end{tabular}
		\begin{tablenotes}
			\footnotesize
			\item[]Note: $\beta_{T,t} = \mu_{T,t} + g_{T,t}$, where $\mu_{T,t} = \sum_{i=1}^{3}\mu_iT^{-\alpha}1\{\lfloor \tau_{i-1}T \rfloor + 1 \leq t \leq \lfloor \tau_iT \rfloor\}$ and $g_{T,t} = T^{-1/2}\sum_{i=1}^{t}v_i$ with $v_i \sim \mathrm{i.i.d.} \ \mathrm{log \ normal}$ with parameters $\mu=0$ and $ \sigma = 1$. The sample size is $T=100$ for case (i), $T=200$ for case (ii) $T=400$ for case (iii), and $T=800$ for case (iv).
		\end{tablenotes}
	\end{threeparttable} \label{tab:sb_mixed_rw_iid_2}
\end{table}

We turn to the case with $h_{T,t} = g_{T,t} = T^{-1/2}\sum_{i=1}^{t}v_i$ where $v_i \sim \mathrm{i.i.d.} \ N(0,1)$ (Table \ref{tab:sb_mixed_rw_iid}). When $\alpha=0.1$, the behaviors of the four procedures are similar to those in the preceding cases: SEQ, BIC and KT not only identify true breaks but also detect spurious ones, with this tendency being greater for larger $T$, while LWZ identifies true breaks without detecting spurious one with a large probability. However, the probability of overestimating the number of breaks is nonnegligible for LWZ, and this probability gets larger as $T$ increases. When $\alpha=0.2$, SEQ and LWZ are more likely to underestimate the number of breaks than the other two methods. In particular, LWZ underestimates the number of breaks with a nonnegligible probability even when $T=800$ and thus is more likely to overlook the latent breaks than the other tests. SEQ, BIC and KT can identify latent breaks with a high probability but tend to detect spurious breaks. This tendency is stronger for larger $T$, as in the preceding cases. The results for the case with $v_i \sim \mathrm{i.i.d.} \ \mathrm{log \ normal}$ are similar, so the same comment applies.

In general, the tests for structural breaks can identify latent breaks in the presence of another source of parameter instability but tend to detect additional spurious breaks. This tendency is more evident for larger $T$. Investigating the behavior of each test, LWZ identifies latent breaks without estimating spurious breaks in some situations, but it underestimates the number of breaks and overlooks latent breaks in other situations. SEQ is likely to both underestimate and overestimate the number of breaks. BIC and KT can identify true breaks irrespective of the DGP, but they tend to detect additional spurious breaks.

\section*{Appendix E: Additional Results for the Empirical Application}

\setcounter{equation}{0}
\renewcommand{\theequation}{E.\arabic{equation}}
\setcounter{section}{0}
\renewcommand{\thesection}{E.\arabic{section}}
\setcounter{figure}{0}
\renewcommand{\thefigure}{E.\arabic{figure}}
\setcounter{table}{0}
\renewcommand{\thetable}{E.\arabic{table}}

In this appendix, we discuss the estimation results for portfolios G and V (see Section \ref{sec:empirics} for details).

\titleformat*{\section}{\large\bfseries}
\section{Estimates for G}
Table \ref{tab:gamma_selection_cr_g} shows the empirical coverage rates of the bootstrap 95\% confidence intervals. Noting that $\gamma_1=-0.33,-0.4,-0.5$ satisfies the 90\% criterion, $\gamma=\hat{\gamma}=-0.33$ is selected. Figure \ref{fig:CV_gamma_g} also supports this result ($\mathrm{CV}(\gamma)$ is minimized at $\gamma=-0.35$ for $m=2$).

In Figure \ref{fig:g_int_kernel}, we plot the estimated time-varying alpha and its confidence band obtained from the kernel method with $h=\hat{c}T^{-1/3}$. The estimated alpha stays around zero as a whole, but there are troughs at $t=400$ and $t=530$, around which the confidence band excludes the value zero. Figure \ref{fig:g_slope_kernel} shows the estimates for the time-varying beta. It starts with a value of 0.8 and begins to increase soon later. From $t=100$ until the end of the sample, it stays between 1.2 and 1.5. The confidence band does not include the value zero throughout the sample period.

\begin{figure}
    \centering
    \includegraphics[width=0.85\textwidth]{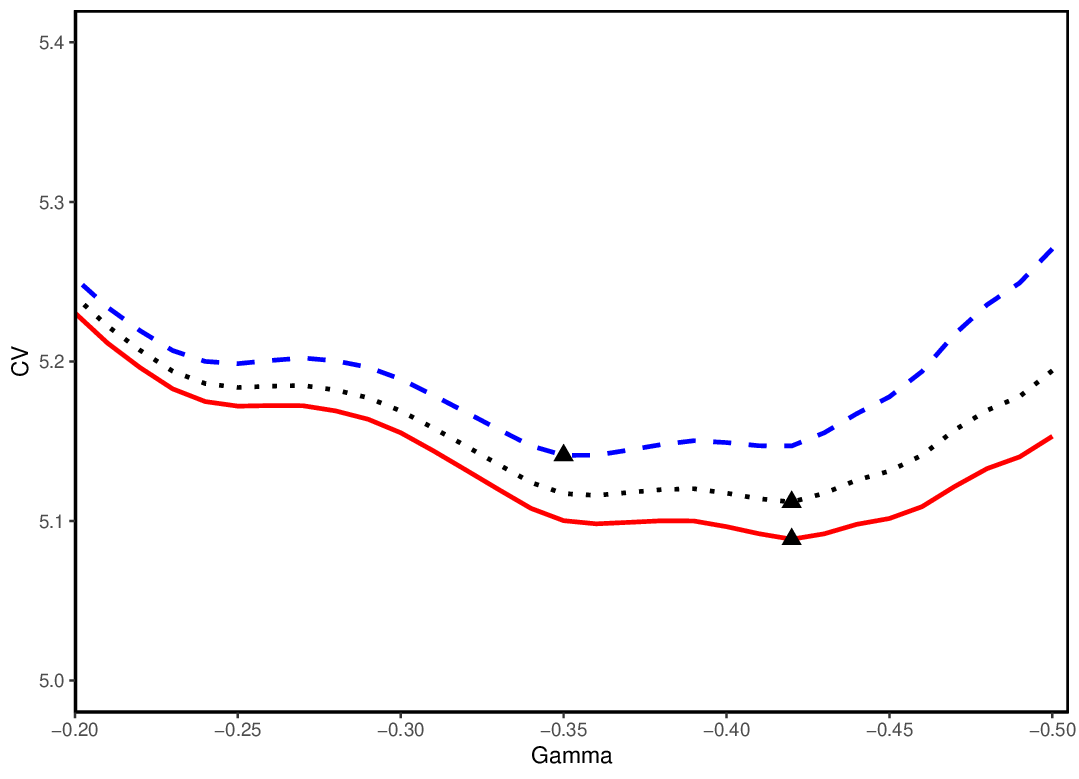}
    \caption{Cross-validation criteria calculated using leave-$(2m+1)$-out estimators with $h=T^{\gamma}$, for G}
    \caption*{\color{red}\full\color{black}: $m=0$, \quad \color{black}\dotted\color{black}: $m=1$, \quad \color{blue}\dashed\color{black}: $m=2$, \quad $\blacktriangle$: Minimum}
    \label{fig:CV_gamma_g}
\end{figure}

\begin{table} \caption{Mean empirical coverage rates of 95\% bootstrap confidence intervals for G}
	\centering
	\begin{threeparttable}
		\renewcommand{\arraystretch}{1.3}	\begin{tabular}{ccc@{\hskip 21pt}c@{\hskip 21pt}c@{\hskip 21pt}c}
			\hline  &  & \multicolumn{4}{c}{$\gamma_2$} \\
             \cline{3-6}& & -0.2 & -0.33 & -0.4 & -0.5 \\
             \hline \multirow{5}{*}{$\gamma_1$}& \multicolumn{1}{l}{-0.2} &0.894	&0.937	&0.936	&0.931 \\ 
             &\multicolumn{1}{l}{-0.33} &-	&0.918&	0.932&	0.930 \\
             &\multicolumn{1}{l}{-0.4} &-	&-	&0.911&	0.927 \\
             & \multicolumn{1}{l}{-0.5} &-	&-	&-	&0.901 \\
             \hline 
		\end{tabular}
		\begin{tablenotes}
			\footnotesize
			\item[]Note: Each entry denotes the mean empirical coverage rate of the 95\% bootstrap confidence intervals for $(\hat{\alpha}_{j,t}(\gamma_1), \hat{\beta}_{j,t}(\gamma_1))$ based on $(\hat{\alpha}_{j,t}^*(\gamma_1,\gamma_2), \hat{\beta}_{j,t}^*(\gamma_1,\gamma_2))$ taken over $t=1,\ldots,T$: $\overline{\mathrm{CR}}(\gamma_1,\gamma_2)= T^{-1}\sum_{t=1}^T\mathrm{CR}_t(\gamma_1,\gamma_2)$.
		\end{tablenotes}
	\end{threeparttable} \label{tab:gamma_selection_cr_g}
\end{table}

\begin{figure}
    \centering
    \begin{subfigure}{0.49\textheight}
        \includegraphics[width=\columnwidth]{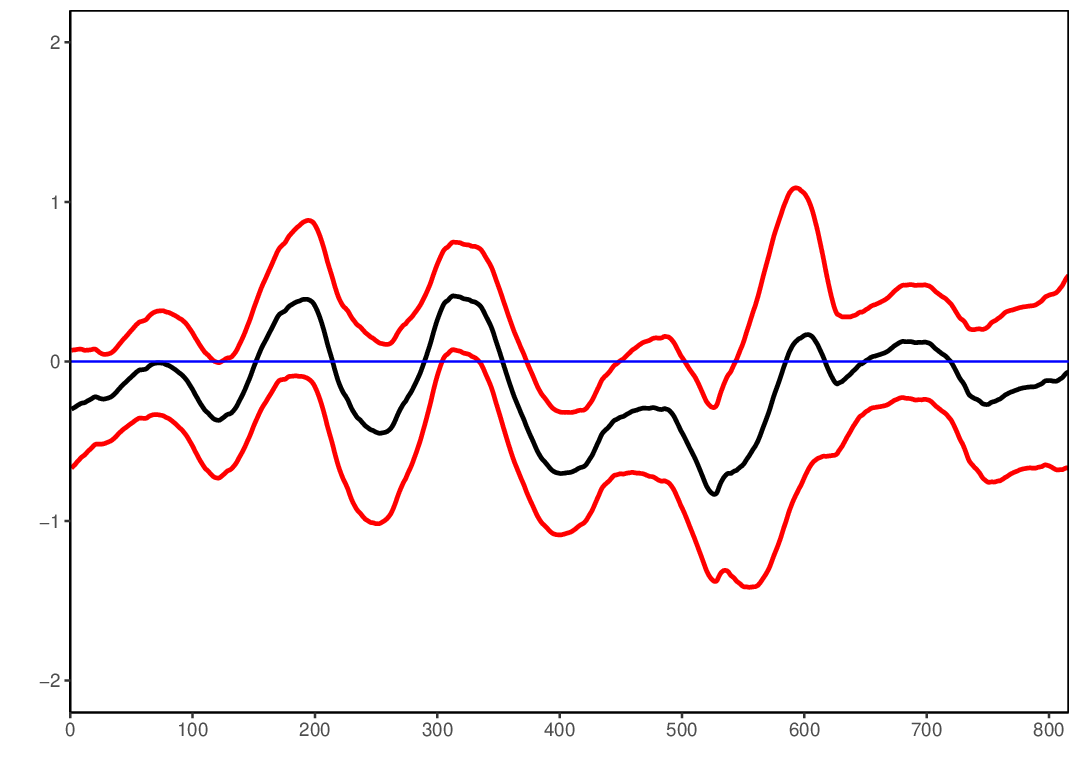}
    \caption{Plot of the time-varying alpha}
	\label{fig:g_int_kernel}
    \end{subfigure}
    \begin{subfigure}{0.49\textheight}
        \includegraphics[width=\columnwidth]{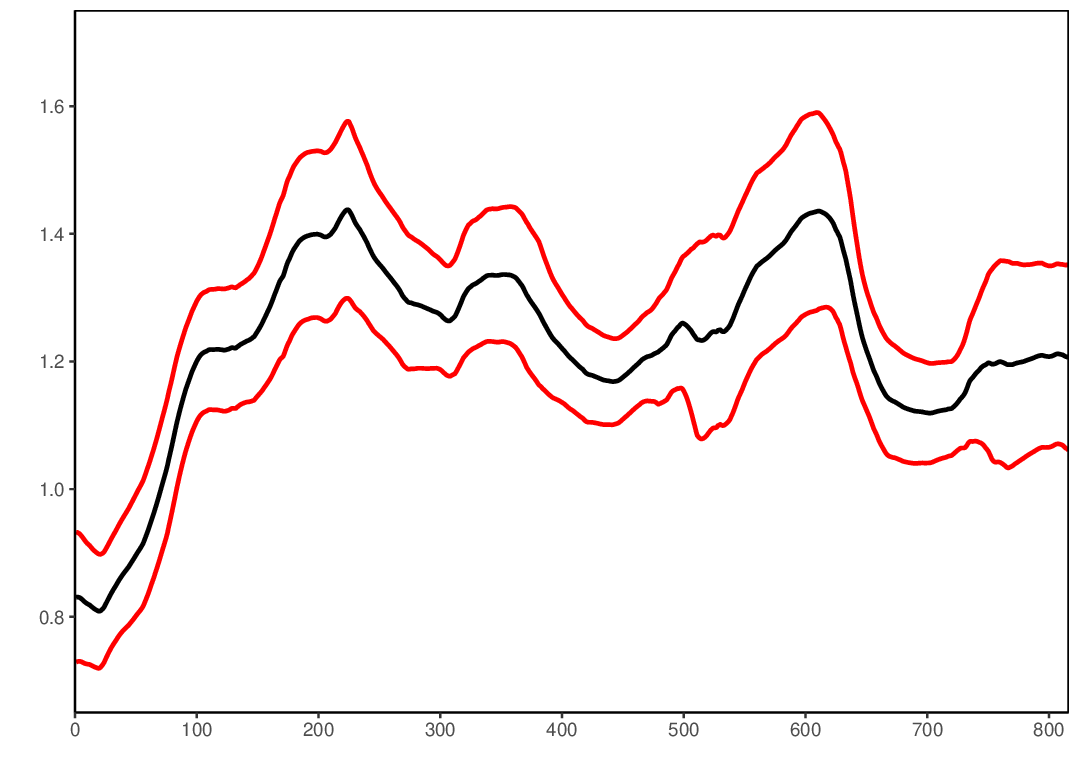}
    \caption{Plot of the time-varying beta}
	\label{fig:g_slope_kernel}
    \end{subfigure}
    \caption{Estimates and 95\% confidence band from the kernel-based method ($h=\hat{c}T^{-1/3}$) for G}
    (The horizontal line in (a) indicates the value zero.)
    \label{fig:g}
\end{figure}

\section{Estimates for V}
According to Table \ref{tab:gamma_selection_cr_v} and Figure \ref{fig:CV_gamma_v}, $\hat{\gamma}=1/3$ is supported ($\mathrm{CV}(\gamma)$ is minimized at $\gamma=-0.3$ for all $m$ considered).

Figure \ref{fig:v_int_kernel} shows the estimated time-varying alpha and its confidence band obtained from the kernel method with $h=\hat{c}T^{-1/3}$. There are several periods when the value zero is excluded from the band and the time-varying alpha exhibits positive effects. Figure \ref{fig:v_slope_kernel} depicts the estimated time-varying beta. From $t=1$ to $t=320$, it stays between 1 and 1.2, and then gradually drops and reaches 0.8 at $t=540$. Then, the time-varying beta starts to increase and, at $t=620$, re-enter the phase where it fluctuates around 1.2.

\begin{figure}
    \centering
    \includegraphics[width=0.85\textwidth]{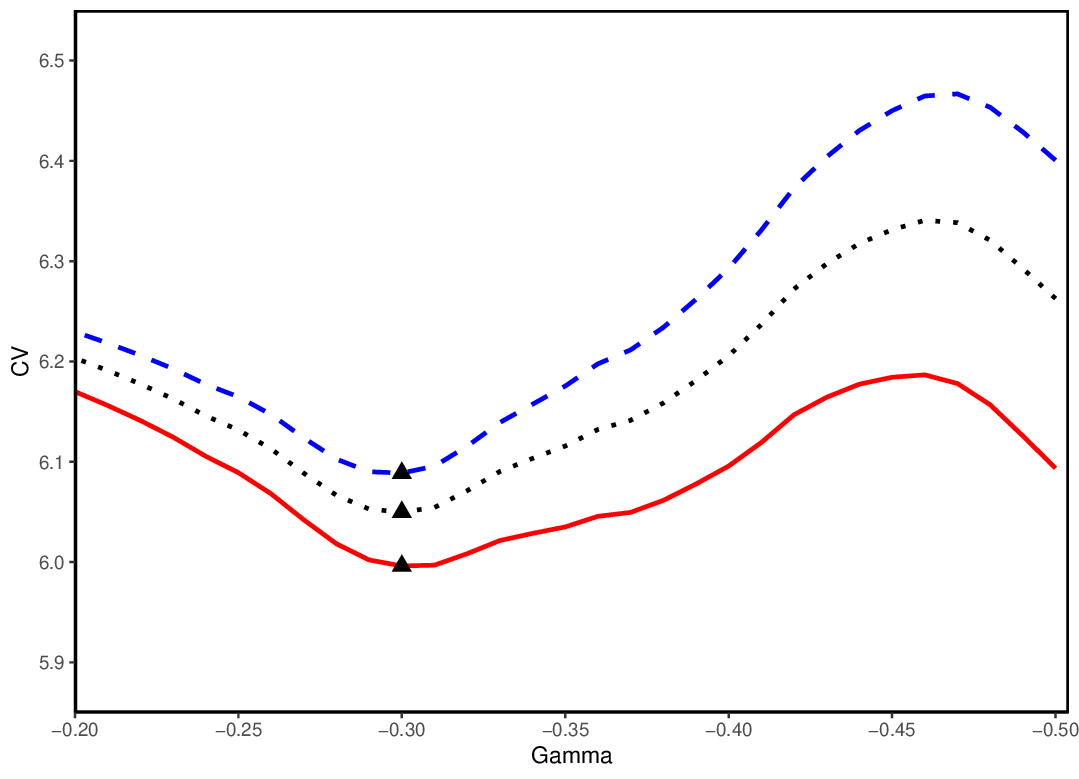}
    \caption{Cross-validation criteria calculated using leave-$(2m+1)$-out estimators with $h=T^{\gamma}$, for V}
    \caption*{\color{red}\full\color{black}: $m=0$, \quad \color{black}\dotted\color{black}: $m=1$, \quad \color{blue}\dashed\color{black}: $m=2$, \quad $\blacktriangle$: Minimum}
    \label{fig:CV_gamma_v}
\end{figure}

\begin{table} \caption{Mean empirical coverage rates of 95\% bootstrap confidence intervals for V}
	\centering
	\begin{threeparttable}
		\renewcommand{\arraystretch}{1.3}	\begin{tabular}{ccc@{\hskip 21pt}c@{\hskip 21pt}c@{\hskip 21pt}c}
			\hline  &  & \multicolumn{4}{c}{$\gamma_2$} \\
             \cline{3-6}& & -0.2 & -0.33 & -0.4 & -0.5 \\
             \hline \multirow{5}{*}{$\gamma_1$}& \multicolumn{1}{l}{-0.2} &0.875	&0.933	&0.934	&0.925 \\ 
             &\multicolumn{1}{l}{-0.33} &-	&0.912	&0.925	&0.917 \\
             &\multicolumn{1}{l}{-0.4} &-	&-	&0.907	&0.914 \\
             & \multicolumn{1}{l}{-0.5} &-	&-	&-	&0.886 \\
             \hline 
		\end{tabular}
		\begin{tablenotes}
			\footnotesize
			\item[]Note: Each entry denotes the mean empirical coverage rate of the 95\% bootstrap confidence intervals for $(\hat{\alpha}_{j,t}(\gamma_1), \hat{\beta}_{j,t}(\gamma_1))$ based on $(\hat{\alpha}_{j,t}^*(\gamma_1,\gamma_2), \hat{\beta}_{j,t}^*(\gamma_1,\gamma_2))$ taken over $t=1,\ldots,T$: $\overline{\mathrm{CR}}(\gamma_1,\gamma_2)= T^{-1}\sum_{t=1}^T\mathrm{CR}_t(\gamma_1,\gamma_2)$.
		\end{tablenotes}
	\end{threeparttable} \label{tab:gamma_selection_cr_v}
\end{table}

\clearpage
\begin{figure}
    \centering
    \begin{subfigure}{0.49\textheight}
        \includegraphics[width=\columnwidth]{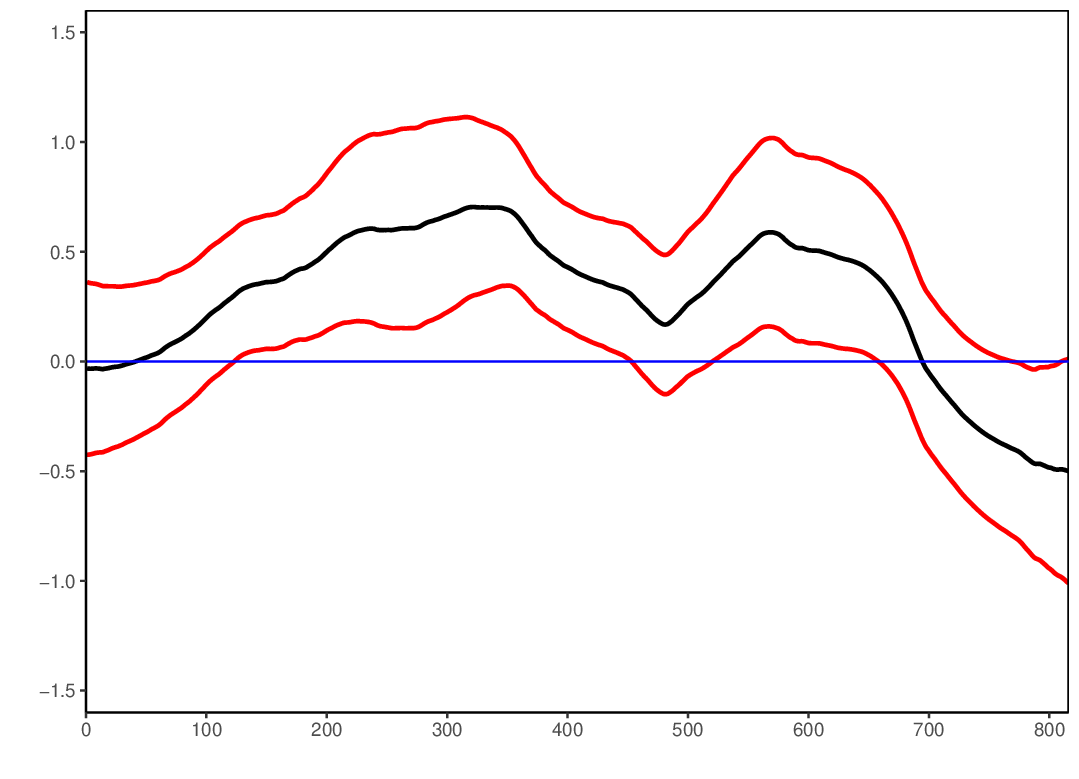}
    \caption{Plot of the time-varying alpha}
	\label{fig:v_int_kernel}
    \end{subfigure}
    \begin{subfigure}{0.49\textheight}
        \includegraphics[width=\columnwidth]{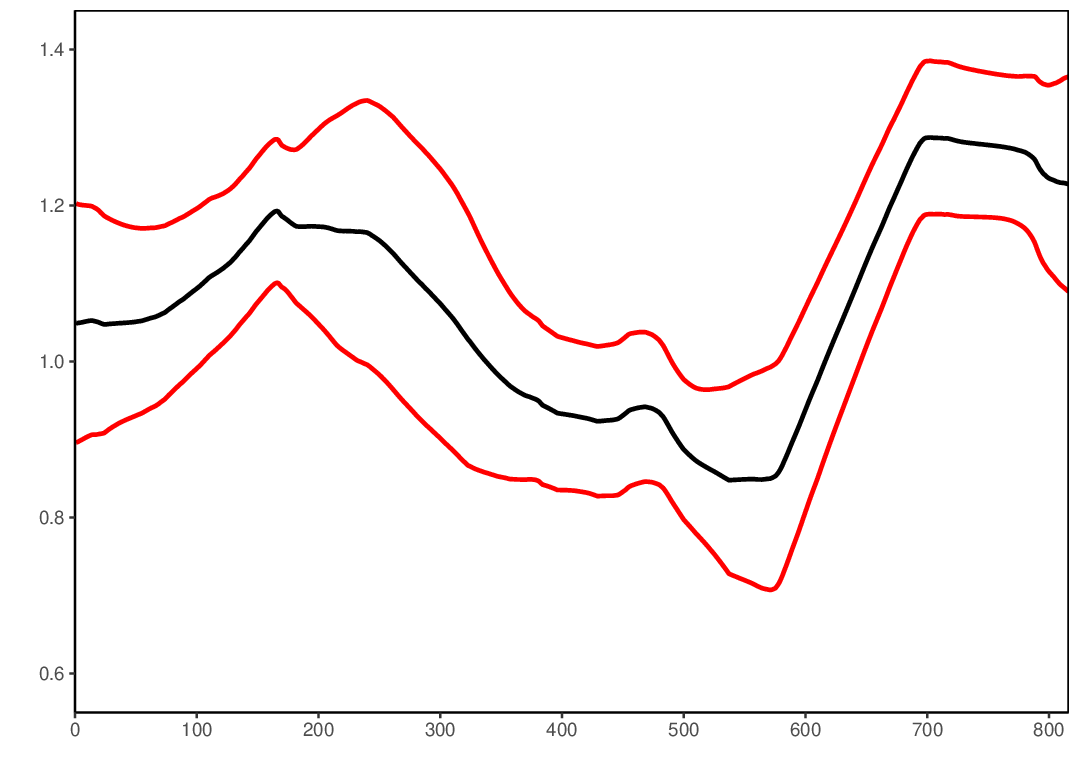}
    \caption{Plot of the time-varying beta}
	\label{fig:v_slope_kernel}
    \end{subfigure}
    \caption{Estimates and 95\% confidence band from the kernel-based method ($h=\hat{c}T^{-1/3}$) for V}
    (The horizontal line in (a) indicates the value zero.)
    \label{fig:v}
\end{figure}

\clearpage

\bibliographystyleapp{standard}
\bibliographyapp{TVP_Linear}

\end{document}